\documentclass[11pt]{article}
\usepackage{fullpage}
\usepackage{amsmath,amsfonts,amsthm,mathrsfs,xspace,graphicx}
\usepackage[backref,colorlinks,citecolor=blue,bookmarks=true]{hyperref}
\usepackage{mathpazo}
\usepackage{enumitem}
\usepackage[utf8]{inputenc}
\usepackage{amssymb}
\usepackage{amsmath}
\usepackage[margin=0.7in]{geometry}
\usepackage{hyperref,color}
\usepackage{graphicx}
\usepackage{float}
\usepackage{comment}
\usepackage{braket}
\usepackage{xparse,etoolbox}
\usepackage{xspace}

\newtheorem{theorem}{Theorem}[section]
\newtheorem{lemma}[theorem]{Lemma}
\newtheorem{proposition}[theorem]{Proposition}
\newtheorem{corollary}[theorem]{Corollary}

\newcounter{claimlevel}[theorem]

\ExplSyntaxOn
\NewDocumentEnvironment{claim}{O{=}}
 {
  \str_case:nn { #1 }
   {
    {=}  { }
    {+}  { \stepcounter{claimlevel} }
    {-}  { \addtocounter{claimlevel}{-1} }
    {--} { \addtocounter{claimlevel}{-2} }
    {---}{ \addtocounter{claimlevel}{-3} }
   }
  \begin{ Claim \int_to_Roman:n { \value{claimlevel} } }
 }
 {
  \end{ Claim \int_to_Roman:n { \value{claimlevel} } }
 }
\ExplSyntaxOff

\theoremstyle{definition}
\newtheorem{definition}[theorem]{Definition}
\newtheorem{protocol}[theorem]{Protocol}
\newtheorem{scenario}[theorem]{Scenario}

\theoremstyle{remark}
\newtheorem{remark}[theorem]{Remark}

\newcommand{\beq}{\begin{eqnarray}}
\newcommand{\eeq}{\end{eqnarray}}

\newcommand{\proj}[1]{\ket{#1}\!\bra{#1}}
\newcommand{\Tr}{\mbox{\rm Tr}}
\newcommand{\Id}{\ensuremath{\mathop{\rm Id}\nolimits}}
\newcommand{\Es}[1]{\mathop{\textsc{E}}_{#1}}

\newcommand{\reg}[1]{{\mathsf{#1}}}

\newcommand{\id}{\mathsf{id}}
\newcommand{\C}{\ensuremath{\mathbb{C}}}
\newcommand{\N}{\ensuremath{\mathbb{N}}}

\newcommand{\Z}{\ensuremath{\mathbb{Z}}}

\newcommand{\mA}{\mathcal{A}}
\newcommand{\mH}{\mathcal{H}}
\newcommand{\mO}{\mathcal{O}}
\newcommand{\mR}{\mathcal{R}}
\newcommand{\mF}{\mathcal{F}}
\newcommand{\mC}{\mathcal{C}}
\newcommand{\mS}{\mathcal{S}}

\newcommand{\mI}{\mathcal{I}}
\newcommand{\mK}{\mathcal{K}}
\newcommand{\mZ}{\mathcal{Z}}

\newcommand{\setft}[1]{\mathrm{#1}}
\newcommand{\Density}{\setft{D}}

\newcommand{\Unitary}{\setft{U}}

\DeclareMathOperator{\poly}{poly}
\DeclareMathOperator{\DIS}{d}
\DeclareMathOperator{\Drho}{\DIS_{\ket{\psi}}}

\newcommand{\eps}{\varepsilon}

\newcommand{\class}[1]{\textup{#1}\xspace} %

\newcommand{\NP}{\class{NP}} %
\newcommand{\QMA}{\class{QMA}} %

\newcommand{\tnote}[1]{}

\newcommand{\znote}[1]{}

\usepackage{subfiles}

\usepackage{subfiles}

\begin{document}

\title{Classical proofs of quantum knowledge}

\author{Thomas Vidick\thanks{Department of Computing and Mathematical Sciences, California Institute of Technology, USA. \texttt{vidick@caltech.edu}} \ \and Tina Zhang\thanks{Division of Physics, Mathematics and Astronomy, California Institute of Technology, USA. \texttt{tinazhang@caltech.edu} }}
\date{}
%\author{}
%\institute{}
\maketitle

\begin{abstract}
We define the notion of a \emph{proof of knowledge} in the setting where the verifier is classical, but the prover is quantum, and where the witness that the prover holds is in general a quantum state. We establish simple properties of our definition, including that, if a \emph{nondestructive} classical proof of quantum knowledge exists for some state, then that state can be cloned by an unbounded adversary, and that, under certain conditions on the parameters in our definition, a proof of knowledge protocol for a hard-to-clone state can be used as a (destructive) quantum money verification protocol. In addition, we provide two examples of protocols (both inspired by private-key classical verification protocols for quantum money schemes) which we can show to be proofs of quantum knowledge under our definition. In so doing, we introduce techniques for the analysis of such protocols which build on results from the literature on nonlocal games. Finally, we show that, under our definition, the verification protocol introduced by Mahadev (FOCS 2018) is a classical \emph{argument} of quantum knowledge for QMA relations.
In all cases, we construct an explicit quantum extractor that is able to produce a quantum witness given black-box quantum (rewinding) access to the prover, the latter of which includes the ability to coherently execute the prover's black-box circuit controlled on a superposition of messages from the verifier. 
\end{abstract}

\section{Introduction}
The notion of a \emph{proof of knowledge} was first considered in the classical setting in~\cite{goldwasser1989knowledge} and subsequently formalized in~\cite{tompa1987random,feige1988zero} and~\cite{bellare1992defining}.\footnote{These three works give inequivalent definitions, but the differences are not important for the purpose of this introduction.} 
Intuitively, a proof of knowledge protocol allows a prover to convince a verifier that it `knows' or `possesses' some piece of secret information (a `witness', $w$) which satisfies a certain relation $R$ relative to a publicly known piece of information $x$. (Symbolically, we might say that the prover wants to convince its verifier that, for a particular $x$, it knows $w$ such that $R(x, w) = 1$.) For example, the witness $w$ might be a private password corresponding to a particular public username $x$, and a proof of knowledge protocol in this setting could allow the prover to demonstrate that it possesses the credentials to access sensitive information or make monetary transactions.

The formal definition of a classical proof of knowledge for $\NP$ relations $R$ was settled in a series of works (see \cite{bellare1992defining} for a summary) in the 1990s. The standard definition is as follows: the prover $P$ is said to `know' a witness $w$ if there is an extractor $E$ which, given black-box access to $P$ (including the ability to rewind $P$ and run it again on different messages from the verifier), can efficiently compute $w$. The applications of classical proofs of knowledge include identification protocols \cite{feige1988zero}, signature schemes \cite{chase2006signatures}, and encryption schemes secure against chosen-ciphertext attack \cite{schnorr2000signed}.

\medskip

In this work, we consider a particular generalisation of the classical concept of a proof of knowledge to the quantum setting. We imagine a situation where the \emph{verifier} remains classical, but the \emph{prover} is quantum, and where the witness $w$ is in general a quantum state; and we ask the prover to `convince' the verifier that it knows that state. We call this type of protocol a \emph{classical proof of quantum knowledge}. Recently, there have been works which show how a fully classical verifier can, under cryptographic assumptions, delegate a quantum computation on encrypted data to a quantum server \cite{mahadev2017classical}, verify that such a server performed the computation correctly \cite{measurement}, delegate the preparation of single-qubit states to the server in a composable fashion \cite{gheorghiu2019computationally}, and test classically that the server prepared an EPR pair in its own registers \cite{metger2020self}. In short, as long as classical computational resources and classical communication channels remain considerably less expensive than their quantum counterparts, it will be natural to wish to use classical devices to test quantum functionality. 
Although we focus here on information-theoretic rather than computational security, the current paper can be considered part of the preceding line of work.

\emph{Quantum} proofs of quantum knowledge (i.e. proof of knowledge protocols for quantum witnesses in which quantum interaction is allowed) have recently been explored by \cite{broadbent2019zero} and \cite{coladangelo2019non}; these two papers give a definition for quantum proofs of quantum knowledge, and exhibit several examples which meaningfully instantiate the definition. Here, we consider the more challenging question of defining and constructing proofs of knowledge for quantum witnesses in which the verifier and the interaction are \emph{classical}. In this setting there is an interesting difficulty involved in constructing an extractor: how does one argue that a quantum prover `knows' a certain quantum state if the only information which the prover `reveals' is classical? A first approach, following the classical definition, would only allow the extractor to access classical transcripts from the protocol. Under such a restriction, the problem the extractor faces becomes one of reconstructing a witness $\rho$ based entirely on classical measurement outcomes. It is not hard to convince ourselves that this problem probably has no solution for any non-trivial class of quantum states, as indeed it may be as hard as quantum state tomography \cite{haah2017sample}. This observation makes it clear that, while it is necessary to give the extractor quantum capabilities (since it has to produce a quantum witness), by itself this does not suffice and some form of quantum interaction with the prover must be allowed. 

\begin{comment}
To give an idea of the difficulty of the problem, information-theoretic bounds have been proven which show that reconstructing a full classical description of a quantum state $\rho$ from measurement outcomes requires (in general) measuring exponentially many copies of $\rho$ \cite{haah2017sample}, even when the extractor is free to choose the measurements which are performed.	
\end{comment}

\medskip

Our first contribution in this paper is to provide an adequate definition of a proof of quantum knowledge for the setting where the communication between verifier and prover is classical. In order to circumvent the difficulty described in the preceding paragraph, we adopt a definition of `black-box access to the prover' which is naturally suited to the quantum setting. Informally speaking, we model the prover as a unitary map $U$ that acts on two quantum registers, one which is private (and which is used for storing its internal state) and one which is public (used for sending and receiving messages). In each round of the real protocol, the verifier places a classical message in the public `message register', and the prover then runs the unitary $U$, before the message register is measured in the computational basis; the measurement result is the message that the prover sends to the verifier for that round. We define `black-box access to the prover' as follows: we allow the extractor to place any quantum state in the public `message register', as well as run the prover's unitary $U$, which acts on both registers, or its inverse $U^\dagger$; we do not allow the extractor to access the prover's private register except through $U$ or $U^\dagger$. We do, however, allow it to place a coherent superposition of messages in the message register, even though the verifier (in a real protocol) would only ever put one classical message there. We make use of this latter possibility in our instantiations of this definition.

This definition matches the definition of `black-box access to a quantum machine' used in previous works \cite{unruh2012quantum}. We emphasise that, even though we consider protocols with purely classical communication, the extractor according to this definition of `black-box access' is allowed to coherently manipulate a unitary implementation of the prover, and the message registers are not necessarily measured after each round of interaction. This possibility was allowed in \cite{unruh2012quantum}, but not used; here we make essential use of it when we construct our extractors.  We note also that this definition of `black-box access' matches the definition given in prior works (e.g.~\cite{watrous2009zero}) of the `black-box access' to a malicious verifier which a zero-knowledge simulator for a post-quantumly zero-knowledge proof of knowledge is allowed to have.

Having formalised what `black-box access to the prover' means in our context, we move to the task of defining a `proof of quantum knowledge' for our setting. We have two main applications in mind for a `proof of quantum knowledge': one of them (proofs of knowledge for QMA witnesses) is natural given the standard formulation of classical proofs of knowledge for NP witnesses, but the other (proofs of knowledge for \emph{quantum money states} \cite{aaronson2012review}) is both natural and unique to the quantum setting. The quantum money application does not fit well into the standard formalism which is used for NP and QMA verification. Therefore, in order to formulate our definition of a `proof of quantum knowledge' generally enough that we can capture both applications, we introduce a broader framework that mirrors frameworks recently introduced for similar purposes in the classical literature. Formally, we base our definition of a `proof of quantum knowledge' on the notion of an `agree-and-prove scheme' introduced recently in~\cite{badertscher2019agree}. The main innovation in this framework is that it allows the instance $x$ and the proof relation $R$ to be determined dynamically through interactions between the prover, the verifier, and possibly a trusted setup entity (such as the provider of a common random string or a random oracle). This framework lends itself remarkably well to our applications. Since we do not need all the possibilities that it allows, we introduce a somewhat simplified version which is sufficient for our purposes; details are given in Section~\ref{sec:quantum-aap}.

\begin{comment}Motivated by applications (which we discuss below), we formulate our definition of a `proof of quantum knowledge' by introducing a broader framework that in particular captures verification tasks in which it is not easy to define success relative to a static NP (or QMA) relation. Formally, we base our definition of a `proof of quantum knowledge' on the notion of an `agree-and-prove scheme' introduced recently in~\cite{badertscher2019agree}. The main innovation in this framework is that it allows the instance $x$ and the proof relation $R$ to be determined dynamically through interactions between the prover, the verifier, and possibly a trusted setup entity (such as the provider of a common random string or a random oracle). This framework lends itself remarkably well to our applications. Since we do not need all the possibilities that it allows, we introduce a slightly modified and somewhat simplified version which is sufficient for our purposes; details are given in Section~\ref{sec:quantum-aap}.
\end{comment}

\medskip

In Section \ref{sec:properties} we show two elementary but potentially interesting properties of our definition of a `proof of quantum knowledge'. The first property is that, if a classical proof of quantum knowledge leaves the witness state intact, then the witness state can be cloned by an unbounded adversary. This is a simple no-go result which precludes certain types of proofs of quantum knowledge in the scenarios which we consider. The second property is that, under certain conditions on the parameters in the definition, a proof of knowledge protocol for a hard-to-clone witness state can also be used as a quantum money verification protocol. This result formalises the intuition that the property of being a `proof of quantum knowledge' is stronger than the property of being a quantum money verification protocol: the latter implies that no adversary can pass verification twice given access to only one money bill, and the former formalises the notion that no adversary can pass even once unless it is possible to efficiently compute the money bill by interacting with said adversary.

\medskip

Our second main contribution is to provide several examples of protocols which can be shown to be proofs of knowledge under our definition, and in so doing introduce some techniques that may possibly find use in the analysis of such protocols. As we have mentioned, instantiating a secure quantum money scheme is a natural application for a proof of quantum knowledge protocol. Conversely, quantum money verification protocols are natural candidates for examples of proofs of quantum knowledge: in a quantum money protocol, there is a prover who holds a purported money state, and who wishes to demonstrate to the verifier (who might be the bank or an independent citizen) that it does indeed `hold' or `possess' the quantum money state. The first person to describe quantum money was Wiesner \cite{wiesner1983conjugate}, who proposed money states that are tensor products of $n$ qubits, each qubit of which is chosen uniformly at random from the set $\{\ket{0}, \ket{1}, \ket{+}, \ket{-}\}$. Wiesner's states can be described classically by $2n$ classical bits, and in a quantum money scheme this classical description is kept secret by the bank; a typical classical description is the pair of strings $(x, \theta)$, where the money state can be described (denoting by $H_i$ a Hadamard gate on the $i$th qubit of the state) as $\ket{\$}_{x, \theta} = \prod_i H_i^{\theta_i} \ket{x}$. We choose to analyse as our first example of a proof of knowledge a \emph{private-key}, destructive classical money verification protocol between a prover and the bank for Wiesner's quantum money states which has been described previously in \cite{molina2012optimal}. The protocol is  as follows: the verifier issues a uniformly random challenge string $c$ to the prover, which encodes the bases (standard or Hadamard) in which the prover should measure the money state; the prover measures the $i$th qubit of the state in the standard basis if $c_i=0$, or in the Hadamard basis if $c_i=1$, and sends all the measurement outcomes as a string $m$ to the verifier; and the verifier checks that, whenever $c_i = \theta_i$, $m_i = x_i$. The property which makes this protocol and these states interesting is that no prover who is given only one copy of the money state can pass verification twice.

Perhaps surprisingly, showing even that this simple protocol is a proof of knowledge according to our definition turns out to be a non-trivial task. We may examine the following illustration of the difficulty. Consider, firstly, the following na\"ive approach to designing an extractor for the protocol described in the preceding paragraph. Recall that, according to our model of `black-box access', the prover can be considered a unitary process; we denote by $U_c$ the unitary that the prover applies to its private register and the message register in response to challenge $c$. The extractor could pick a challenge $c$ at random, apply $U_c$, and then attempt to apply some unitary to the message register to `correct' for the challenge bases in order to recover the original money state. (For example, if $n=4$, and $c=0110$, the extractor could apply the unitary $U_{0110}$, and then apply $H_2 H_3$ to the message register in hopes of recovering the original money state. This strategy would work on the honest prover, who simply measures the real money state in the bases indicated by $c$ in order to obtain its message to the verifier; we may imagine that few meaningful deviations from this pattern are possible.) However, the prover (upon receipt of the challenge) may take its honest money state and decide to apply Pauli $X$ (bit-flip) gates to some arbitrary subset of the qubits of the money state which it was told to measure in the Hadamard basis, and Pauli $Z$ (phase-flip) gates to a subset of the qubits which it was told to measure in the standard basis. If the prover now measures the result in the bases indicated by $c$, it will pass with probability 1---but the state that it measures in the $c$ basis in this scenario is almost certainly not the correct money state. (The exception is when $c = \theta$.)

A little thought will show that this is a fairly general obstacle to the extractor's constructing the money state from the state residing in the prover's message registers immediately before it performs the measurement whose outcomes it will send to the verifier. Since we know very little about what the prover might be doing to the money state at any other stage in its execution, meanwhile, it is difficult to reason about finding the money state in the prover's registers at other points in its operation. This simple argument shows that, in order to design an effective extractor, it is crucial to consider the prover's responses to all challenges $c$ at once---the question, of course, is one of how.

Our way of overcoming these difficulties builds on results from the literature on nonlocal games. The key idea of our security proof for the Wiesner money verification protocol is as follows. Let the party which chooses and prepares the money state $\ket{\$}_{x,\theta} = \prod_i H_i^{\theta_i} \ket{x}$ that the prover receives be known as Alice, and let the prover be known as Bob. Consider the following thought experiment: instead of preparing $\ket{\$}_{x, \theta}$, Alice could prepare $n$ EPR pairs and send half of each one to Bob.
Let $E(\theta) = \{ \proj{\$}_{x, \theta} \: | \: x \in \{0,1\}^{n} \}$ be a POVM. Then, if Alice measures $E(\theta)$ on her side of the state, and obtains the outcome $x$, Alice's and Bob's joint state will collapse to two copies of $\ket{\$}_{x, \theta}$. Note that, from Bob's perspective, the protocol is the same regardless of whether Alice sent EPR pairs and then measured $E(\theta)$, or whether she chose $x$ and $\theta$ uniformly at random and sent him $\ket{\$}_{x, \theta}$ to begin with. However, if Bob succeeds with high probability in the money verification protocol, then he also succeeds with high probability at recovering a subset of the string $x$ which represents Alice's measurement outcomes after she measures the POVM $E(\theta)$, and which also forms part of the classical description of the money state $\ket{\$}_{x, \theta}$. This observation makes it possible to apply a theorem from \cite{nv16} which states that, if two noncommunicating parties exhibit correlations like those which Alice and Bob exhibit in this thought experiment, then they must once have shared EPR pairs, up to local isometry. Since Alice is honest and did nothing to her shares of the EPR pairs, the local isometry on her side is the identity map. Then, in order to recover the original money state, the proof-of-knowledge extractor simply has to execute the correct isometry on Bob's side. This isometry can be implemented efficiently using only black-box access to the prover; this step, however, crucially makes use of the fact that the extractor can implement controlled versions of the prover's unitaries on a superposition of messages of its choice. A detailed analysis is given in section \ref{sec:wiesner}.

Although the efficacy of this technique for showing that a protocol is a proof of knowledge depends strongly on the structure of the Wiesner verification protocol, we are also able to apply it to one other example. Wiesner states were the earliest and are the best-known kind of quantum money states, but there are other kinds, and one sort which has received some recent attention is the class of \emph{subspace states} introduced in a quantum money context by \cite{aaronson2012quantum}. Subspace states are states of the form $\frac{1}{\sqrt{|A|}} \sum_{x \in A} \ket{x}$ for some $n/2$-dimensional subspace $A \in \mathbb{Z}_2^n$, and they have similar no-cloning properties to those of Wiesner states; they are also of additional interest because they have been used in several schemes which make steps toward the goal of public-key quantum money \cite{aaronson2012quantum}, \cite{zhandry2019quantum}, and in constructions of other quantum-cryptographic primitives such as quantum signing tokens \cite{ben2016quantum}. We were not able to find a simple classical verification protocol for subspace states that we could show to be a proof of quantum knowledge. Nonetheless, in Section~\ref{sec:subspaces}, we propose a classical verification protocol for what we call \emph{one-time-padded subspace states} (that is, subspace states which have had random Pauli one-time-pads applied to them by the bank), and we are able to show under our new definition, using similar techniques to those which we applied to Wiesner states, that this simple verification protocol is a proof of knowledge for one-time-padded subspace states. This verification protocol is remarkable for having a challenge from the verifier that is only one bit long.

\medskip

Our final contribution is to show that, under our definition, a classical \emph{argument} of quantum knowledge exists for any relation in the class QMA.\footnote{Argument systems differ from proof systems only in that the honest prover must be efficient, and that soundness is required to hold only against efficient provers. In this case, `efficient' means quantum polynomial-time.} The notion of a \emph{QMA relation} was formalised jointly by \cite{broadbent2019zero} and \cite{coladangelo2019non}, as a quantum analogue to the idea of an \emph{NP relation} which was described in the first paragraphs of this introduction. \cite{broadbent2019zero} and \cite{coladangelo2019non} show that any QMA relation has a \emph{quantum} proof of quantum knowledge. The protocol that we show to be a \emph{classical} argument of quantum knowledge for QMA relations, meanwhile, is the classical verification protocol introduced recently by \cite{measurement}. Mahadev~\cite{measurement} shows, under cryptographic assumptions, that quantum properties (in her case, any language in BQP) can be decided by a classical polynomial-time verifier through classical interaction alone with a quantum polynomial-time prover. We note that the proofs of the main results in~\cite{measurement} include statements which can be used to make the verification protocol which~\cite{measurement} introduces into a classical argument of quantum knowledge in the sense in which we have defined the latter. The main work that needs to be done in order to show this is to establish that the quantum witness, which as shown in~\cite{measurement} always \emph{exists} for the case of a successful prover, can be \emph{extracted} from the prover in a black-box manner. While all the required technical components for establishing this are already present in~\cite{measurement}, we make the statement explicit. (In comparison, our proofs that specific quantum money schemes satisfy our definition of a proof of quantum knowledge do not use any cryptographic assumptions, and the protocols which we consider are very simple compared with the \cite{measurement} protocol.) The \cite{measurement} verification protocol can be shown to be an argument of quantum knowledge for any QMA relation; the only caveat, which was also a caveat for the \emph{quantum} proofs of quantum knowledge for QMA exhibited by \cite{broadbent2019zero} and \cite{coladangelo2019non}, is that an honest prover in the protocol may require multiple copies of a witness in order that the extractor can succeed in extracting \emph{one} copy. We refer the reader to Section~\ref{sec:qma-arguments} for details.

\paragraph{Related and further work.} Unruh~\cite{unruh2012quantum} was one of the first to consider the notion of a proof of knowledge in the quantum setting. In his work, as in the classical literature, a `proof of knowledge' is a classical protocol which aims to establish that a prover `knows' a \emph{classical} string $w$ satisfying an NP relation $R(x, \cdot)$. Unruh's work was novel because it was the first to consider the possibility that an adversarial prover may have quantum capabilities. This makes the design of an extractor more difficult, but Unruh shows that, under specific conditions, a classical proof of knowledge is automatically also sound against quantum adversaries. We use a similar formalism to Unruh's to capture the notion of `black-box access to a quantum prover'.

As we have already mentioned, the notion of a proof of knowledge for a \emph{quantum} relation was introduced recently in~\cite{broadbent2019zero} and~\cite{coladangelo2019non}. In these papers, the authors give a natural extension of the classical definition of a proof of knowledge to proofs of knowledge for \emph{quantum} states $\rho$ that `satisfy' a QMA relation $Q$. (The statement `$\rho$ satisfies $Q$' here means that $Q(\rho,x)\geq \alpha$ for some parameter $\alpha$, where $Q$ is a quantum circuit, $x$ is a classical problem instance, and $Q(\rho,x)$ is the probability that $Q$ accepts witness $\rho$ on input $x$.) These authors also show, building on earlier work by Broadbent et al.~\cite{qma}, that every QMA relation has a (quantum) proof of quantum knowledge. The main protocol from either paper can be made non-interactive (assuming the appropriate trusted setup), but all the protocols considered by these papers necessarily involve the exchange of quantum information between the verifier and the prover. 

It might be possible to show that other constructions for quantum money are also proofs of quantum knowledge; the construction based on the hidden matching problem by Gavinsky~\cite{gavinsky2012quantum} is one such candidate. As we mentioned in the introduction, however, the efficacy of the technique which we apply in sections \ref{sec:wiesner} and \ref{sec:subspaces} depends strongly on the structure of the verification protocols analysed in those sections, and we consider it an interesting problem to find other techniques for showing that protocols are proofs of quantum knowledge. If such techniques can be found, they might perhaps might be used to show how other quantum states in the space between quantum money states and witnesses for arbitrary QMA relations admit natural proofs of quantum knowledge, with or without cryptographic assumptions. Looking toward applications, meanwhile, a natural candidate application for (non-interactive) proofs of quantum knowledge would be turning CPA-secure encryption schemes for quantum data into CCA-secure schemes. (Quantum CCA-secure schemes have already been constructed directly in~\cite{alagic2016computational}.)

It is also natural to consider \emph{zero-knowledge} proofs of quantum knowledge. In our two examples in Section~\ref{sec:money-poqk}, the verifier is provided with secret classical information which completely specifies the state that the prover holds, so the notion of a zero-knowledge proof in this context is meaningless. However, for general QMA relations, and some applications such as quantum identification, the idea of safeguarding the witness state against the verifier becomes more relevant. In prior work~\cite{vidick2019classical} we showed that the protocol introduced in~\cite{measurement} can be made zero-knowledge. Since we show in the current work that~\cite{measurement} is an argument of quantum knowledge for any QMA relation, we believe zero-knowledge classical arguments of quantum knowledge can also be constructed for any QMA relation.  

Classical \emph{non-interactive} arguments for languages in $\QMA$ in the quantum random oracle model were recently introduced in~\cite{chia2019classical,alagic2019non}. Both papers prove their main result by applying the Fiat-Shamir heuristic to Mahadev's protocol. We leave it as an interesting open question whether their results can be adapted to show that their protocols are also proofs of quantum knowledge.

\paragraph{Organisation.} The organisation of this paper is as follows. In section \ref{sec:prelims}, we introduce some preliminary concepts. In section \ref{sec:quantum-aap}, we give our definitions of proofs of quantum knowledge, along with some intuition for our choices. In section \ref{sec:properties}, we prove some elementary properties of the definitions in section \ref{sec:quantum-aap}. In sections \ref{sec:wiesner} and \ref{sec:subspaces}, we give proofs that a classical private-key quantum money verification protocol for Wiesner money states and a classical private-key verification protocol for one-time-padded subspace states, respectively, are proofs of quantum knowledge. Finally, in section \ref{sec:qma-arguments}, we show that any QMA relation has a classical argument of quantum knowledge.

\paragraph{Acknowledgements.} We thank Alexandru Gheorghiu for useful feedback and Or Sattath for comments. Thomas Vidick is supported by NSF CAREER Grant CCF-1553477, AFOSR YIP award number FA9550-16-1-0495, MURI Grant FA9550-18-1-0161 and the IQIM, an NSF Physics Frontiers Center (NSF Grant PHY-1125565) with support of the Gordon and Betty Moore Foundation (GBMF-12500028). This material is based upon work supported by DARPA under Agreement No. HR00112020023. Any opinions, findings and conclusions or recommendations expressed in this material are those of the author(s) and do not necessarily reflect the views of the United States Government or DARPA.

\section{Preliminaries}
\label{sec:prelims}
\subsection{Terminology and notation}
\label{sec:notation}

\paragraph{Running time and asymptotic behaviour} For definitions related to quantum circuits and basic quantum complexity classes such as BQP and QMA we refer to~\cite{watrous2009encyclopedia}. 
We use `PPT' and `QPT' as shorthand for `probabilistic polynomial time' and `quantum polynomial time' respectively. A probabilistic (resp. quantum) polynomial-time procedure is a polynomial-time uniformly generated family of circuits that have a designated `random' input (resp. quantum circuits).
We use the notation $\mathsf{negl}(\lambda)$ to denote any negligible function of $\lambda\in \mathbb{N}$, i.e.\ a function $f$ such that for any polynomial $p$, $p(\lambda)f(\lambda)\to_{\lambda\to\infty} 0$. 

\paragraph{Hilbert spaces and quantum states} We use $\mH$ to denote a finite-dimensional Hilbert space. We sometimes index $\mH$ with a ``register'' label $\reg{A}$, $\reg{B}$ as $\mH_\reg{A}$, $\mH_\reg{B}$ to differentiate Hilbert spaces associated with distinct subsystems. 
A \emph{pure quantum state} is a unit vector $\ket{\psi} \in \mH$.
A \emph{mixed quantum state} is a positive semidefinite operator $\rho$ on $\mH$ with trace $1$. We use $\Density(\mH)$ to denote the collection of all mixed states on $\mH$; they are sometimes also called \emph{density matrices}. Given spaces $\mH_\reg{A}$ and $\mH_{\reg{B}}$ and $\rho \in \Density(\mH_\reg{A} \otimes \mH_\reg{B})$ we sometimes write $\rho_{\reg{AB}}$ for $\rho$ to clarify explicitly what space $\rho$ is a density matrix on, and write e.g. $\rho_\reg{A} = \Tr_\reg{B}(\rho_{\reg{AB}})$ for the partial trace over $\mH_\reg{B}$. Given a privileged basis $\{\ket{i}\}$ for $\mH_\reg{A}$, a \emph{CQ state}, for ``Classical-Quantum'', on $\mH_\reg{A} \otimes \mH_\reg{B}$ is an element $\rho \in \Density(\mH_\reg{A} \otimes \mH_\reg{B})$ of the form $\rho = \sum_i p_i \proj{i} \otimes \rho_i$ for $\{p_i\}_i$ a probability distribution and $\{\rho_i\}_i$ a collection of density matrices on $\mH_\reg{B}$. 

\paragraph{Distances and norms} $|u|_H$ denotes the Hamming weight of a string $u\in\{0,1\}^n$, $\| u \|$ denotes the Euclidean norm of a vector $u\in \mathbb{C}^n$, and $\|u\|_1 = \sum_i |u_i|$ denotes its 1-norm. For a matrix $A \in \C^{n\times d}$, $\|A\|_1 = \Tr\sqrt{A^\dagger A}$, where $A^\dagger$ is the conjugate-transpose, denotes the Schatten $1$-norm (sum of singular values), and $\frac{1}{2}\|\rho - \rho'\|_1$ is the trace distance between two density matrices $\rho$ and $\rho'$. We use the notation $\Drho(A,B)$ to denote the distance (pseudo)metric $\|(A-B) \ket{\psi}\|^2$ between two operators $A$ and $B$ with respect to a specific state $\ket{\psi}$.

\paragraph{Miscellaneous} We use $\circ$ to denote composition: for example, if $F$ and $G$ are two circuits, $F \circ G(x)$ denotes firstly running $G$ on $x$, and then running $F$ on the output of $G$ on $x$. We use the notation $a \equiv b$ to mean `a is defined to be equal to $b$'.

\subsection{Black-box quantum provers}
\label{sec:black-box}

In cryptography the notion of what a machine `knows' is generally formalized through an efficient procedure called the `extractor', which is a procedure that interacts with the machine whose knowledge one aims to establish and eventually succeeds in `extracting' said `knowledge' from it. 
When we define the knowledge of a (classical) `prover' in an interactive protocol, there are two generally accepted models for specifying how the extractor may interact with the prover. 
In the \emph{black-box} model, the extractor is allowed to execute the prover's `next-message' function at unit cost. The prover's `next-message' function takes as input the prover's internal state and the message it receives in any given round, as well (possibly) as some classical randomness, and it returns the prover's new internal state and prover's response to the message it received. The extractor cannot directly access the internal state of the prover nor, generally, its random input: at the beginning of the interaction between prover and extractor, the prover's private register is initialised with some state that is unknown to the extractor, and thereafter the only control the extractor has over this private state is through its power to run the next-message function. The extractor can, however, decide the message which is given as input to the next-message function when it is executed. It is also allowed to `rewind' the prover, i.e.\ it can run the next-message function on some message, obtain the prover's answer, and then backtrack and run it on some other message. 
In the \emph{white-box} model, meanwhile, the extractor is given an explicit circuit which computes the prover's next-message function, and may use this circuit in an arbitrary way to compute a witness.

In a setting where the prover may be quantum the situation is a little more complicated. In this case, it is not reasonable to give the extractor the power to query the prover's next-message function on any message history and then rewind: firstly, the prover's next-message function is inherently probabilistic, since the prover's messages might be the result of quantum measurements; and, secondly, the measurements that produce the prover's messages may irreversibly alter the prover's state, so rewinding may not be possible in a straightforward sense. To circumvent these difficulties, the black-box model for extraction from a quantum prover models the prover as an `interactive machine', which (informally) is specified by a unitary map $U$ acting on a private `internal register' as well as a public `network register' (or `message register') that is used for communication. The unitary $U$ is a stand-in for the `next-message function' from the classical setting. An extractor with black-box access to a quantum prover is allowed to execute the prover's unitary $U$ or its inverse $U^\dagger$ coherently, and although it cannot modify the prover's internal register except by applying $U$ or $U^\dagger$, it is allowed to put any state in the message register, including a state that is a quantum superposition of messages --- even though in the actual execution of any of the protocols considered in this paper we always assume that messages are measured by the classical verifier in the standard basis. As we argued in the introduction, it is necessary to give this power to the extractor in our context (see e.g.~\cite{badertscher2020security} for impossibility results in related settings).

Formally, we use a similar framework to that which is described in~\cite[Section 2.1]{unruh2012quantum} in order to capture black-box access to quantum provers. 
The following definitions of \textit{interactive quantum machines} and \textit{oracle access} to an interactive quantum machine are taken (with some modifications) from \cite{unruh2012quantum}; a similar formulation of these definitions of \cite{unruh2012quantum} appears in \cite{coladangelo2019non}. The modifications which we introduce are primarily for convenience in dealing with the situation where the verifier is known to be classical, instead of (potentially) quantum as it is in \cite{unruh2012quantum}. As we mentioned earlier, even though the possibility is not used in~\cite{unruh2012quantum}, the framework presented there explicitly allows the extractor to coherently implement controlled versions of the prover's unitaries on a superposition of messages from the verifier. 

\paragraph{Interactive quantum machines} An \textit{interactive quantum machine} is a machine $M$ with two quantum registers: a register $\textsf{S}$ for its internal state, and a register $\textsf{N}$ for sending and receiving messages (the network register). Upon activation, $M$ expects in $\textsf{N}$ a message, and in $\textsf{S}$ the state at the end of the previous activation. At the end of the current activation, $\textsf{N}$ contains the outgoing message of $M$, and $\textsf{S}$ contains the new internal state of $M$. A machine $M$ gets as input: a security parameter $\lambda \in \mathbb{N}$, a classical input $x \in \{0,1\}^*$, and a quantum input $\ket{\Phi}$, which is stored in $\textsf{S}$. Formally, machine $M$ is specified by a family of unitary circuits $\{M_{\lambda x}\}_{\lambda \in \mathbb{N}, x \in \{0,1\}^*}$ and a family of integers $\{r^M_{\lambda x}\}_{\lambda \in \mathbb{N}, x \in \{0,1\}^*}$. $M_{\lambda x}$ is the quantum circuit that $M$ performs on the registers $\textsf{S}$ and $\textsf{N}$ upon invocation. $r_{\lambda x}$ determines the total number of messages/invocations. We might omit writing the security parameter and/or the classical input $x$ when they are clear from the context. We say that $M$ is \textit{quantum-polynomial-time} (QPT for short) if the circuit $M_{\lambda x}$ has  size polynomial in $\lambda + |x|$, the description of the circuit is computable in deterministic polynomial time in $\lambda + |x|$ given $\lambda$ and $x$, and $r_{\lambda, x}$ is polynomially bounded in $\lambda$ and $x$.

\paragraph{Oracle access to an interactive quantum machine} We say that a quantum algorithm $A$ has oracle access to an interactive quantum machine $M$ (with internal register $\reg{S}$ and network register $\reg{N}$) \emph{running on $\ket{\Phi}$} to mean the following. We initialise $\reg{S}$ to $\ket{\Phi}$ and $\reg{N}$ to $\ket{0}$, we give $A$ the security parameter $\lambda$ and its own classical input $x$, and we allow $A$ to execute the quantum circuit $M_{\lambda x'}$ (for any $x'$) specifying $M$, and its inverse (recall that these act on the internal register $\textsf{S}$ and on the network register $\textsf{N}$ of $M$). Moreover, we allow $A$ to provide and read messages from $M$ (formally, we allow $A$ to act freely on the network register $\textsf{N}$). We do not allow $A$ to act on the internal register $\textsf{S}$ of $M$, except via $M_{\lambda x'}$ or its inverse.

\paragraph{Interactive classical machines} An \textit{interactive classical machine} is a machine $C$ with two classical registers: a register $\textsf{T}$ for its internal state, and a register $\textsf{N}$ for sending and receiving messages (the network register). Upon activation, $C$ expects in $\textsf{N}$ a message, and in $\textsf{T}$ the state at the end of the previous activation. At the end of the current activation, $\textsf{N}$ contains the outgoing message of $C$, and $\textsf{T}$ contains the new internal state of $M$. A machine $C$ gets as input: a security parameter $\lambda \in \mathbb{N}$, a classical input $x \in \{0,1\}^*$, a random input $u \in \{0,1\}^{p(\lambda + |x|)}$ for some function $p \in \mathbb{N}$, and a classical auxiliary input $t \in \{0,1\}^{|\mathsf{T}|}$, which is stored in $\reg{T}$. Formally, machine $C$ is specified by a function $p \in \mathbb{N}$, a family of classical circuits $\{C_{\lambda x u}\}_{\lambda \in \mathbb{N}, x \in \{0,1\}^*, u \in \{0,1\}^{p(\lambda + |x|)}}$ and a family of integers $\{r^C_{\lambda x}\}_{\lambda \in \mathbb{N}, x \in \{0,1\}^*}$. $C_{\lambda x u}$ is the classical circuit that $C$ performs on the registers $\textsf{T}$ and $\textsf{N}$ upon invocation. Without loss of generality, for convenience's sake, we assume that $C_{\lambda x u}$ is reversible. $r^C_{\lambda x}$ determines the total number of messages/invocations. We might omit writing the security parameter and/or the input when they are clear from the context. We say that $C$ is \textit{probabilistic-polynomial-time} (PPT for short) if $p$ is a polynomial, the circuit $C_{\lambda x u}$ has  size polynomial in $\lambda + |x|$, the description of the circuit is computable in deterministic polynomial time in $\lambda + |x|$ given $\lambda$, $x$ and $u$, and $r^C_{\lambda x}$ is polynomially bounded in $\lambda$ and $x$.

\paragraph{Oracle access to an interactive classical machine} We say that a quantum algorithm $A$ has oracle access to an interactive classical machine $C$ \emph{running on string $t$} to mean the following. We initialise $C$'s internal register $\reg{T}$ to $t$ and the network register $\reg{N}$ to the all-zero string. We give $A$ the security parameter and its own classical input $x$. Each time $A$ wishes to run $C$ (or its inverse), it must submit an input $x'$ on which to run $C$ (or its inverse). Upon receiving $A$'s choice of $x'$, we choose $u$ uniformly at random, and then we run the classical circuit $C_{\lambda x' u}$ (or its inverse); recall that these act on the internal register $\textsf{T}$ and on the network register $\textsf{N}$ of $C$. Moreover, we allow $A$ to provide and read messages from $C$ (formally, we allow $A$ to act freely on the network register $\textsf{N}$). We do not allow $A$ to act on the internal register $\textsf{T}$ of $C$, except via $C_{\lambda x' u}$ or its inverse.

%\begin{remark}
%In this definition of oracle access to an interactive classical machine, we have allowed $A^C$ to access (and write in) the random tape of $C$. Although this is often not allowed in the extraction setting (in which $C$ is a classical prover and $A$ is an extractor), we include it in our definition  because, for our purposes, the classical machine to which oracle access is granted is always a verifier, not a prover. We remark that access to the verifier's random tape \emph{is} generally allowed in zero-knowledge simulation; see, for example, \cite{barak2001go}.
%\end{remark}

\begin{definition}
We use the terminology \emph{interactive Turing machine} (ITM) to refer to either an interactive classical machine or an interactive quantum machine. If the ITM is bounded-time, we may refer to a PPT ITM or a QPT ITM to clarify which model is used. An interactive oracle machine is an ITM that in addition has query access to an oracle. 
\end{definition}

\paragraph{Interaction between an interactive quantum machine and an interactive classical machine}
Let $M = ( \{ M_{\lambda x} \}, \{ r^M_{\lambda x} \} )$ be an interactive quantum machine with internal register $\textsf{S}$ and network register $\textsf{N}$. Let $C = ( \{ p, C_{\lambda x' u} \}, \{ r^C_{\lambda x'} \} )$ be an interactive classical machine with internal register $\textsf{T}$ and network register $\textsf{N}$. For a given CQ state $\rho_{\reg{TS}}\in \Density(\mH_\reg{T}\otimes \mH_\reg{S})$, we define the interaction $\left (C(x'), M(x) \right )_{\rho_{\reg{TS}}}$ as the following quantum process:
initialize register $\textsf{N}$ to $\ket{0}$; initialise registers $\textsf{S}$ and $\textsf{T}$ to the CQ state $\rho_{\reg{TS}}$; alternately apply $M_{\lambda x}$ to registers $\textsf{S}$ and $\textsf{N}$ and $C_{\lambda x' u}$ (for a uniformly chosen $u \in \{0,1\}^{p(\lambda + |x'|)}$ each time) to registers $\textsf{T}$ and $\textsf{N}$, measuring $\textsf{N}$ in the computational basis after each application of either $M_{\lambda x}$ or $C_{\lambda x' u}$; stop applying $M_{\lambda x}$ after $r^M_{\lambda x}$ times and $C_{\lambda x' u}$ after $r^C_{\lambda x'}$ times, and finally output the output of the circuit $C_{\lambda x' u}$. We denote the random variable representing this output by $\left < C(x'), M(x) \right >_{\rho_{\reg{TS}}}$. We call the $r^M_{\lambda x} + r^C_{\lambda x'}$ measurement outcomes which are obtained after performing as many standard basis measurements of $\textsf{N}$ during a single execution of the interaction $\left (C(x'), M(x) \right )_{\rho_{\reg{TS}}}$ the \emph{transcript} for this execution.

\subsection{Implementing oracles}
\label{sec:oracles}

Some of our formal definitions rely on `oracles', which we generally visualise as functions $\mO:\{0,1\}^*\to\{0,1\}^*$ to which query access is given. In this paper, query access is always provided by some trusted procedure (which is called the `setup functionality' in Section~\ref{sec:quantum-aap}). We remark here briefly on how query access to these oracles (which, expressed as functions, may take an exponential number of bits to specify) can be implemented efficiently in the number of queries made to the oracle.

We always assume that our oracles are classical oracles, meaning that any query submitted to an oracle is measured in the standard basis before being answered. (However, the oracles are allowed to return quantum states.) With this requirement in place, a length-preserving random oracle (for example) can be implemented efficiently by a QPT procedure that (i) measures its query, (ii) looks up in a local database if the query has been answered before, and if so provides the answer, (iii) generates a random answer and adds it to the database if not. A more structured oracle with classical queries, such as one that returns a quantum state that can be efficiently re-generated given an appropriate classical description, can be implemented in a similar way. In case one wishes to efficiently implement a quantum random oracle one could use the ``compressed oracle'' method of Zhandry~\cite{zhandry2019record}, but we will not need this here.

\section{Quantum Agree-and-Prove schemes}
\label{sec:quantum-aap}

To define the intuitive notion of a `proof of quantum knowledge' in sufficient generality so that we can capture both quantum money verification and $\QMA$ verification we introduce a quantum variant of the `agree-and-prove' framework from~\cite{badertscher2019agree}, extending their formalism to our setting in which the prover and the witness are quantum, and simplifying some aspects of the formalism that are less important for the applications we have in mind. For convenience, we preserve much of the notation from~\cite{badertscher2019agree}; we refer the reader to that paper for additional motivation and explanations relating to the framework. 

Informally, to specify an agree-and-prove (AaP) scheme, one must specify both a \emph{scenario} and a \emph{protocol}. The scenario includes a (trusted) \emph{setup functionality} which captures the \emph{environment} in which the protocol will take place (such as the existence of a common random string, or an oracle accessible to all parties, etc.). In addition, the scenario includes an \emph{agreement relation} and a \emph{proof relation}, which are both efficient procedures that define the set of \emph{valid instances} (in the former case) and the set of \emph{valid witnesses} for any given instance (in the latter). That for any (yes) instance $x$ there is a set of valid witnesses is a familiar notion from the usual proof-of-knowledge formalism; the idea that there is also a set of valid instances is a less familiar concept, and the agreement relation exists to capture situations where the prover must \emph{approach} the verifier and suggest a problem instance for which it knows a witness, and moreover not all problem instances are allowed (for example, a particular protocol might only be secure for instances of a particular form, or there might be a database of valid user IDs, in which case the agreement relation would check that any ID for which the prover proposed to prove it had a password was in fact a valid ID).

The second part of the specification of an agree-and-prove scheme is the description of a \emph{protocol}, which specifies intended actions for the honest prover and honest verifier. These honest actions should satisfy a \emph{completeness} condition, whereas the \emph{soundness} condition applies to arbitrary actions for the prover.

In the next subsection we formalise the notion of a scenario. The following section discusses input generation algorithms; the one after that formalises protocols, and the one after that lays down the security conditions for agree-and-prove schemes.

\subsection{Scenario}

\begin{definition}[Agree-and-Prove Scenario for quantum relations]
An agree-and-prove (AaP for short) scenario for quantum relations is a triple $(\mF,\mR,\mC)$ of interactive oracle machines satisfying the following conditions: 
\begin{itemize}
\item The \emph{setup functionality} $\mF$ is a QPT ITM taking a unary encoding of a security parameter $\lambda$ as input. The ITM $\mF$ runs an initialization procedure \texttt{init}, and in addition returns the specification of an oracle (which we also model as an ITM) $\mO_\mF(i,\texttt{q},arg)$. The oracle function takes three arguments: $i\in \{I,P,V\}$ denotes a `role', $\texttt{q}$ denotes a keyword specifying a query type, and $arg$ denotes the argument for the query.

There are three different options for the `role' parameter, which exists to allow $\mF$ to release information selectively depending on the party asking for it. The roles $I$, $P$ and $V$ correspond respectively to the \emph{input generator} (Definition \ref{def:input-gen}), the prover, and the verifier.
%\footnote{In~\cite{badertscher2019agree} the oracle $\mO_\mF$ may also take certain types of queries from the prover. Since our applications do not require this we omit the possibility for simplicity; it can easily be incorporated if needed.}
\begin{remark}
In \cite{badertscher2019agree}, $\mO_\mF$ has an additional function: when it is called with the argument $\mathtt{QUERIES}$, $\mO_\mF(\mathtt{QUERIES})$ returns a list of tuples representing all of the queries made to $\mO_\mF$ by the prover $P$ and the replies that were given. This functionality is available only to the extractor, not to the parties $I$, $P$ and $V$, and it is necessary in order to permit the design of an efficient extractor for some protocols, particularly those in the random oracle model (see, for example, the discussion at the bottom of page 10 in \cite{badertscher2019agree}). Since we do not need to use this functionality in our protocols, we omit it here.
\end{remark}

\item The \emph{agreement relation} $\mC$ is a QPT oracle machine taking a unary encoding of the security parameter $\lambda$ and a statement as inputs, and producing a decision bit as output.\footnote{In~\cite{badertscher2019agree} the agreement relation also takes two auxiliary inputs. We will not need this.}
\item The \emph{proof relation} $\mR$ is a QPT oracle machine taking a unary encoding of the security parameter $\lambda$, a (classical) statement $x$ and a (quantum) witness $\rho_\reg{W}$ as inputs, and outputting a decision bit.
\end{itemize}
\end{definition}

\begin{comment}
Note that the oracle $\mO_\mF$ provided by the setup can be accessed by all parties: input-generation procedure $I$, the verifier $V$ and the prover $P$. 
This models the fact that each party may have have secret inputs, provided by $I$ in a way that may depend on $\mO_\mF$, that it can use to ``unlock'' certain types of information. (For example, in the quantum money application one of the roles of $\mO_\mF$ is to act as the ``bank'' and distribute money states to authorized entities.)
\end{comment}

\subsection{Input generation}

Before we formalise the notion of an agree-and-prove protocol, we introduce the notion of an \emph{input generation algorithm}, which is an algorithm that produces the auxiliary inputs that the prover and the verifier receive before they begin interacting. The input generation algorithm models `prior knowledge' which the prover and the verifier may possess. When the prover and the verifier are honest, we assume that they share a prescribed amount of prior knowledge that allows them to successfully carry out the protocol (for example, the prover should know the user ID for which it proposes to prove it has a password); when the prover is potentially dishonest, however, which is the case when we define the soundness experiment, the input generation algorithm can be arbitrary (and in particular is not required to run in polynomial time). Allowing an arbitrary input generation algorithm in the soundness experiment is important for composability.

\begin{remark}
Here we depart from~\cite{badertscher2019agree}, where input generation is always unrestricted (even when the verifier and the prover are honest). For the completeness experiment (Definition \ref{def:completeness}) to succeed in our setting we will sometimes require that the prover's and the verifier's auxiliary inputs are honestly generated: for example, in the context of QMA verification, we require that the prover is given as input $(x, \rho_W)$ such that $\rho_W$ is a valid witness for $x$. The prover cannot check for itself whether or not $\rho_W$ is a valid witness without potentially damaging it irreversibly. However, the verifier will reject the prover in the completeness experiment if it does not hold a valid witness. As such, in order that completeness can be satisfied, we require that the prover's input is honestly generated during the completeness experiment.
\end{remark}

\begin{definition}[Input Generation Algorithm]
\label{def:input-gen}
An input generation algorithm $I$ for an agree-and-prove scenario $\mathcal{S}$ is a machine $I$ taking a unary encoding of the security parameter $\lambda$ as input and producing a CQ state $\rho_{\reg{AUX}_V \reg{AUX}_P}$ specifying the auxiliary inputs for the verifier (in the classical register $\reg{AUX}_V$) and prover (in the quantum register $\reg{AUX}_P$) respectively as output. We may use the shorthand $\rho_{\reg{AUX}_P} \equiv \Tr_{\reg{AUX}_V}\big( \rho_{\reg{AUX}_V \reg{AUX}_P} \big)$ and $\rho_{\reg{AUX}_V} \equiv \Tr_{\reg{AUX}_P}\big( \rho_{\reg{AUX}_V \reg{AUX}_P} \big)$.
\end{definition}

%\znote{Do we need this paragraph? Seems like we already explained this above Remark 3.3} We do not place an explicit bound on the complexity of the input generation algorithm $I$, because in the soundness case this algorithm may be unbounded (it can be used e.g.\ to represent all past interactions between the setup functionality and arbitrary users). Only in the case of the ``honest'' input generation used for completeness does it matter, depending on the application, that input generation can be executed efficiently. To allow flexibility we choose not to hardcode this requirement in the definition. 

\subsection{Protocol}

Once a scenario has been fixed we can define a \emph{protocol} for that scenario. Informally, the protocol specifies the actions of the honest parties. Each party, prover and verifier, is decomposed into two entities that correspond to the two phases, ``agree'' and ``prove'', of the protocol.

\begin{definition}[Agree-and-prove protocol]
An agree-and-prove protocol is a tuple $(\mI, P_1,P_2,V_1,V_2)$ consisting of a set $\mI$ of input generation algorithms together with the following four interactive oracle machines $(P_1,P_2,V_1,V_2)$:
\begin{itemize}
\item A (honest) first phase QPT prover $P_1$ taking a unary encoding of the security parameter $\lambda$ and a (quantum) auxiliary input $\rho_{\reg{AUX}_P}$ as inputs. It produces a (classical) statement $x_P$ or $\perp$ as output, as well as a (quantum) state $\rho_{st_P}$.
\item A (honest) first phase PPT verifier $V_1$ taking a unary encoding of the security parameter $\lambda$ and a (classical) auxiliary input $\reg{AUX}_V$ as inputs. It produces a (classical) statement $x_V$ or $\perp$ as output, as well as a (classical) state $st_V$.
\item A (honest) second phase QPT prover $P_2$ taking a classical instance $x$ and a quantum state $\rho_{st_P}$ as input, as well as a unary encoding of the security parameter $\lambda$, and producing as output a bit that indicates whether the proof has been accepted.
\item A (honest) second phase PPT verifier $V_2$ taking a classical instance $x$ and a state string $st_V$ as input, as well as a unary encoding of the security parameter $\lambda$, and producing as output a bit that indicates whether it accepts or rejects.
\end{itemize}
\end{definition}

Note that in this definition the verifier is required to be a classical probabilistic polynomial time ITM. In general one may extend the definition to allow for quantum polynomial time verifiers; since our focus is on classical protocols we restrict our attention to classical verifiers. We also restrict the honest prover to run in quantum polynomial time; for soundness, this restriction will be lifted for the case of proofs of knowledge and maintained for the case of arguments of knowledge. 
\subsection{Security conditions}
\label{sec:security}

%While we will not make use of this possibility here, in some cases we need the extraction procedure (see the soundness definition below) to have the possibility of programming the setup procedure, as long as the new setup is indistinguishable from the honest ones by the parties. 
%
%\begin{definition}[Setup Generation Algorithm]
%A setup generation algorithm $SGen$ is a QPT procedure taking a unary encoding of the security parameter $\lambda$ as input. It returns (the description of) a setup functionality $\mF'$ and a trapdoor $td$.
%
%We say that the setup generation algorithm $SGen$ is admissible for an agree-and-prove scenario $\mS = (\mF,\mR,\mC)$ if for every QPT oracle machine $A$ the following advantage is negligible in $\lambda$:
%\[ \Pr_{\mF.init(1^\lambda)}\big( A^{\mO_\mF}(1^\lambda)=1\big) - \Pr_{(\mF',td)\leftarrow SGen(1^\lambda); \mF'.init(1^\lambda)}\big( A^{\mO_{\mF'}}(1^\lambda)=1\big) \;.\]
%Here $A$ is allowed to query $\mO_\mF$ (resp. $\mO_{\mF'}$) using any role in $\{I,P,V\}$.
%\end{definition}

We now specify the correctness and soundness conditions associated with an agree-and-prove scenario~$\mathcal{S}$.

\begin{definition}[Completeness experiment]
\label{def:completeness}
We define the following \emph{completeness experiment} for an agree-and-prove protocol $\mK = (\mI, P_1, P_2, V_1, V_2)$ in the context of a scenario $\mS = (\mF, \mC, \mR)$:
\begin{enumerate}
\item An input generation algorithm $I \in \mI$ is executed. It is allowed to query $\mO_{\mF}(I, \cdot, \cdot)$. It produces the CQ state $\rho_{\mathsf{AUX}_V \mathsf{AUX}_P}$, and passes input $\rho_{\mathsf{AUX}_P}$ to $P_1$ and $\rho_{\mathsf{AUX}_V}$ to $V_1$.
\item The interaction $(V_1, P_1)_{\rho_{\reg{AUX}_V\reg{AUX}_P}}$ is executed (during which $V_1$ and $P_1$ are allowed to query $\mO_{\mF}(V, \cdot, \cdot)$ and $\mO_{\mF}(P, \cdot, \cdot)$, respectively), and if either $V_1$ or $P_1$ returns $\perp$, or if $x_V \neq x_P$, the agree phase returns 0. Otherwise, the outputs of $V_1$ and $P_1$ are passed to $V_2$ and $P_2$, respectively, and the agree phase returns 1. If the agree phase returns 1, let the CQ state representing the joint distribution of $st_V$ and $\rho_{st_P}$ be denoted by $\rho_{st_V \: st_P}$, and let $x = x_P = x_V$ be the instance that $V_1$ and $P_1$ have agreed on.
\item The interaction $(V_2(x), P_2(x))_{\rho_{st_V \: st_P}}$ is executed (during which $V_2$ and $P_2$ are allowed to query $\mO_{\mF}(V, \cdot, \cdot)$ and $\mO_{\mF}(P, \cdot, \cdot)$, respectively), and the outcome of the proof phase is set to the value which $V_2$ returns at the end of the protocol.
\end{enumerate}

The completeness experiment returns 1 if the agree phase and the proof phase both return 1.
\end{definition}

%We require that whenever the agree phase completes and $P_1$ and $V_1$ return identical statements then the prove phase completes with probability at least $c$. 

\paragraph{Soundness.} The soundness experiment is formulated using the notion of an extractor. We highlight some differences with the classical case which is related in \cite{badertscher2019agree}. First of all, naturally the extractor is allowed to be a QPT procedure. Its goal is to produce a valid witness for the statement $x$ that is produced at the end of the setup phase. (If the honest ${V}_1$ and a malicious $\hat{P}_1$ do not produce a matching statement, then the extractor automatically `wins' the soundness experiment.) As in the classical setting, the extractor has access to $x$, the communication transcript from the setup phase (that cannot be rewound), as well as oracle access to the dishonest strategy $\hat{P}_2$ as explained in Section~\ref{sec:black-box}. 

A second difference is with respect to how success of the experiment is estimated. In the classical case, once the agreement phase has completed the agreement relation $\mR$ and statement $x$ are fixed, so that a witness $w$ is either correct or incorrect. In the quantum case, $\rho_\reg{W}$ is a quantum state and $\mR$ a QPT verification procedure. Therefore, success of the extractor is probabilistic: conditioned on extraction producing a witness, this witness has a certain probability of being deemed valid by $\mR$. For this reason the soundness condition contains an additional parameter $\delta$ that quantifies this last probability. 

\begin{definition}[Soundness experiment]
\label{def:soundness}
We define the following \emph{soundness experiment} for an agree-and-prove protocol $\mK = (\mI, P_1, P_2, V_1, V_2)$ and an extractor $E$,  in the context of a scenario $\mS = (\mF, \mC, \mR)$:
\begin{enumerate}
\item An input generation algorithm $\hat I$ is executed.  It is allowed to query $\mO_{\mF}(I, \cdot, \cdot)$. It produces the CQ state $\rho_{\mathsf{AUX}_V \mathsf{AUX}_P}$, and passes input $\rho_{\mathsf{AUX}_P}$ to $\hat P_1$ and $\rho_{\mathsf{AUX}_V}$ to $V_1$.
\item The interaction $(V_1, \hat P_1)_{\rho_{\reg{AUX}_V\reg{AUX}_P}}$ is executed (during which $V_1$ and $\hat P_1$ are allowed to query $\mO_{\mF}(V, \cdot, \cdot)$ and $\mO_{\mF}(P, \cdot, \cdot)$, respectively), and if either $V_1$ or $P_1$ returns $\perp$, or if $x_V \neq x_P$, the agree phase returns 0. Otherwise, the outputs of $V_1$ and $\hat P_1$ are passed to $V_2$ and $\hat P_2$, respectively, and the agree phase returns 1. If the agree phase returns 1, let the CQ state representing the joint distribution of $st_V$ and $\rho_{st_P}$ be denoted by $\rho_{st_V \: st_P}$, and let $x = x_P = x_V$ be the instance that $V_1$ and $\hat P_1$ have agreed on.
\item If the agree phase returns 1 in step 2, the extractor $E$ is provided with the transcript of the interaction $(V_1, \hat P_1)_{\rho_{\reg{AUX}_V\reg{AUX}_P}}$ and the instance $x$ resulting from the agree phase, along with oracle access to $\hat P_2$ running on input $\rho_{st_P}$ (where $\rho_{st_P}$ is the prover's half of the joint CQ state $\rho_{st_V \: st_P}$). In addition the extractor can access the oracle $\mO_\mF$ using any of the roles in $\{I,P\}$.   It outputs a state $\rho$.
\end{enumerate}
\end{definition}

We are now ready to give the formal definition of security. 

\begin{definition}[Security of Protocol for Quantum Agree-and-Prove Scenario]
\label{def:aap-security}
Let $\lambda$ be a security parameter. Let $c,\kappa,\delta:\mathbb{N}\to [0,1]$. A protocol $\mK = (\mI, P_1, V_1, P_2, V_2)$ for a scenario $(\mF, \mC, \mR)$ is secure with completeness $c$, up to knowledge error $\kappa$, and with extraction distance parameter $\delta$ if the following conditions hold:
\begin{itemize}
\item \emph{Correctness:} The completeness experiment (Definition \ref{def:completeness}) returns $1$ with probability at least $c$, and in addition the statement $x = x_V = x_P$ that is agreed on during the completeness experiment is such that $\mC(1^\lambda, x) = 1$, whenever the honest parties $P$ and $V$ are provided with their inputs by some input generation algorithm $I\in\mI$.
\item \emph{Soundness:} There exists a QPT ITM $E$ (called the ``extractor'') such that the following holds. Let $\hat{P}=(\hat{P}_1,\hat{P}_2)$ be a potentially dishonest prover for $\mK$ and $\hat{I}$ an arbitrary input generation algorithm. 
 Let $x$ be an instance such that, conditioned on the agree phase of $\mK$ returning 1 and the instance $x$ being agreed upon, the prover $\hat{P}_2$ succeeds with probability $p > \kappa$ in the proof phase of $\mK$. 
Then the state $\rho$ returned by the extractor in the soundness experiment (Definition \ref{def:soundness}), conditioned on the agree phase of the soundness experiment returning 1 and $x$ being agreed on, is such that $\mathrm{Pr}[\mR(1^\lambda, x, \rho) = 1] > 1 -\delta(p)$, where $\delta$, which may depend on $\lambda$, is such that $\delta(p)<1$ for all $p>\kappa$. The expected number of steps of extractor $E$ is required to be bounded by a polynomial in $\lambda/(p-\kappa)$, if executing the prover's unitary on any input counts as a unit-time procedure.
\end{itemize}
\end{definition}

When the soundness condition only holds under the restriction that $\hat{P}$ must be implemented by a QPT ITM we say that the protocol is \emph{computationally secure}, or that it is an \emph{argument system} (as opposed to a \emph{proof system}, which is sound against all possible provers).

\begin{remark}
When we wish to emphasize the connection between secure agree-and-prove protocols and the more usual notion of a `proof of knowledge', we sometimes refer to an AaP scenario that satisfies Definition~\ref{def:aap-security} as a `classical proof (or argument) of quantum knowledge'. (Formally, proofs and arguments of knowledge can be formulated as protocols for AaP scenarios which have trivial agreement phases and which have as a proof relation an NP  or a QMA relation; see Section~\ref{sec:qma-arguments}.) When we use this terminology, it will be clear from context what the `knowledge' is that we are referring to.
\end{remark}

\subsection{Amplification}

The proofs of quantum knowledge that we construct in the paper have a knowledge error that is a constant close to $1$. In general one would aim for a very small constant, or better, a knowledge error that is negligible in the security parameter. 

A natural avenue to improving the soundness error is to consider sequential or parallel repetition. We will show that each of the explicit protocols which we consider here has a natural sequential repetition that reduces its soundness error exponentially with the number of repetitions. We stop short of giving a general sequential amplification result for quantum Agree-and-Prove scenarios for the following technical reason. The drawback of repetition is that, as argued in Section~\ref{sec:nondestructive}, proofs of quantum knowledge with classical communication are generally destructive; therefore, the honest prover in a sequentially repeated protocol needs to make use of multiple copies of the witness to the relation in order to succeed. In contrast, the extractor will only be able to recover a single witness. Formulating general conditions on the input generation procedure so that it has a natural ``amplification'' that returns a single witness but multiple proofs for it seems cumbersome in general, and so we have elected to discuss sequential repetition of each of our protocols individually. 

Parallel amplification seems much more delicate. For the case of quantum proofs of classical knowledge Unruh was able to show that the knowledge error decreases under parallel repetition by making use of very specific requirements on the protocol, and in particular the condition of \emph{special soundness}.\footnote{Informally this condition states that a valid witness can be extracted from correct responses to two different challenges.}. Neither of our proof of quantum knowledge protocols satisfies this condition. For the case of the protocols for quantum money it seems plausible that amplification in parallel would be possible, but we do not show this. For the case of the argument of knowledge for $\QMA$ based on the Mahadev protocol, although soundness amplification in parallel has been shown in~\cite{alagic2019non,chia2019classical}, we do not know how the knowledge error could be similarly amplified. The question seems delicate and we leave it to further work.  

\subsection{Agree-and-Prove scenario for quantum money}
\label{sec:ap-scenario-qm}

As a first example of a concrete agree-and-prove scenario, we define an agree-and-prove scenario that captures the scenario which arises in the problem of verifying quantum money. We firstly lay down the `standard' security definitions for a quantum money scheme, and in so doing introduce some notation and some objects that will be useful in formulating quantum money in the agree-and-prove framework.

\begin{definition}\label{def:qmoney}
A ``quantum money scheme'' is specified by the following objects, each of which is parametrized by a security parameter $\lambda$:
\begin{itemize}
\item A algorithm $\mathtt{Bank}$ taking a string $r$ as a parameter which initialises a database of valid \emph{money bills} in the form of a table of tuples $(\mathsf{id, public, secret}, \ket{\$}_{\mathsf{id}})$. \textsf{id} represents a unique identifier for a particular money bill; \textsf{public} and \textsf{secret} represent, respectively, public and secret information that may be necessary to run the verification procedure for the bill labeled by \textsf{id}; and $\ket{\$}_{\mathsf{id}}$ is the quantum money state associated with the identifier \textsf{id}. The string $r$ should determine a classical map $H_r$ such that $H_r(\mathsf{id}) = (\mathsf{public, secret})$.\footnote{The string $r$ represents any random choices that $\mathtt{Bank}$ might make while generating valid bills; we make this string explicit for later convenience.}

\item A verification procedure $\texttt{Ver}(x, \textsf{public}, \textsf{secret}, \rho_W)$ that is a QPT algorithm which decides when a bill is valid.
\end{itemize}
In addition the scheme should satisfy the following conditions:
\begin{enumerate}
\item Completeness: for any valid money bill $(\mathsf{id, public, secret}, \ket{\$}_{\mathsf{id}})$ in the database created by $\mathtt{Bank}$,
\[\mathrm{Pr}\big( \texttt{Ver}(\reg{id}, \textsf{public}, \textsf{secret}, \proj{\$}_{\mathsf{id}})\big) \,\geq \, c_M(\lambda)\;,\]
for some function $c_M(\cdot)$.
We refer to $c_M$ as the \emph{completeness parameter} of the money scheme. 
\item No-cloning: 
Consider the following game played between a challenger and an adversary: the challenger selects a valid money bill $(\mathsf{id, public, secret}, \ket{\$}_{\mathsf{id}})$ and sends $(\mathsf{id, public,} \ket{\$}_{\mathsf{id}})$ to the adversary; the adversary produces a state $\sigma_{AB}$. Then for 
any~\footnote{Many quantum money schemes are information-theoretically secure; however, it is also possible to consider computationally secure schemes by replacing `any' with `any QPT'.} adversary in this game,
\begin{align*}
\underset{r}{\mathrm{Pr}} \big(
\texttt{Ver}(\reg{id}, \textsf{public}, \textsf{secret}, \Tr_B(\sigma_{AB})) = 1
\;\text{ and }\;
\texttt{Ver}(\reg{id}, \textsf{public}, \textsf{secret}, \Tr_A(\sigma_{AB})) = 1
\big) \,\leq\, \mu_M(\lambda)\;,
\end{align*}
for some function $\mu_M(\cdot)$. 
We refer to $\mu_M$ as the \emph{cloning parameter} of the money scheme. Note that the probability of the adversary's success is calculated assuming that the string $r$ which \texttt{Bank} takes is chosen uniformly at random.
%\item No-cloning: for any $\mathsf{id}$ such that $\mathcal{C}^{\mathcal{O}_{\mF_M}}(1^\lambda, \mathsf{id}) = 1$, and for any state $\sigma_{AB}$ on $n$-qubit registers $A$ and $B$ produced by any \footnote{Many quantum money schemes are information-theoretically secure; however, it is also possible to consider computationally secure schemes by replacing `any' with `any QPT'.} adversary allowed only at most $k_M$ calls to $\mathcal{O}_{\mF_M}(P, \texttt{getMoney}, \cdot)$,
%\begin{align*}
%\mathrm{Pr} [
%\mathcal{R}^{\mathcal{O}_{\mF_M}}(1^\lambda, \mathsf{id}, \Tr_B(\sigma_{AB})) = 1
%\text{ and }
%\mathcal{R}^{\mathcal{O}_{\mF_M}}(1^\lambda, \mathsf{id}, \Tr_A(\sigma_{AB})) = 1
%] \leq \mu_M(\lambda),
%\end{align*}
%for some function $\mu_M(\cdot)$.
\end{enumerate}
\end{definition}

Fix a quantum money scheme according to Definition~\ref{def:qmoney}, with completeness parameter $c_M$ and cloning parameter $\mu_M$. 
We call an agree-and-prove scenario $(\mF_M, \mC_M, \mR_M)$ that takes the form below a `quantum money scenario with completeness parameter $c_M$ and cloning parameter $\mu_M$'.

\begin{itemize}
\item Setup functionality $\mF_M(1^\lambda)$: The setup should run an initialization procedure $\mathtt{init}_M$ that instantiates\footnote{$\texttt{init}_M$ doesn't necessarily need to actually allocate memory for the database; since the database will only ever be accessed through the oracle $\mO_{\mF_M}$, it is possible to `instantiate' the database using the method described in Section \ref{sec:oracles}.} a database $B_M$ whose records are of the form (and the distribution) that $\mathtt{Bank}$ would have produced running on a uniformly random input $r$. The setup should also return a specification of how the following oracles should be implemented: 
\begin{itemize}
%\item The initialization procedure $\mathtt{init}_M$ takes a random string $r$ with length $\ell(\lambda)$, and generates a database $B_M$ with entries of the form $(\mathsf{id, public, secret}, \ket{\$}_{\mathsf{id}})$. \textsf{id} represents a unique identifier for a particular money bill; \textsf{public} and \textsf{secret} represent, respectively, public and secret information that may be necessary to run the verification procedure for the bill labeled by \textsf{id}; and $\ket{\$}_{\mathsf{id}}$ is the quantum money state associated with the identifier \textsf{id}. The lengths of these four items are polynomial functions of the security parameter $\lambda$.
%\begin{itemize}
\item $\mathcal{O}_{\mF_M}(I, \texttt{id})$: returns an identifier $\mathsf{id}$ such that the bill $(\mathsf{id, public, secret}, \ket{\$}_{\mathsf{id}})$ is in $B_M$.\footnote{Which identifier is returned is at the discretion of any particular instantiation of this function. Intuitively, this oracle is used to represent identifiers of bills that have been generated in the past and are thus available in an ``environment'' that $I$ may have access to.}
\item $\mathcal{O}_{\mF_M}(\cdot, \texttt{public}, \mathsf{id})$: Returns the \textsf{public} string associated with \textsf{id}. Returns $\perp$ if no record in $B_M$ with the identifier \textsf{id} exists.
\item $\mathcal{O}_{\mF_M}(I, \texttt{getMoney}, \mathsf{id})$: If no record in $B_M$ with identifier \textsf{id} exists, returns $\perp$. Otherwise, returns $\ket{\$}_{\mathsf{id}}$ the first time it is called. If called again with the same \textsf{id} argument, returns $\perp$.
\item $\mathcal{O}_{\mF_M}(V, \texttt{secret}, \mathsf{id})$: accesses $B_M$ and returns the \textsf{secret} string associated with \textsf{id}. Returns $\perp$ if no record in $B_M$ with the identifier \textsf{id} exists.
%\end{itemize}
\end{itemize}
\item Agreement relation $\mathcal{C}^{\mathcal{O}_{\mF_M}}(1^\lambda, \mathsf{id})$: outputs 1 if and only if a record in $B_M$ with identifier $\mathsf{id}$ exists.
\item Proof relation $\mathcal{R}^{\mathcal{O}_{\mF_M}}(1^\lambda, x, \rho_W)$: interprets $x$ as an \textsf{id} (outputting $\perp$ if this fails), sets $\textsf{public} \leftarrow \mathcal{O}_{\mF_M}(V, \texttt{public}, x)$ and $\textsf{secret} \leftarrow \mathcal{O}_{\mF_M}(V, \texttt{secret}, x)$, and executes $\texttt{Ver}(x, \textsf{public}, \textsf{secret}, \rho_W)$. 
\end{itemize}

\section{Simple properties}
\label{sec:properties}

In this section, we state and prove two simple properties of our definitions in section \ref{sec:quantum-aap}. The first of these, given in Section~\ref{sec:nondestructive}, is that, if it is possible to \emph{nondestructively} succeed in a secure agree-and-prove protocol for some scenario $(\mF, \mC, \mR)$, then there is a `cloning procedure' that can produce copies of a state which will be accepted by $\mR$. This is a simple no-go theorem which precludes certain types of classical agree-and-prove protocols in both the quantum money and the QMA settings. The other simple property which we prove is that, under certain assumptions on the parameters in Definition \ref{def:aap-security}, any protocol satisfying Definition \ref{def:aap-security} for the case of a quantum money scenario (see Section~\ref{sec:ap-scenario-qm} for a definition of the latter) can be used as a quantum money verification protocol for the same scenario, provided a generalized notion of interactive verification is allowed. This is shown in Section~\ref{sec:money-ver}. 

\subsection{Nondestructive proofs of quantum knowledge imply cloning}
\label{sec:nondestructive}

In this section we formalize the intuitive claim that a non-destructive proof of knowledge must imply the ability to clone the underlying witness state. To formalize this statement we make a number of assumptions that help simplify the presentation. More general statements can be proven depending on one's needs; see the end of the section for further discussion. 

We use definitions and notation from Section \ref{sec:black-box} and Section \ref{sec:quantum-aap}.

\begin{definition}[Nondestructive interaction]
\label{def:nondestructive-interaction}
Let $P = (\{P_{\lambda x}\}, \{r^{P}_{\lambda x}\})$ be an interactive quantum machine, and let $V = (p, \{V_{\lambda x u}\}, \{r^V_{\lambda x}\})$ be an interactive classical machine. Fix a security parameter $\lambda$. A \emph{nondestructive interaction} $\left (V(x) , P(x') \right )_{\rho_{\reg{TS}}}$ between $V$ and $P$ for some CQ state $\rho_{\reg{TS}}$ is an interaction in which the execution of $\left (V(x) , P(x') \right )_{\rho_{\reg{TS}}}$ is unitary (including the standard-basis measurements of the network register that take place during the execution) for all possible random inputs $u$ to $V$. More formally, for any choice of $r^V_{\lambda x}$ random strings $u_1, \dots, u_{r^V_{\lambda x}}$ used during the interaction $\left (V(x) , P(x') \right )_{\rho_{\reg{TS}}}$, there exists a unitary $U$ acting on registers $\reg{N}$, $\reg{T}$ and $\reg{S}$ such that the joint state of the registers $\reg{N}$, $\reg{T}$ and $\reg{S}$ is identical after $U$ has been applied to them (assuming they are initialised as described in Section \ref{sec:black-box}) to their joint state after the execution of $\left (V(x) , P(x') \right )_{\rho_{\reg{TS}}}$ using the random strings $u_1, \dots, u_{r^V_{\lambda x}}$.

\end{definition}

\begin{definition}[Oracle access to an interactive quantum machine with power of initialisation]
Recall the definition of oracle access to an interactive quantum machine given in Section \ref{sec:black-box}. In that section, the initial state $\ket{\Phi}$ on which the quantum machine is run is fixed. We say that a quantum algorithm $A$ has \emph{oracle access to an interactive quantum machine $M$ with power of initialisation} if $A$ can do all the things described in Section \ref{sec:black-box}, and in addition can initialise $M$'s internal register $\reg{S}$ to a state of its choosing (but cannot read $\reg{S}$, only write to it).
\end{definition}

%\begin{definition}[nondestructive prover for instance $x$]
%\label{def:nondestructive-prover}
%Let $(\mF, \mC, \mR)$ be an agree-and-prove scenario, and let $\mK = (\mI, P_1, P_2, V_1, V_2)$ be a protocol for $(\mF, \mC, \mR)$ with classical honest verifier $V = (V_1, V_2)$. $\hat P = (\hat P_1, \hat P_2)$ is a \emph{nondestructive prover} for $\mK$ on instance $x$ if there exists an input generation algorithm $\hat I$ and a CQ state $\rho_{\reg{TS}}$ such that
%\begin{itemize}
%\item with nonzero probability (over the actions of $V_1$ and $\hat P_1$), the agree phase of $(\mF, \mC, \mR)$ completes successfully, and when it does the statement agreed on is $x$ and the joint state passed on to $V_2$ and $\hat P_2$ (by $V_1$ and $\hat P_1$ respectively) is $\rho_{\reg{TS}}$,
%\item the interactions $\left (V_1 , \hat P_1 \right )_{\rho_{\reg{AUX}_V\reg{AUX}_P}}$ and $\left (V_2(x) , \hat P_2(x) \right )_{\rho_{\reg{TS}}}$ are nondestructive (where $\rho_{\reg{AUX}_V \reg{AUX}_P}$ is the CQ state produced by $\hat I$ and given to $V_1$ and $\hat P_1$ and $x$ is the instance agreed on).
%\end{itemize}
%\end{definition}

\begin{proposition}
\label{prop:no-nondestructive}
Let $\lambda$ be a security parameter, let $(\mF, \mC, \mR)$ be an agree-and-prove scenario, and let $\mK = (\mI, P_1, P_2, V_1, V_2)$ be a protocol for $(\mF, \mC, \mR)$ with a classical honest verifier $V = (V_1, V_2)$, knowledge error $\kappa$ and extraction distance $\delta$. Let $\hat P = (\hat{P}_1,\hat{P}_2)$ be a prover for $\mK$.

Let $\hat{I}$ be any input generation algorithm, and $x$ and $\rho_{\reg{TS}}$ an instance and a CQ state respectively such that the agree phase of $\mK$, executed with $\hat{I}$, $V_1$ and $\hat{P}_1$, has positive probability of ending with $x$ being agreed on, and such that the joint state of $st_V$ and $st_P$ conditioned on $x$ being agreed on is $\rho_{\reg{TS}}$.

Suppose further that (i) the interaction $\left (V_2(x) , \hat P_2(x) \right )_{\rho_{\reg{TS}}}$ is nondestructive, (ii) the oracle $\mO_\mF$ does not keep state during the second phase of the protocol, i.e.\ any query to it by $V_2$ or $\hat{P}_2$ can be repeated with the same input-output behavior, and (iii) the success probability of $\hat P_2$ conditioned on instance $x$ being agreed on is at least $\kappa$. Then there exists a procedure $A$ \footnote{$A$ is in general not efficient. It is also allowed slightly more invasive access to $\hat P_2$ than a typical extractor (because it has power of initialisation for $\hat P_2$). This is acceptable because $A$ is not an extractor, but a cloning procedure. No-cloning security in quantum money scenarios is often information-theoretic, and in these cases it does not matter whether $A$ is efficient; its mode of access to the prover is also irrelevant as long as a circuit for $A$ exists (which is the case here if a circuit for $\hat P_2$ exists). On the other hand, we want the way that $A$ accesses the verifier $V$ to reflect the access that a cloning adversary might have to whichever party is running verification in real life (in the quantum money case, the bank). Our definition of `oracle access to a classical machine' reflects this wish.} such that the following holds. 

Let $\mR^{\mO_\mF}_{\lambda x}(\cdot)$ be the function such that $\mR^{\mO_\mF}_{\lambda x}(\rho) = \mR^{\mO_\mF}(1^\lambda, x, \rho)$, and let the single-bit-valued function $\big(\mR^{\mO_\mF}_{\lambda x}\big)^{\otimes 2}(\cdot)$ be the function whose output is the $\mathrm{AND}$ of the outcomes obtained by executing the tensor product of two copies of $\mR^{\mO_\mF}_{\lambda x}(\cdot)$ on the state that is given as argument. Then the procedure $A$, given as input $x$, a copy of a communication transcript from the agree phase that led to $x$, and black-box access to $V_2$ and $\hat{P}_2$ as interactive machines (including any calls they might make to $\mO_\mF$) running on $\rho_{\reg{TS}}$, with power of initialisation for $\hat P_2$, can produce a state $\sigma$ such that
\begin{gather}
\label{eq:A-success-prob}
\mathrm{Pr}[\big(\mR^{\mO_\mF}_{\lambda x}\big)^{\otimes 2}(\sigma) = 1] > 1 - 2\delta - \mathsf{negl}(\lambda).
\end{gather}
\end{proposition}

\begin{proof}
%Since the agree phase of $\mK$ terminates with both parties agreeing on the instance $x$ with nonzero probability, and since the interaction $\left (V_1 , \hat P_1 \right )_{\rho_{\reg{AUX}_V\reg{AUX}_P}}$ is nondestructive, $A$ can play the agree phase many times against the honest verifier (using the oracle access it has to $V$) until $x$ is agreed on. It can then move into the proof phase of the agree-and-prove protocol $\mK$.
In each round of the proof phase, the prover $\hat P_2$ performs a unitary $\hat P_{2, \lambda x}$ on the registers $\textsf{S}$ and $\textsf{N}$ ($\textsf{S}$ is the prover's internal register, and $\textsf{N}$ is the network register). After this unitary has been performed, the network register contains a string $r_i$ in the computational basis which represents the response the prover sends to the verifier in the $i$th round.

Let the initial joint state that $V_2$ and $\hat P_2$ share be $\rho_{\reg{TS}} = \sum_i p_i \proj{t_i}_{\reg{T}} \otimes \rho_{\reg{S},i}$ for classical strings $t_i$. We can without loss of generality assume that $\rho_{\reg{S},i} = \proj{\Phi_i}$ for some pure state $\ket{\Phi_i}$, since any mixed initial state that resides in the prover's $\reg{S}$ register can be purified by extending the $\reg{S}$ register to some larger register $\reg{S}'$. Note that (according to Definition \ref{def:nondestructive-interaction}), if $\left (V_2(x) , \hat P_2(x) \right )_{\rho_{\reg{TS}}}$ is nondestructive for $\rho_{\reg{TS}} = \sum_i p_i \proj{t_i}_{\reg{T}} \otimes \proj{\Phi_i}_{\reg{S}}$, then $\left (V_2(x) , \hat P_2(x) \right )_{\ket{t_i}_{\reg{T}} \otimes \ket{\Phi_i}_{\reg{S}}}$ is nondestructive for any $i$.

We use the following claim as a stepping stone. For now (i.e. until we indicate otherwise), we assume that the prover's initial state is a pure state $\ket{\Phi}$, and also that the verifier's \textsf{T} register contains a single fixed string $t$. At the end of this proof we show how the conclusions extend to the general case by a linearity argument.

\begin{claim}
\label{claim:nondestructive-subclaim}
Suppose that $\left (V_2(x) , \hat P_2(x) \right )_{\ket{t}_{\reg{T}} \otimes \ket{\Phi}_{\reg{S}}}$ is nondestructive. There exists a procedure $B$ such that, if $B$ knows, for every round $i$, the classical response $r_i$ that the prover $\hat P_2$ would provide in round $i$ in response to every possible series of messages $m_1, \dots, m_i$ that the honest verifier could send in rounds 1 to $i$, and if $B$ is given oracle access to $\hat P_2$ with power of initialisation, $B$ can construct a state $\ket{\Phi'}$ such that 
\[\langle V_2(x) , \hat P_2(x) \rangle_{\ket{t}_{\reg{T}} \otimes \ket{\Phi'}_{\reg{S}}} = \langle V_2(x) , \hat P_2(x) \rangle_{\ket{t}_{\reg{T}} \otimes \ket{\Phi}_{\reg{S}}}\]
without needing access to $\ket{\Phi}$.
\end{claim}

\begin{proof}
Note that, because all of the measurements performed during the execution $\left (V_2(x) , \hat P_2(x) \right )_{\ket{t}_{\reg{T}} \otimes \ket{\Phi}_{\reg{S}}}$ have deterministic outcomes if $\left (V_2(x) , \hat P_2(x) \right )_{\ket{t}_{\reg{T}} \otimes \ket{\Phi}_{\reg{S}}}$ is nondestructive, the outcomes $r_i$ are entirely determined by the verifier's messages $m_i$ and the prover's initial state $\ket{\Phi}$. This means that, fixing $\ket{\Phi}$, we can write $r_i$ as a function of $m_1, \dots, m_i$, $r_i = r_i(m_1, \dots, m_i)$, which we will abbreviate to $r_i(m_{[1,i]})$. Suppose there are $n$ \footnote{Note that, using the notation of Section \ref{sec:black-box}, $n = \max(r_{\lambda x}^{V_2}, r_{\lambda x}^{\hat P_2})$.} rounds in the protocol $\mK$; then, we denote the concatenation of the functions $r_i(m_{[1,i]})$ for all $i \in \{1,\dots,n\}$ by $r(m)$, where $m = m_1, \cdots, m_n$. We refer to this function $r(m)$ as the \emph{response signature} of $\hat P_2$ running on $\ket{\Phi}$.

Observe that, if $\ket{\Phi'}$ is a quantum state such that $\hat P_2$ running on $\ket{\Phi'}$ has an identical response signature to $\hat P_2$ running on $\ket{\Phi}$, then
\[\langle V_2(x) , \hat P_2(x) \rangle_{\ket{t}_{\reg{T}} \otimes \ket{\Phi'}_{\reg{S}}} = \langle V_2(x) , \hat P_2(x) \rangle_{\ket{t}_{\reg{T}} \otimes \ket{\Phi}_{\reg{S}}}.\]

The claim follows: $B$ is the procedure that takes as input a response signature $r(m)$ and uses its powers of initialisation for $\hat P_2$ to look for a state $\ket{\Phi'}$ such that $\hat P_2$ running on $\ket{\Phi'}$ has a response signature of $r(m)$. (Such a state always exists because $\ket{\Phi}$ exists.)

\end{proof}

We now describe how the cloning adversary $A$ proceeds, assuming still that the  initial state which $V_2$ and $\hat P_2$ share is of the pure form $\ket{t}_\reg{T} \otimes \ket{\Phi}_\reg{S}$. Fix the instance $x$ for which $\left (V_2(x) , \hat P_2(x) \right )_{\ket{t}_\reg{T} \otimes \ket{\Phi}_\reg{S}}$ is nondestructive. 
$A$ does the following:
\begin{enumerate}
\item Plays the prove phase against $V_2$, behaving exactly as $\hat P_2$ would behave, and records every message $m_i$ that the verifier sends and every response $r_i$ that the prover sends.
\item Repeats step 1 until the probability that it has \emph{not} seen every possible sequence of messages $m = m_1, \dots, m_n$ from the verifier, and the prover's responses to each such sequence, is negligibly small. If the verifier is polynomially bounded, then this can be achieved (crudely) by running step 1 a doubly exponential number of times. Naturally, if a more precise bound is known on the verifier's runtime, or (even better) if the verifier's challenges are sampled from a publicly known set (which is true, for example, for the verifiers of both protocols which we consider in Section \ref{sec:money-poqk}), $A$'s running time can be improved.

(Note that $A$ can repeat step 2 as many times as it likes because every execution of the protocol is fully unitary, so $A$ can always recover the prover's initial state from its final state and play again. This part also uses that $\mO_\mF$ does not keep state and hence cannot `detect' that multiple prove phases are being played in sequence.)

From here on, we assume that $A$ succeeds in learning the prover's responses to all of the verifier's challenges in this step. We account for the probability that $A$ might fail in this step by subtracting a $\reg{negl}(\lambda)$ from $A$'s total success probability in equation \eqref{eq:A-success-prob}.
\item %When $\rho_{\reg{TS}} = (\ket{t}_\reg{T} \otimes \ket{\Phi}_\reg{S}) (\bra{t}_\reg{T} \otimes \bra{\Phi}_\reg{S}) $, 
$A$ makes two copies of the function $r(m)$ which encodes the prover's responses to every possible sequence of verifier messages, and applies the procedure $B$ guaranteed by Claim \ref{claim:nondestructive-subclaim} to both copies. Recall that $B$'s output on a response signature $r(m)$ is a state $\ket{\Phi'}$, where $\hat P_2$ running on $\ket{\Phi'}$ has the response signature $r(m)$.

$A$ runs the extractor twice in parallel, using the two states output by the two parallel executions of $B$, as well as its power of initialisation for $\hat P_2$, to implement oracle access to $\hat P_2$ running on $\ket{\Phi'}$ for both copies of the extractor.
Let $\sigma$ denote the joint state output by the two copies of $E$ executed in parallel under these conditions. (Note that $\sigma = \sigma_1 \otimes \sigma_2$ is a product state, because the two executions of the procedure are independent; moreover, $\sigma_1 = \sigma_2$.) The output of $A$ is $\sigma$.
\end{enumerate}

Now we return to the regime where the initial state that $\hat V_2$ and $P_2$ share, $\rho_\reg{TS}$, is a general mixed state $\rho_\reg{TS} = \sum_i p_i \proj{t_i}_\reg{T} \otimes \proj{\Phi_i}_\reg{S}$. We argue that the output $\sigma$ of $A$ in this case is such that
\begin{gather*}
\mathrm{Pr}[\big(\mR^{\mO_\mF}_{\lambda x}\big)^{\otimes 2}(\sigma) = 1] > 1 - 2\delta,
\end{gather*}
as stated in Proposition \ref{prop:no-nondestructive}.

We use the shorthand $E ^{\hat P_2}(\proj{\Phi_i})$ to denote the output of $E$ when it is run with oracle access to $\hat P_2$ running on initial state $\ket{\Phi_i}$. By linearity, the output of $A$ when the initial joint state of $V_2$ and $\hat P_2$ is $\rho_{\reg{TS}} = \sum_i p_i \proj{t_i}_\reg{T} \otimes \proj{\Phi_i}_\reg{S}$ is of the form
\begin{equation*}
\sigma = \sum_i p_i \: E^{\hat P_2}(\proj{\Phi_i'}) \otimes E^{\hat P_2}(\proj{\Phi_i'}) \;,
\end{equation*}
where the states $\ket{\Phi_i'}$ are such that $\langle V_2(x) , \hat P_2(x) \rangle_{\ket{t_i}_{\reg{T}} \otimes \ket{\Phi_i'}_{\reg{S}}} = \langle V_2(x) , \hat P_2(x) \rangle_{\ket{t_i}_{\reg{T}} \otimes \ket{\Phi_i}_{\reg{S}}}$ for each $i$.

For notational convenience, we define (for each $i$) $\sigma_i \equiv E^{\hat P_2}(\proj{\Phi_i'}) \otimes E^{\hat P_2}(\proj{\Phi_i'})$, so that $\sigma = \sum_i p_i \sigma_i$. We also define $\sigma_{i,1}$ and $\sigma_{i,2}$ to be the two states respectively represented by the first and second terms in the tensor product $\sigma_i = E^{\hat P_2}(\proj{\Phi_i'}) \otimes E^{\hat P_2}(\proj{\Phi_i'})$, so that $\sigma_i = \sigma_{i,1} \otimes \sigma_{i,2}$.

Since linearity also implies that $\langle V_2(x) , \hat P_2(x) \rangle_{\rho_\reg{TS}} = \langle V_2(x) , \hat P_2(x) \rangle_{\rho'_\reg{TS}}$, where $\rho'_\reg{TS} = \sum_i p_i \proj{t_i}_\reg{T} \otimes \proj{\Phi_i'}_\reg{S}$, and since we have by assumption that $\langle V_2(x) , \hat P_2(x) \rangle_{\rho_\reg{TS}} > \kappa$, it follows that $\langle V_2(x) , \hat P_2(x) \rangle_{\rho'_\reg{TS}} > \kappa$. The definition of the extractor $E$ from Definition \ref{def:aap-security} gives us
\begin{equation*}
\mathrm{Pr}\Big[ \mR^{\mO_\mF}_{\lambda x} \Big( E^{\hat P_2} \Big( \sum_i p_i \proj{\Phi_i'} \Big) \Big) = 1\Big] > 1 - \delta\;.
\end{equation*}
By linearity,
\begin{equation*}
\sum_i p_i \: \mathrm{Pr}\Big[ \mR^{\mO_\mF}_{\lambda x} \Big( E^{\hat P_2} ( \proj{\Phi_i'} ) \Big) = 1\Big] > 1 - \delta\;.
\end{equation*}
Hence, for $j \in \{1,2\}$,
\begin{equation*}
\sum_i p_i \: \mathrm{Pr}\Big[ \mR^{\mO_\mF}_{\lambda x} ( \sigma_{i,j} ) = 1\Big] > 1 - \delta\;.
\end{equation*}
Then, by a union bound,
\begin{align*}
\sum_i \,p_i\,\mathrm{Pr}[\big(\mR^{\mO_\mF}_{\lambda x}\big)^{\otimes 2}(\sigma_i) = 1] = \mathrm{Pr}[\big(\mR^{\mO_\mF}_{\lambda x}\big)^{\otimes 2}(\sigma) = 1] &> 1 - 2\delta\;,
\end{align*}
as desired. 
\end{proof}

\subsubsection*{Discussion}

The most easily apparent application of Proposition \ref{prop:no-nondestructive} is that it precludes the existence of a \emph{perfectly} nondestructive, information-theoretically secure agree-and-prove protocol with a classical verifier for a quantum money scenario. Even though the statement of the lemma makes multiple restrictive assumptions, it is still applicable to natural scenarios such as the scenarios for quantum money given in Section~\ref{sec:money-poqk}.\footnote{One of these assumptions is that the setup oracle $\mO_\mF$ for the scenario $(\mF, \mC, \mR)$ does not keep state during the prove phase of the protocol execution (meaning that any query made to it by $\hat P_2$ or $V_2$ can be repeated with identical input-output behaviour). The reader might observe that the quantum money setup oracles $\mO_{\mF_M}$ (as we have defined them in Section \ref{sec:ap-scenario-qm}) do keep state, because they need to keep track of the number of money bills in circulation with a given identifier $\reg{id}$ (so that they do not give out clones of the same bill). However, these oracles only keep state with respect to $I$, the input generation algorithm, and not with respect to $\hat P$ or $V$, because only $I$ can access the function \texttt{getMoney}. Therefore, the assumptions of Proposition \ref{prop:no-nondestructive} are satisfied by the quantum money scenarios which we analyse in Sections \ref{sec:wiesner} and \ref{sec:subspaces}.}

\paragraph{Noninteractive classical proofs of quantum knowledge}
We remark that, as a corollary, Proposition \ref{prop:no-nondestructive} also precludes a \emph{noninteractive}, information-theoretically secure agree-and-prove protocol with a classical verifier for a quantum money scenario. (When we say `noninteractive' here, we mean that the honest verifier $V_2$ sends no messages in the prove phase of the protocol.) This is because, if a protocol is noninteractive in this sense, then there is a prover $\hat P$ for which the interaction between $\hat P_2$ and $V_2$ is automatically nondestructive. We compare this briefly to the more familiar classical case of proofs of knowledge for NP relations. While truly noninteractive \emph{zero-knowledge} proofs of knowledge for all NP relations are also impossible \cite{goldreich1994imposs}, in the NP case it is always trivially possible to define a noninteractive proof of knowledge in which the prover simply sends the witness to the verifier. Moreover, it is possible to construct noninteractive zero-knowledge proofs of knowledge for NP relations in which $V_2$ sends no messages if a trusted setup assumption, such as a common random string \cite{blum1991nizk} or a random oracle, is allowed. In the quantum money setting, we have shown that any kind of noninteractive classical proof of quantum knowledge is impossible (zero-knowledge or not), and moreover that adding a trusted setup cannot help: interaction in the prove phase is necessary in order to preserve security.

\paragraph{A robust version of Proposition \ref{prop:no-nondestructive}}
It is natural to ask if Proposition \ref{prop:no-nondestructive} can be extended to the case where the protocol is not perfectly nondestructive but has the property that each execution only damages the prover's state by a small amount. It is also natural to ask how far the claim extends to protocols with computational security.

\noindent \emph{Notation. \:\:} Before we initiate a discussion of these subjects, we establish some notation for convenience. Suppose there are $n$ rounds in the protocol $\mK$. Let $m_1, \dots, m_n$ be the $n$ messages sent by the verifier in rounds 1 to $n$. We reinterpret the unitary $\hat P_{2, \lambda x}$ that is applied to registers $\textsf{S}$ and $\textsf{N}$ by the prover in round $i$, together with the standard basis measurement of the network register that follows it, as a unitary $U_{m_i}$ on register \textsf{S} followed by the application of a projective (Hermitian)\footnote{This measurement can also be taken to be a standard basis measurement; we use this notation only to emphasise the projectivity of $M_{m_i}$.} measurement operator $M_{m_i}$ on \textsf{S} that produces a classical measurement outcome which the prover then copies into \textsf{N}. Note that, in a perfectly nondestructive protocol, the state which results from applying $U_{m_1}$ to the prover's initial state $\ket{\Phi}$ (assuming for simplicity of presentation, as we did in part of the proof of Proposition \ref{prop:no-nondestructive}, that the prover's initial state is pure), lies in the eigenspace of $M_{m_1}$ with eigenvalue $r_1(m_1)$, where $r_1$ is the prover's (deterministic) message in round 1. Similarly, $U_{m_2} U_{m_1} \ket{\Phi}$ lies in the eigenspace of $M_{m_2}$ with eigenvalue $r_2(m_1, m_2)$, and so on. $\hat P_2$ running on a state $\ket{\Phi'}$ which satisfies this property for all $n$ measurement operators $M_{m_1}, \dots, M_{m_n}$, for all possible sequences $m_1, \dots, m_n$ output by the honest verifier and their corresponding responses $r_1(m_1), \dots, r_n(m_1, \dots, m_n)$, will have the same response signature $r(m)$ as $\hat P_2$ running on $\ket{\Phi}$. We could therefore have formulated the objective of the procedure $B$ from Claim \ref{claim:nondestructive-subclaim} as that of finding a state $\ket{\Phi'}$ of the description given in the previous sentence. \\

\noindent \emph{Obstacles. \:\:} To consider formulating a `robust' version of Proposition \ref{prop:no-nondestructive} (which is applicable to a protocol where each execution only damages the prover's state by a small amount), it is necessary to overcome two obstacles. Firstly, the prover's state prior to the measurement it performs in the $i$th round will no longer be an exact eigenstate of the measurement operator $M_{m_i}$; and, secondly, the cloning adversary may not be able to repeat the protocol an arbitrary number of times in order to learn the prover's likely response to each verifier message, because the prover's state could become `worn out' before the cloning adversary has seen every possible sequence of messages from the verifier. The first obstacle can be overcome by observing that, for all $i$, the prover's pre-measurement state in round $i$ is an \emph{approximate} eigenstate of the operator $M_{m_i}$. Given that a state exists (namely $\ket{\Phi}$, the prover's initial state) such that $U_{m_1} \ket{\Phi}$ is an approximate eigenstate of $M_{m_1}$ with eigenvalue $r_1$ (where $r_1$ is the prover's most probable response to $m_1$), $U_{m_2} U_{m_1} \ket{\Phi}$ is an approximate eigenstate of $M_{m_1}$ with eigenvalue $r_2$, and so on, an all-powerful cloning adversary should be able to find this state $\ket{\Phi}$ (or a state satisfying the same criteria, if more than one exists) if it knows $r(m)$, and if the prover's pre-measurement states are sufficiently close to true eigenstates. That portion of the proof then goes through almost as before, with some additional parameters that capture how far the prover's pre-measurement states are from true eigenstates of the operators $M_{m_i}$. (Of course, a quantitative analysis would be more delicate, and we omit it. See e.g.~\cite{harrow2017sequential} for a possible starting point.)

The second obstacle can be overcome if the cloning adversary will see all possible sequences of verifier messages (or at least all possible sequences that occur with high probability) before the prover's state becomes worn out. This can occur if the min-entropy of the verifier's message sequence is very small compared to $\log (1/\epsilon)$, where $\epsilon$ is the `amount' by which the prover's state is damaged with each execution. As such, we anticipate that a `robust' version of Proposition \ref{prop:no-nondestructive} exists for any `damage parameter' $\epsilon$ that is much smaller than $2^{-H}$, where $H$ is the min-entropy of the verifier's message sequence. Note that this means that the min-entropy of the verifier's challenges in a protocol would determine some parameter capturing the minimal amount of damage that any prover succeeding in the protocol must do to its initial state. For example, in a money verification protocol like the one we analyse in Section \ref{sec:subspaces}, in which the verifier's only message is a single bit long, it is impossible to pass without completely destroying the money state. This is one of the properties of subspace verification that \cite{ben2016quantum} make use of in showing that the subspace money scheme can be modified into a quantum tokens scheme, in which tokens must not only be uncloneable but also equipped with a `verifiable destruction' procedure that both destroys a token and produces a certificate which can be shown afterwards to prove that the token has indeed been consumed and is no longer useable. \\

\noindent \emph{Reducing the min-entropy of the verifier's challenges. \:\:} We note that, if the prover is computationally bounded, it is possible to reduce the min-entropy of the verifier's challenges using a PRG. More specifically, suppose there is a protocol $\mK$ for a scenario $(\mF, \mC, \mR)$ for which a prover $\hat P$ is a nondestructive prover. Suppose, further, that $\hat P$ is computationally bounded. Then there exists a protocol $\mK'$ for the same scenario in which the honest verifier $V'$ chooses its random challenges using a PRG, but otherwise behaves the same way as the honest verifier $V$ for $\mK$. $\hat P$ is also a nondestructive prover for $\mK'$, since it (being computationally bounded) cannot distinguish between the two protocols. By the argument set forth in the previous two paragraphs, the interaction $(\hat P, V')$ must be destructive to some degree which is lower-bounded by a function of the min-entropy of verifier messages in $\mK'$. The prover's operations must be destructive to the same degree in $\mK$ (since otherwise it could distinguish between the two protocols by playing verification many times and noting when the verifier starts to reject). It is then possible to increase the (moral) lower bound on the `damage parameter' (alternatively, the moral upper bound on the damage parameter below which the impossibility proof goes through) to $2^{-H^\nu}$ for any constant $\nu > 0$, where $H$ is the min-entropy of verifier challenges in $\mK$. This is because, for any constant $\nu > 0$, it is possible to build an efficiently evaluable PRG (from appropriate assumptions) which takes the uniform distribution over strings of length $H^\nu$ to a distribution computationally indistinguishable from the uniform distribution over strings of length $H$, the latter of which can in turn be used by the verifier to sample from the distribution from which its challenges are drawn. Note that, in this scenario, the cloning adversary $A$ which a nondestructive prover implies is still in general not efficient; as such, this argument is only of interest in the case where the witness states are inherently uncloneable even to unbounded adversaries (like Wiesner states, for example) but we are only interested in ruling out \emph{efficient} nondestructive provers for $\mK$.

\paragraph{Application of Proposition \ref{prop:no-nondestructive} to QMA}
We also briefly discuss the extension of Proposition \ref{prop:no-nondestructive} to the case of verifying QMA. In such a setting, it is not interesting to state that an unbounded adversary is able to produce arbitrarily many witness states, since an unbounded party can generate witnesses for any QMA yes-instance given only the instance. However, the cloning adversary $A$ is efficient if the min-entropy of the verifier's message sequence is only logarithmic instead of polynomial. (If the verifier chooses its messages uniformly at random, a logarithmic min-entropy means there are only polynomially many possible message sequences the verifier could choose from.) As such, we conclude that, unless BQP = QMA, nondestructive secure agree-and-prove protocols for verifying QMA must have verifiers which issue a superpolynomial number of possible challenges.

\subsection{Proofs of quantum knowledge are also quantum money verification protocols}
\label{sec:money-ver}

The main result in this section is Proposition \ref{prop:pok-implies-money}, which formalises the intuition that the property of being a `proof of quantum knowledge' for a quantum money AaP scenario is stronger than the property of being a quantum money verification protocol: the latter implies that no adversary can pass verification twice given access to only one money bill, and the former formalises the notion that no adversary can pass even once unless it is possible to efficiently compute the money bill by interacting with said adversary.

\paragraph{Formalising interactive quantum money verification.}
Before we can state Proposition \ref{prop:pok-implies-money}, which intuitively says that any agree-and-prove protocol for a quantum money scenario is also a quantum money verification protocol for that scenario, we must formalise what it means to `be a quantum money verification protocol'. There is a standard definition of this for \emph{passive} verification procedures, but we must consider how we will generalise it to interactive procedures.

The standard quantum money security definition is given in Definition \ref{def:qmoney}. Implicit in that definition is the assumption that the verification algorithm \texttt{Ver} is an isometry having a single designated output qubit; we will call such verification procedures \emph{passive} verification procedures. In the below, we will use the notation $\mathtt{Ver}_\reg{id}(\rho)$ as a shorthand for the measurement outcome obtained by applying the isometry $\mathtt{Ver}$ to $\rho$ (fixing the first argument to $\mathtt{Ver}$ to $\reg{id}$, and omitting all other arguments to $\mathtt{Ver}$ except the last) and then measuring the designated output qubit.

\begin{remark}
\label{remark:passive-nondestructive}
Passive verification procedures often have the property that they perturb the real money state $\ket{\$}_{\mathsf{id}}$ only a negligible amount, so that verification can be performed many times on the same state (which is a requirement for public-key quantum money). For example, the verification procedure originally proposed by Wiesner for Wiesner's quantum money (which we describe at the start of Section \ref{sec:wiesner}) is nondestructive, as is the oracle-based verification which Aaronson and Christiano \cite{aaronson2012quantum} propose for their subspace-based scheme.
\end{remark}

In words, we can formulate a `no-cloning experiment' for a passive verification procedure $\mathtt{Ver}$ as follows: the prover sends a state $\sigma_{AB}$ on $n$-qubit registers $A$ and $B$ to the verifier; the verifier applies $\mathtt{Ver}_{\reg{id}} \otimes \mathtt{Ver}_{\reg{id}}$, and accepts if and only if both copies of $\mathtt{Ver}_{\reg{id}}$ accept. Under the standard definition, $\mathtt{Ver}$ is said to be a `quantum money verification procedure' if this no-cloning experiment can be won only with probability less than the cloning parameter $\mu_M$. \footnote{This is implicit in condition 2 of Definition \ref{def:qmoney}.} We now introduce a natural generalisation of this definition to the case where the verification procedure may involve interaction.

\begin{definition}[Interactive quantum money verification procedure]
\label{def:interactive-money-protocol}
Let $\lambda$ be a security parameter, and let $(\mF_M, \mC_M, \mR_M)$ be a quantum money scenario (as defined in Section \ref{sec:ap-scenario-qm}). A protocol $\mK = (I, P_1, P_2, V_1, V_2)$ for $(\mF_M, \mC_M, \mR_M)$ (see Definition \ref{def:aap-security}) is an \emph{interactive verification procedure} with completeness $c$ and cloning error $s$ for the quantum money scenario $(\mF_M, \mC_M, \mR_M)$ if the following two conditions hold.\footnote{This definition is distinct from the definition of security of a protocol $\mK$ described in Definition~\ref{def:aap-security}. While the latter is a general security definition that can apply to any AaP scenario, the present definition is a new definition tailored to quantum money that is a natural extension of the standard ``no-cloning''-based definition recalled in Section~\ref{def:qmoney}. Our aim in this section, in fact, is to show that (qualitatively speaking) Definition \ref{def:aap-security} implies Definition \ref{def:interactive-money-protocol}.}
 (Probabilities in these conditions are calculated assuming that $r$, the randomness that $\mathtt{Bank}$ takes as input, is drawn from the uniform distribution. See Definition \ref{def:qmoney} for a definition of \texttt{Bank}.)

\begin{enumerate}
\item Completeness: The protocol $\mK$ has completeness $c$ according to Definition \ref{def:aap-security}.

\item Soundness: let $\hat P_A = (\hat P_{A,1}, \hat P_{A,2})$ and $\hat P_B = (\hat P_{B,1}, \hat P_{B,2})$ be two provers for $\mK$, and let $\hat I$ be an algorithm that generates inputs for both of them. We define a no-cloning game as follows:

\begin{enumerate}
\item $\hat I$ prepares a (potentially entangled) joint state $\rho_{AB}$. During this phase, $\hat I$ is allowed to call the oracle $\mO_{\mF_M}$ using the role $I$. At the end of this phase, $\hat I$ gives $\rho_A = \Tr_B(\rho_{AB})$ to $\hat P_A$, and $\rho_B = \Tr_A(\rho_{AB})$ to $\hat P_B$.
\item Holding $\rho_A$, $\hat P_A$ executes $\mK$ with a copy of the honest verifier of $\mK$, the latter of which we denote by $V_A = (V_{A,1}, V_{A,2})$. Likewise, holding $\rho_B$, $\hat P_B$ also executes $\mK$ with a copy of the honest verifier of $\mK$, which we denote by $V_B = (V_{B,1}, V_{B,2})$. During the protocol executions, $\hat P_A$ and $\hat P_B$ are not allowed to communicate, but they are allowed to call the oracle $\mO_{\mF}$ using the role $P$.
\item $\hat P_A$ and $\hat P_B$ win the game if and only if the same instance $x$ is agreed upon in the agree phases of both copies of $\mK$ played in step 2, and in addition both $V_A$ and $V_B$ output 1 at the end of the game.
\end{enumerate}
We say that the protocol $\mK$ for the quantum money scenario $(\mF_M, \mC_M, \mR_M)$ is \emph{secure against cloning} with cloning error $s$ if any pair of provers $(\hat P_A, \hat P_B)$ with any input generation algorithm $\hat I$ wins the no-cloning game with probability less than $s$.
\end{enumerate}
\end{definition}

\begin{remark}
We comment on the need for the no-communication assumption in Definition \ref{def:interactive-money-protocol}. To illustrate the reason for it, we consider implementing a passive verification procedure $\mathtt{Ver}_{\reg{id}}$ within the framework of Definition \ref{def:interactive-money-protocol}. In this case we could define $\mK$ as follows: $\mK$ is simply the protocol where the honest prover sends the state $\ket{\$}_{\mathsf{id}}$ to the verifier, and the verifier: executes $\mathtt{Ver}_{\reg{id}}$ on it; measures the designated output qubit of $\mathtt{Ver}_{\reg{id}}$ and records the outcome; runs $\mathtt{Ver}_{\reg{id}}$ backwards on the post-measurement state and sends the result back to the prover; and finally outputs the measurement outcome it recorded earlier. (Ordinarily, it is assumed that the verifier does not return the post-measurement state to the prover after executing a verification procedure like $\mathtt{Ver}_{\reg{id}}$; however, as we will explain in the paragraph after next, having the verifier return the prover's state creates a closer analogy with the classical verification case.)

Note that, in this special case, the no-communication assumption in the soundness experiment is what guarantees that the state $\sigma_A$ which $\hat P_A$ submits to its verifier and the state $\sigma_B$ which $\hat P_B$ submits to its verifier exist in different Hilbert spaces. If the two provers are allowed to pass messages, then the soundness statement does not capture what we would wish it to capture: $\hat P_A$ could submit $\ket{\$}_{\mathsf{id}}$ to its verifier first, and then (assuming that verification does not disturb $\ket{\$}_{\mathsf{id}}$ by much---which, as we noted in Remark \ref{remark:passive-nondestructive}, is often true for passive verification procedures) give the post-measured version of $\ket{\$}_{\mathsf{id}}$ to $\hat P_B$, who would then use it in its interaction with its own verifier. This strategy wins with high probability but does not involve attempting to `clone' $\ket{\$}_{\mathsf{id}}$ in the sense in which we usually think of cloning.

As we have mentioned, it is usually assumed that the verifier does not return the post-measurement state to the prover after executing a verification procedure like $\mathtt{Ver}_{\reg{id}}$, because for some money schemes doing so creates a security risk, due to attacks modelled on the Elitzur-Vaidman bomb test \cite{elitzur1993quantum}. However, when the verification procedure is totally classical, the prover never sends the money bill (or a forgery of one) to the verifier: the prover measures the bill for itself, and so the verifier does not have the option of `not returning' the state. In all of the classical verification protocols that we know of, a genuine quantum money state becomes near totally corrupted after classical verification, which serves the same purpose as does the verifier refusing to return the state (i.e. to make sure the prover cannot use the state for a second verification, so that the message-passing and the no-communication models considered in the previous paragraph are equivalent). However, we cannot guarantee that this is true for all possible classical verification protocols, and so we adopt the no-communication model in order to capture the requirement that the prover does not `reuse' its state by playing the two verification games sequentially. \znote{note to myself: on second thoughts, though, shouldn't all classical quantum money protocols be destructive, because of the Elitzur-Vaidman attack? Our `nondestructive lemma' for PoQKs (which are stronger) only works right now for perfectly nondestructive protocols, and we think it works for protocols that damage the state by an exponentially or subexponentially small amount each time, but it doesn't say anything about inverse-poly damage in general. I wonder if we could do any better for QM protocols specifically, or maybe also for PoQKs in general, using ideas from Elitzur-Vaidman.}
\end{remark}

We are now ready to formally present Proposition~\ref{prop:pok-implies-money}, which captures the fact that a secure agree-and-prove protocol can be used as a quantum money verification procedure. Note that, in Proposition \ref{prop:pok-implies-money} below, and in Proposition \ref{prop:money-amp} after that, the condition on $\delta$ (the extraction distance parameter of $\mK$) can informally be thought of as a requirement that, for all provers passing with sufficiently high probability in $\mK$, the extraction distance which the extractor can achieve when extracting from such provers is bounded above away from some constant $\delta_0$. The reason for this requirement can intuitively be understood through the observation that the states $\sigma_A$ and $\sigma_B$ which result from running the extractor twice on the two provers $\hat P_A$ and $\hat P_B$ from the no-cloning experiment of Definition \ref{def:interactive-money-protocol} (in order to show a reduction from a pair of provers $\hat P_A$ and $\hat P_B$ that pass with high probability in said experiment to a cloning adversary) might be entangled, and it could be the case that, even though the marginal probability of $\sigma_A$ or $\sigma_B$ passing verification is high, the event that $\sigma_B$ passes conditioned on $\sigma_A$ passing is very unlikely (or vice versa). Bounding $\delta$ above by a constant ensures that the fidelity between either one of the extractor's outputs ($\sigma_A$ and $\sigma_B$) and a real, unentangled money bill is high enough that this is not a critical issue.

\begin{proposition}
\label{prop:pok-implies-money}
Let $\lambda$ be a security parameter, and let $(\mF_M, \mC_M, \mR_M)$ be a quantum money scenario (as defined in Section \ref{sec:ap-scenario-qm}). Let $\mathcal{K} = (I, P_1, P_2, V_1, V_2)$ be a protocol for a quantum money agree-and-prove scenario $(\mF_M, \mC_M, \mR_M)$. Let $\mu_M$ be the cloning parameter for the quantum money scenario $(\mF_M, \mC_M, \mR_M)$.

Define $\delta_0 \equiv \frac{2-\sqrt{3}}{2}$. Suppose there is a function $\kappa(\cdot)$ such that $\mathcal{K}$ is a $(c=1 - \mathsf{negl}(\lambda), \delta)$--secure protocol with knowledge error $\kappa(\lambda)$ and extraction distance $\delta$ (the latter of which we assume is a function of the prover's success probability $p$ as well as the security parameter $\lambda$) such that $\delta_0 - \delta(p(\lambda), \lambda) > \frac{1}{2} \frac{\mu_M(\lambda)}{\epsilon \cdot \kappa(\lambda)}$ for arbitrary $\epsilon > 0$ and sufficiently large $\lambda$ whenever $p$ is a function such that $p(\lambda) > \kappa(\lambda)$ for sufficiently large $\lambda$.

Then $\mathcal{K}$ is an interactive quantum money verification protocol for the money scenario $(\mF_M, \mC_M, \mR_M)$ (in the sense defined in Definition \ref{def:interactive-money-protocol}) with completeness $1 - \mathsf{negl}(\lambda)$ and cloning error $(1+\epsilon)\kappa$.
\end{proposition}

\begin{proof}
We prove Proposition \ref{prop:pok-implies-money} by contradiction: we assume that there are provers $\hat P_A$ and $\hat P_B$ who win the soundness experiment described in Definition \ref{def:interactive-money-protocol} with probability greater than $(1+\epsilon)\kappa$ for some $\epsilon>0$, and then we show a reduction (given that $\mK$ is a secure agree-and-prove protocol for $(\mF_M, \mC_M, \mR_M)$) to an adversary who can produce a state $\sigma_{AB}$ such that
\begin{align}
\label{eq:success}
\mathrm{Pr} [
\mathcal{R}^{\mathcal{O}_{\mathcal{F}}}(1^\lambda, \mathsf{id}, \Tr_B(\sigma_{AB})) = 1
\text{ and }
\mathcal{R}^{\mathcal{O}_{\mathcal{F}}}(1^\lambda, \mathsf{id}, \Tr_A(\sigma_{AB})) = 1
] > \mu_M(\lambda)
\end{align}
for some $\reg{id} \in B_M$, where $B_M$ is the database of valid bills associated with $(\mF_M, \mC_M, \mR_M)$.

Consider an adversary $\mA$ for no-cloning which gets black-box access to $\hat P_A$, $\hat P_B$, $V_1$ and $E$ (where $E$ is the extractor guaranteed by Definition \ref{def:aap-security}) as ITMs, and which proceeds as follows. $\mA$ firstly runs the agree phase of the protocol $\mK$ against both $\hat P_A$ and $\hat P_B$ (using its black-box access to $\hat P_A$, $\hat P_B$, and $V_1$), and obtains two communication transcripts in addition to (if both agree phases succeeded) two instances $x_A$ and $x_B$. If $x_A \neq x_B$, or if either agree phase failed, $\mA$ aborts. Otherwise, let $x_A = x_B \equiv \reg{id}$.

We have assumed (toward contradiction) that the provers $\hat P_A$ and $\hat P_B$ win the soundness experiment described in Definition \ref{def:interactive-money-protocol} with probability greater than $(1+\epsilon)\kappa$. This means the probability that 1) $x_A = x_B = \reg{id}$ for some valid $\reg{id} \in B_M$ after the agree phase of $\mK$ \emph{and} 2) both $\hat P_{A,2}$ and $\hat P_{B,2}$ pass in the prove phase of $\mK$ (after the agree phase resulting in $\reg{id}$ takes place) is at least $(1+\epsilon)\kappa$. Let $p_\reg{id}$ be the probability that $\reg{id} = x_A = x_B$ is agreed on after the agree phase of $\mK$, and let $q_\reg{id}$ be the probability that both $\hat P_{A,2}$ and $\hat P_{B,2}$ pass in the prove phase of $\mK$, conditioned on $\reg{id}$ having been agreed on in the agree phase. Then our assumption can be written as $\sum_\reg{id} p_\reg{id} q_\reg{id} > (1+\epsilon)\kappa$.

We now proceed to show the following claim as a stepping stone:

\begin{claim}
\label{claim:pok-implies-money-subclaim}
If $q_\reg{id} > \kappa$, then the probability that $\mA$ is successful in creating a state $\sigma_{AB}$ satisfying equation \eqref{eq:success}, conditioned on $x_A = x_B = \reg{id}$ having been agreed on in the agree phase, is at least $\frac{\mu_M(\lambda)}{\epsilon \cdot \kappa(\lambda)}$.
\end{claim}

\begin{proof}

After it has run the agree phase of $\mK$ as described directly below equation \eqref{eq:success}, $\mA$ runs the extractor $E$ on both $\hat P_{A,2}$ and $\hat P_{B,2}$, and places the output of the first extraction (the one acting on $\hat P_{A,2}$) in register $A$, and the output of the second extraction in register $B$. The result is a joint state $\sigma_{AB}$ which may be entangled across registers $A$ and $B$. For notational convenience, we define $\sigma_A = \Tr_B(\sigma_{AB})$, and $\sigma_B = \Tr_A(\sigma_{AB})$.

\begin{enumerate}
\item Given that $q_\reg{id} > \kappa$, we can consider $\hat P_{A,2}$ to be a prover for the prove phase of $\mK$ succeeding with probability $> \kappa$ on instance $\reg{id}$ when it is provided with some auxiliary input (i.e. the state which it shares with $\hat P_B$ conditioned on $\reg{id}$ having been agreed on). Then, since $\mathcal{K}$ has extraction distance guarantee $\delta$, it follows that the state $\Tr_B(\sigma_{AB}) = \sigma_A$ is such that $\mathrm{Pr} [
\mathcal{R}^{\mathcal{O}_{\mathcal{F}}}(1^\lambda, \mathsf{id}, \sigma_A) = 1 ] > 1-\delta$.

\item Suppose (without loss of generality) that the adversary $\mA$ submits $\sigma_A$ to $\mR$ for verification first. Let the state inside registers $A$ and $B$ conditioned on $\mR$ accepting $\sigma_A$ be $\sigma_{AB,acc}'$. Since $\mR$ accepts with probability at least $1-\delta$, by the gentle measurement lemma, $\frac{1}{2} \|\sigma_{AB} - \sigma_{AB,acc}' \|_1 \leq \sqrt{\delta}$. Then, by the convexity of the trace distance, $\frac{1}{2} \|\sigma_{AB} - \sigma_{AB}' \|_1 \leq (1-\delta)\sqrt{\delta} + \delta \leq 2 \sqrt{\delta}$.

\item Recall that $\sigma_B = \Tr_A(\sigma_{AB})$ is the state that resided in register $B$ before step 2 in this list occurred (i.e. before $\mR$ was performed on register $A$). Let $\sigma_B'$ be the state in register $B$ after $\mR$ was performed on register $A$. The adversary $\mA$ now submits $\sigma_B'$ to $\mR$ for quantum money verification. By a similar analysis to that in item 1, $\mR$ would have accepted $\sigma_B$ with probability at least $1-\delta$. Furthermore, by the contractivity of the trace distance, $\frac{1}{2}\| \sigma_B - \sigma_B' \|_1 \leq \frac{1}{2} \| \sigma_{AB} - \sigma_{AB}' \|_1 \leq 2 \sqrt{\delta}$. As such, the probability that $\mR$ accepts $\sigma_B'$ is at least $1 - \delta - 2\sqrt{\delta}$.
\end{enumerate}

By a union bound, the probability that $\mR$ accepts both $\sigma_A$ and $\sigma_B'$ is at least $1 - 2\delta - 2\sqrt{\delta}$. Let $\delta_0 - \delta \equiv \Delta$. Then $1 - 2\delta - 2\sqrt{\delta} \geq 2 \Delta$, so if $\Delta > \frac{1}{2} \frac{\mu_M(\lambda)}{\epsilon \cdot \kappa(\lambda)}$, the probability that $\mA$ is successful conditioned on $q_\reg{id}$ having been agreed on is at least $\frac{\mu_M(\lambda)}{\epsilon \cdot \kappa(\lambda)}$.

\end{proof}

Now, recall our assumption that $\sum_\reg{id} p_\reg{id} q_\reg{id} > (1+\epsilon)\kappa$. Define the random variable $X \equiv 1 - q_\reg{id}$, and define $\Es{\reg{id}} q_\reg{id} \equiv \sum_\reg{id} p_\reg{id} q_\reg{id} > (1+\epsilon)\kappa$. By a Markov bound, we have
\begin{gather}
\mathrm{Pr}[X \geq 1 - \kappa] \leq \frac{1 - (1+\epsilon)\kappa}{1-\kappa} \leq 1 - \epsilon\kappa. \nonumber \\
\implies \mathrm{Pr}[q_\reg{id} > \kappa] > \epsilon \kappa. \label{eq:qid}
\end{gather}

Define the indicator variable $\mathbb{1}_{(q_\reg{id} > \kappa)} \equiv \begin{cases} 1 & q_\reg{id} > \kappa \\ 0 & \text{otherwise}\end{cases}$. Then we have
\begin{align*}
\mathrm{Pr}[\text{$\mA$ is successful}] &= \sum_\reg{id} p_\reg{id} \: \mathrm{Pr}[\text{$\mA$ is successful on $\reg{id}$}]
\\ &> \sum_\reg{id} p_\reg{id} \mathbb{1}_{(q_\reg{id} > \kappa)} \frac{\mu_M(\lambda)}{\epsilon \cdot \kappa(\lambda)}, \quad \text{using Claim \ref{claim:pok-implies-money-subclaim}}
\\ &= \frac{\mu_M(\lambda)}{\epsilon \cdot \kappa(\lambda)} \mathrm{Pr}[q_\reg{id} > \kappa]
\\ &> \mu_M(\lambda), \quad \text{using \eqref{eq:qid}.}
\end{align*}
This shows the desired contradiction with the cloning security of $(\mF_M, \mC_M, \mR_M)$. The claim follows.

\end{proof}

\paragraph{Amplification.} 
The bound on the maximum success probability of a cloning adversary which comes out of Proposition \ref{prop:pok-implies-money} is linear in the knowledge error of the agree-and-prove protocol which is being used as a money verification protocol. A typical expectation in a quantum money scenario is that any cloning adversary will only succeed with negligible probability (see Definition 11 of \cite{aaronson2012quantum}, for example), but in our analyses of quantum money verification protocols in Sections \ref{sec:wiesner} and \ref{sec:subspaces}, we only get constant (and not negligible) knowledge error. Therefore, in Proposition \ref{prop:money-amp}, we present a sequential amplification lemma which shows that a money scheme equipped with a classical agree-and-prove protocol that has constant knowledge error (and other parameters similar to those which we obtain in Sections \ref{sec:wiesner} and \ref{sec:subspaces}) can be modified into a money scheme which admits only cloning adversaries that pass with negligible probability. \\

\begin{proposition}
\label{prop:money-amp}
Let $\lambda$ be a security parameter, and let $(\mF_M, \mC_M, \mR_M)$ be a quantum money scenario (as defined in Section \ref{sec:ap-scenario-qm}). Let $\mathcal{K} = (I, P_1, P_2, V_1, V_2)$ be a protocol for a quantum money agree-and-prove scenario $(\mF_M, \mC_M, \mR_M)$. Let $\mu_M$ be the cloning parameter for the quantum money scenario $(\mF_M, \mC_M, \mR_M)$.

Define $\delta_0 \equiv \frac{2-\sqrt{3}}{2}$, as in Proposition~\ref{prop:pok-implies-money}. Let $\mK$ be a $(c=1-\mathsf{negl}(\lambda), \delta)$--secure protocol for a quantum money agree-and-prove scenario $(\mF_M, \mC_M, \mR_M)$ with extraction distance $\delta(p, \lambda) = \delta(p) = (1-p)^{\nu}$ for some constant $\nu > 0$, and with constant knowledge error $\kappa$ such that $\delta_0 - \delta(\kappa) = \delta_0 - (1-\kappa)^\nu = \text{const.} > \frac{1}{2}\frac{\mu_M(\lambda)}{\epsilon \cdot \kappa}$ for arbitrary $\epsilon > 0$ and sufficiently large $\lambda$. Then, for any polynomial $n(\lambda)$, there exists a quantum money scenario $(\mF_M', \mC_M', \mR_M')$ and a protocol $\mK'$ for this scenario such that $\mK'$ is an interactive quantum money verification protocol for the money scenario $(\mF_M, \mC_M, \mR_M)$ (in the sense defined in Definition \ref{def:interactive-money-protocol}) with completeness $1 - \mathsf{negl}(\lambda)$ and soundness $(1+\epsilon)^n\kappa^n$.
\end{proposition}

\begin{proof}
Let $B_1, \dots, B_n$ be $n$ identical copies of the database maintained by $\mF_M$. $B'$, the database for the new quantum money scenario, is defined in the following way:
\begin{gather*}
B' = \{ ( r^{(1)}, \dots, r^{(n)} ) \: : \: r^{(i)} \in B_i \}.
\end{gather*}
$\mC'(1^\lambda, x)$ outputs 1 iff $x = (\mathsf{id}^{(1)}, \dots, \mathsf{id}^{(n)})$ such that $\mathsf{id}^{(i)}$ is a valid \textsf{id} in $B_i$, and $\mR'$ is defined in the natural way (i.e. as a tensor product of $n$ copies of $\mR$ which accepts iff all the copies of $\mR$ accept). $\mK'$, meanwhile, is defined to be the protocol that consists of $n$ sequential runs of the protocol $\mK$, which we denote by $\mK^{(1)}, \dots, \mK^{(n)}$.

Consider the two entangled but noncommunicating provers $\hat P_A$ and $\hat P_B$ from the no-cloning experiment of $\mK'$. We can WLOG assume that $\hat P_A$ consists of $n$ provers $\hat P_A^{(1)}, \dots, \hat P_A^{(n)}$ such that each $\hat P_A^{(i)}$ receives a message from $\hat P_A^{(i-1)}$ in the form of a quantum auxiliary input $\sigma_i$, and such that $\hat P_A^{(i)}$ participates in the protocol $\mK^{(i)}$. We can assume the same thing of $\hat P_B$.

Consider $(\hat P_A^{(1)}, \hat P_B^{(1)})$. We can consider them a pair of entangled noncommunicating provers for the soundness experiment of $\mK$. We have chosen our parameters such that we can apply Proposition \ref{prop:pok-implies-money} to conclude that the probability that both $\hat P_A^{(1)}$ and $\hat P_B^{(1)}$ pass is less than $(1+\epsilon)\kappa$.

Now consider $(\hat P_A^{(2)}, \hat P_B^{(2)})$. Note that each of these provers is allowed to receive an arbitrary auxiliary quantum input. Let the (possibly entangled) joint state representing the input to these two provers be the input that they would have received from $(\hat P_A^{(1)}, \hat P_B^{(1)})$ \emph{conditioned on $(\hat P_A^{(1)}, \hat P_B^{(1)})$ succeeding}. This input is in general a mixed state; note that the distribution over pure states that this mixture represents can be modelled exactly by an unbounded input generation algorithm which runs the input generation algorithm that supplies inputs for $(\hat P_A^{(1)}, \hat P_B^{(1)})$, simulates the protocols $\langle \hat P_A^{(1)}, V_1^{(1)} \rangle$ and $\langle \hat P_B^{(1)}, V_2^{(1)} \rangle$ (implementing its own setup functionality oracles and simulating the verifiers), and finally measures the verifiers' output qubits, preserving the output of $(\hat P_A^{(1)}, \hat P_B^{(1)})$ if the outcome is $(1,1)$ and restarting otherwise.\footnote{There is a chance that the input generation algorithm we have described never terminates. We can address this problem by forcing it to halt after some large exponential number of steps and output fail at that stage if it has still not obtained a $(1,1)$ outcome. Then the chance that it outputs fail is exponentially small, so that the trace distance between the state the algorithm finally outputs and the conditioned state which we described is exponentially small. We will neglect this negligible term in our accounting going forward.} We then apply Proposition \ref{prop:pok-implies-money} in the presence of auxiliary input to conclude that the probability that both $\hat P_A^{(2)}$ and $\hat P_B^{(2)}$ pass, conditioned on $\hat P_A^{(1)}$ and $\hat P_B^{(1)}$ passing, is less than $(1+\epsilon)\kappa$. Then the probability that $\hat P_A^{(1)}$, $\hat P_B^{(1)}$, $\hat P_A^{(2)}$ and $\hat P_B^{(2)}$ all pass is at most $(1+\epsilon)^2\kappa^2$.

We can continue this argument inductively until the $n$th execution of $\mK$, replacing the conditioning on the success of $(\hat P_A^{(1)}, \hat P_B^{(1)})$ with conditioning on the success of all previous pairs of provers at each step. The conclusion is that the chance that $(\hat P_A^{(i)}, \hat P_B^{(i)})$ pass for all $i$ is less than $(1+\epsilon)^n\kappa^n$, which completes the proof.
\end{proof}

\section{Proofs of quantum knowledge for quantum money states}
\label{sec:money-poqk}

In this section we apply our notion of proofs of quantum knowledge to the problem of certifying quantum money. We give protocols for two constructions from the literature, Wiesner's quantum money in Section~\ref{sec:wiesner} and Aaronson and Christiano's public-key quantum money based on hidden subspaces in Section~\ref{sec:subspaces}. 

\subsection{PoQK for Wiesner money states}
\label{sec:wiesner}

Our first concrete example of an Agree-and-Prove scheme for a quantum property is a verification protocol for Wiesner's quantum money states. As we recalled in the introduction, quantum money states in Wiesner's scheme are $\lambda$-qubit states, with $\lambda \in \mathbb{N}$ a security parameter, such that each qubit is chosen from the set $\{\ket{0}, \ket{1}, \ket{+}, \ket{-}\}$. Any such state can be described  by $2\lambda$ classical bits; a typical classical description is the pair of strings $\$ = (v, \theta)\in\{0,1\}^{2\lambda}$, where the associated money state is
\begin{equation}\label{eq:wiesner-state}
\ket{\$}_{v, \theta} \,=\, \Big(\prod_i H_i^{\theta_i}\Big) \ket{v}\;,
\end{equation}
where $\ket{v} = \otimes_i \ket{v_i}$ and $H_i$ denotes a Hadamard on the $i$-th qubit. 
 In the notation of Definition~\ref{def:qmoney}, valid bills in this scheme are quadruples $(\mathsf{id, public, secret}, \ket{\$}_{\mathsf{id}})$ such that $\mathsf{id}$ is an arbitrary string, $\mathsf{public}$ is empty, $\mathsf{secret} = (v,\theta)\in\{0,1\}^\lambda\times \{0,1\}^\lambda$ and $\ket{\$}_{\mathsf{id}}=\ket{\$}_{v, \theta}$. The verification procedure $\texttt{Ver}(x, \textsf{public}, \textsf{secret}, \rho_W)$ parses $\textsf{secret}=(v,\theta)$ and measures each qubit of $\rho_W$ in the basis indicated by $\theta$. It accepts if and only if the outcomes obtained match $v$. This scheme clearly has completeness parameter $1$, and it was shown in~\cite{molina2012optimal} that its cloning parameter is $\mu_W(\lambda)=(3/4)^\lambda$. 

\begin{scenario}
\label{scenario:wiesner}
We instantiate the generic AaP scenario for quantum money described in Section \ref{sec:ap-scenario-qm} as follows:
\begin{itemize}
\item Setup functionality $\mathcal{F}_W(1^\lambda)$:
\begin{itemize}
\item $\mathtt{init}_W$ initializes a random oracle $H$ taking strings of length $2\lambda$ to strings of length $2\lambda$.\footnote{Formally the oracle is implemented in the standard way, recalled in Section~\ref{sec:oracles}.} 
%(Recall from Section \ref{sec:ap-scenario-qm} that $r$ is quantified over when we calculate an adversary's success probability in the no-cloning experiment associated with this money scheme. Formally, we take $r \in \{0,1\}^{2\lambda \cdot 2^{2\lambda}}$ here to be the string encoding the truth table of the random oracle $H_r$.) 
In addition, it initializes an empty database $B_W$ that is destined to contain a record of all quantum money bills in circulation. 
\item  $\mathcal{O}_{\mF_W}(I, \texttt{getId})$: generates $\reg{id} \in \{0,1\}^{2\lambda}$ uniformly at random. Sets $(v,\theta) = H(\reg{id})$, $\textsf{secret}=(v,\theta)$ and $\ket{\$} = \ket{\$}_{v,\theta}$. If $\mathsf{id}$ already appears in $B_W$, then returns $\perp$. 
%\footnote{This case will only happen if the same $\reg{id}$ is generated twice. The chance that this happens is exponentially small in $\lambda$.}
 Otherwise, add $(\mathsf{id,public,secret},\ket{\$}_{v,\theta})$ to $B_W$. Return $\mathsf{id}$.
\item $\mathcal{O}_{\mF_W}(\cdot, \texttt{public}, \mathsf{id})$,  $\mathcal{O}_{\mF_W}(I, \texttt{getMoney}, \mathsf{id})$, and $\mathcal{O}_{\mF_W}(V, \texttt{secret}, \mathsf{id})$: as described in Section~\ref{sec:ap-scenario-qm}. 
\end{itemize}
\item Agreement relation $\mathcal{C}^{\mO_{\mF_W}}_W(1^\lambda, \mathsf{id})$: The agreement relation is the same as it is in Section \ref{sec:ap-scenario-qm}.
\item Proof relation $\mathcal{R}_W^{\mathcal{O}_{\mathcal{F}_W}}(1^\lambda, x, \rho_W)$: The proof relation firstly queries $\mathcal{O}_{\mathcal{F}_W}(V, \texttt{secret}, x)$ in order to get a tuple $(v, \theta)$. (If $\mathcal{O}_{\mathcal{F}}(V, \texttt{secret}, x)$ returns $\perp$, then $\mathcal{R}$ rejects.) Then it implements the Wiesner money verification procedure: it applies $\prod_i H_i^{\theta_i}$ to its quantum input $\rho_W$, measures all qubits in the computational basis, and accepts if and only if the outcome is $v$.
\end{itemize}
\end{scenario}

\begin{protocol}
\label{protocol:wiesner} We define our proof of knowledge protocol $\mK_W = (\mI_W, P_1, P_2, V_1, V_2)$ for the scenario $(\mF_W, \mC_W, \mR_W)$. An honest input generation algorithm $I \in \mI_W$ calls $\mO_{\mF_W}(I, \texttt{getId})$ repeatedly until it obtains a string $\reg{id} \in \{0,1\}^{2\lambda}$ such that $\reg{id} \neq \perp$. It then queries $\mathcal{O}_{\mathcal{F}}(I, \texttt{getMoney}, \mathsf{id})$, obtains a quantum state $\rho_W$, and gives $(\reg{id}, \rho_W)$ to the prover (it gives nothing to the verifier). In the agreement phase, the prover $P_1$ parses the auxiliary input $\rho_{\reg{AUX}_P}$ which it gets from $I$ as a classical string $\mathsf{id} \in \{0,1\}^{2\lambda}$ in addition to a quantum state $\rho_W$. (If this fails, the prover halts.) Then the prover sends $\mathsf{id}$ to $V_1$ and outputs the statement $x_P=\mathsf{id}$ and the quantum state $\rho_{st_P} = \rho_W$. $V_1$, upon receiving $\mathsf{id}$ from $P_1$, queries $\mO_{\mF_W}(V,\texttt{public},\mathsf{id})$. If this returns $\perp$ the verifier aborts. Otherwise, $V_1$ outputs $x_V = \mathsf{id}$ and $st_V = \bot$.

This completes the description of the (honest) first phase prover and verifier. We now describe the interaction between the (honest) second phase prover and verifier, $P_2$ and $V_2$:
\begin{enumerate}
\item $V_2$ queries $\mO_{\mF_W}(V,\texttt{secret},\mathsf{id})$. If it obtains $\perp$, $V_2$ aborts. Otherwise, let $\$=(v,\theta)$ be the classical description obtained.
\item $V_2$ sends a uniformly random $c\in\{0,1\}^\lambda$ to the prover.
\item For each $i\in\{1,\ldots,n\}$ if the $i$th bit of $c$ is 0, $P_2$ measures the $i$th qubit of $\rho_{st_P}$ in the standard basis; and if it is 1, it measures the $i$th qubit in the Hadamard basis. Let $\beta\in\{0,1\}^\lambda$ denote the outcomes obtained. $P_2$ sends $\beta$ to $V_2$.
\item Let $s = c \cdot \theta + \bar{c}\cdot\bar{\theta}$, where $\cdot$ denotes componentwise multiplication. In other words, $s_i=1$ if and only if $c_i=\theta_i$. $V_2$ checks that, whenever $s_i = 1$, it holds that $v_i=\beta_i$. If not, then it returns $0$. 
\end{enumerate}
\end{protocol}

\begin{lemma}\label{lem:wiesner}
There is a constant $\kappa <1 $ such that Protocol $\mK_W$ (Protocol~\ref{protocol:wiesner}) is a secure agree-and-prove protocol for the Wiesner money AaP scenario $(\mF_W, \mC_W, \mR_W)$ (Scenario~\ref{scenario:wiesner}) with completeness $1$, knowledge error $\kappa$, and extraction distance $\delta=O(\mu^{1/4})$, where $\mu=1-p$ and $p$ is the prover's success probability.
\end{lemma}

\begin{proof}
The completeness condition is clear, and follows from completeness of the Wiesner quantum money verification scheme. For soundness, we need to define the extractor $E$. Towards this we first introduce notation to model the actions of the prover as an ITM, and derive constraints on the resulting ITM that follow from the assumption that the prover succeeds in the protocol with some probability $p$. (See Section~\ref{sec:black-box} for terminology relating to ITMs.) 

\paragraph{Step 1: Modeling the prover.}
We start by introducing notation to model an arbitrary prover $\hat{P} = (\hat{P}_1,\hat{P}_2)$ in the protocol. By definition of the soundness experiment, to define the extractor we may assume without loss of generality that the interaction of $\hat{P}_1$ and $V_1$ results in a statement $x_P=x_V=\mathsf{id}$ such that $\mathsf{id}$ appears in the database $B_W$, and such that moreover the execution of $\hat{P}_2$ and $V_2$ on $\reg{id}$, when provided auxiliary inputs $\rho_{st_P}$ and $st_V$ respectively, results in acceptance with probability $p=1-\mu\geq \kappa$, for some $\kappa$ to be determined at the end of the proof. Note that for an honest verifier $st_V = \perp$ always. 

In the second phase of the protocol $\hat{P}_2$ expects a query $c\in\{0,1\}^\lambda$ and is required to return an outcome $\beta\in\{0,1\}^\lambda$. As an ITM the action of $\hat{P}_2$ is realized by a unitary $U$ acting on $\mH_\reg{M} \otimes \mH_\reg{P}$, where $\mH_\reg{M} = \mH_{\reg{M}_1} \otimes \mH_{\reg{M}_2}$ with $\mH_{\reg{M}_1}=\mH_{\reg{M}_2}=(\C^2)^{\otimes 2\lambda}$ a message register (the first part of which is associated with the query $c$ and the second part the prover's response $\beta$) and $\mH_\reg{P}$ the prover's secret space (which contains $\rho_{st_P}$ as well as some ancilla qubits if needed). 
Since $c$ is always classical we can assume that $U = \sum_c \proj{c} \otimes U_c$ where for each $c$, $U_c$ is a unitary that acts on $\mH_{\reg{M}_2}\otimes \mH_\reg{P}$. After the prover applies $U$ (in the real protocol), it performs a measurement in the computational basis of the $\lambda$ qubits in $\mH_{\reg{M}_2}$ to get the outcome $\beta$ which constitutes its response.
For notational convenience, for $c\in\{0,1\}^\lambda$ we define a unitary $V_c$ as  $V_c = \big( \prod_i H^{c_i}_i \big) U_c$, where the Hadamard $H_i$ acts on the $i$-th qubit in register $\reg{M}_2$. Intuitively, $V_c$ can be thought of as the unitary the prover would apply to $\mH_{\reg{M}_2} \otimes \mH_{\reg{P}}$ if it were to measure $\mH_{\reg{M}_2}$ in the bases determined by $c$ (instead of in the standard basis) in order to get the outcome $\beta$.

We write $\sigma_X$ and $\sigma_Z$ for the Pauli matrices on a single qubit. We use the shorthand $\sigma_Z(a)$, for some string $a \in \{0,1\}^\lambda$, to denote the $\lambda$-qubit observable that is a tensor product of single qubit observables, and which is $\sigma_Z$ on those qubits $i$ such that $a_i = 1$, and $I$ otherwise. (In particular, for the case $\lambda=1$ we have that $\sigma_Z(0)=I$ and $\sigma_Z(1)=\sigma_Z$.) Let $\sigma_X(a)$ be defined similarly. 
For any $c\in \{0,1\}^\lambda$, define the following observables on $\mH_{\reg{M}_2} \otimes \mH_{\reg{P}}$:
\begin{equation}\label{eq:def-prover-obs}
 Z^B(c) \,=\, V_{{c}}^\dagger (\sigma_Z(\bar{c})\otimes \Id) V_{{c}}\qquad\text{and}\qquad X^B(c) \,=\, V_{c}^\dagger (\sigma_X(c)\otimes \Id) V_{c}\;,
\end{equation}
where $\bar{c}$ denotes the bitwise complement of $c$, and where the identities act on $\mH_\reg{P}$.  Intuitively, for a given challenge string $c$, $Z^B(c)$ denotes the observable which measures the parity of all the bits $\beta_i$ of the prover's response $\beta$ such that $c_i = 0$, while $X^B(c)$ denotes the observable which measures the parity of all the bits $\beta_i$ of $\beta$ such that $c_i = 1$. Note that the observables $Z^B(c)$ and $X^B(c)$ can be implemented with black-box access to an ITM implementing the prover, by initializing $\mH_\reg{M}$ to $\ket{c}\otimes \ket{0^\lambda}$, executing the prover's unitary $U$ (which acts on $\mH_{\reg{M}} \otimes \mH_{\reg{P}}$) followed by the controlled-Hadamard $\prod_i H_i^{c_i}$ on $\mH_{\reg{M}_2}$, measuring $\sigma_Z(\bar c)$ or $\sigma_X(c)$ on $\mH_{\reg{M}_2}$, and executing $ (\prod_i H_i^{c_i} )U^\dagger$. 

\paragraph{Step 2: The prover's observables.} We now state some properties of the observables $Z^B(c)$ and $X^B(c)$ defined in~\eqref{eq:def-prover-obs} that follow from the assumption that the prover succeeds with probability $p=1-\mu$, for some $\mu\geq 0$, in the second phase of the protocol. Towards this we introduce a ``purified'' description of the operation of the oracle $\mO_{\mF_W}$ which we denote $\mO_{\mF'_W}$ and operates as follows:
\begin{itemize}
\item $\mathcal{O}_{\mF'_W}(I, \texttt{getId})$: generate a uniformly random $\mathsf{id}\in \{0,1\}^{2\lambda}$. Create $2\lambda$ EPR pairs $\ket{\phi}_{\mathsf{id}} = \ket{\mathrm{EPR}}_{\reg{A}_{\mathsf{id}}\reg{B}_{\mathsf{id}}}^{\otimes 2\lambda}$ and store $(\mathsf{id},\texttt{unmeasured},\ket{\phi}_{\mathsf{id}})$ in the (quantum) database $B_{W}'$. Return $\mathsf{id}$.
\item $\mathcal{O}_{\mF'_W}(\cdot, \texttt{public}, \mathsf{id})$: Not needed since there is no $\mathsf{public}$ record in this scheme. 
\item $\mathcal{O}_{\mF'_W}(I, \texttt{getMoney}, \mathsf{id})$: If no record in $B_{W}'$ with identifier \textsf{id} exists, returns $\perp$. Otherwise, return the second half of each EPR pair in $\ket{\phi}_{\mathsf{id}}$, i.e. the state in register $\reg{B}_{\mathsf{id}}$, the first time it is called. If called again with the same \textsf{id} argument, return $\perp$. 
\item $\mathcal{O}_{\mF'_W}(V, \texttt{secret}, \mathsf{id})$:  If no record in $B_{W}'$ with identifier \textsf{id} exists, returns $\perp$. If there is a record of the form  $(\mathsf{id},\texttt{measured},x)$ in $B'_W$ then return $x$. Otherwise, the record is of the form $(\mathsf{id},\texttt{unmeasured},\sigma_{\reg{A}_\id\reg{B}_\id})$. In this case, 
select a uniformly random $\theta\in\{0,1\}^\lambda$ and measure register $\reg{A}_{\mathsf{id}}$ of $\sigma$ in the basis indicated by $\theta$ to obtain a string of outcomes $v$. Return $(v,\theta)$. Update the database entry to $(\mathsf{id},\texttt{measured},(v,\theta))$.
\end{itemize}
We observe that the operation of $\mathcal{O}_{\mF'_W}$ is indistinguishable from $\mO_{\mF_W}$ from the point of view of the parties $I$ or $P$. Indeed the only formal differences between the two lie in the implementation of the following operations:
\begin{itemize}
\item $\mO_{\mF_W}$: when any identifier $\reg{id}$ is added to $B_W$, a state $\ket{\$}_{\mathsf{id}}$ such that $\$ = (v,\theta)$ has been generated uniformly at random is also added to $B_W$ and associated with the identifier $\reg{id}$. This is the state that  $\mathcal{O}_{\mF_W}(I, \texttt{getMoney}, \mathsf{id})$ returns to $I$, and this operation can take place only once. 
\item $\mathcal{O}_{\mF'_W}$: when any identifier $\reg{id}$ is added to $B'_W$, EPR pairs $\ket{\phi}_{\mathsf{id}} = \ket{\mathrm{EPR}}_{\reg{A}_{\mathsf{id}}\reg{B}_{\mathsf{id}}}^{\otimes 2\lambda}$ are also added to $B'_W$ and associated with the identifier $\reg{id}$. Half of these EPR pairs are returned to $I$ when 
 $\mathcal{O}_{\mF'_W}(I, \texttt{getMoney}, \mathsf{id})$ is called for the first time. We argue in the paragraph below that these half-EPR pairs are indistinguishable from a uniformly random money state $\ket{\$}_{v,\theta}$ unless one has access to $(v,\theta)$. (In the scenario $(\mF'_W, \mC'_W, \mR'_W)$, $(v,\theta)$ is decided as the classical outcome of the measurement which
$\mF'_W$ performs on register $\reg{A}_{\mathsf{id}}$ of the joint state $\ket{\mathrm{EPR}}_{\reg{A}_{\mathsf{id}}\reg{B}_{\mathsf{id}}}^{\otimes 2\lambda}$ after the verifier calls $\mathcal{O}_{\mF'_W}(V, \texttt{secret}, \mathsf{id})$. The result of this measurement is that both registers $\reg{A}_{\mathsf{id}}$ and $\reg{B}_{\mathsf{id}}$ collapse to $\ket{\$}_{v, \theta}$.)
 
 One can see that half EPR pairs are indistinguishable to $I$ or $P$ from $\proj{\$}_{v, \theta}$ for uniformly random $(v, \theta)$ by the following argument. Local operations on different halves of EPR pairs commute. If the measurement which $\mF'_W$ performs on register $\reg{A}_{\mathsf{id}}$ (the one which collapses registers $\reg{A}_{\mathsf{id}}$ and $\reg{B}_{\mathsf{id}}$ to two copies of $\ket{\$}_{v, \theta}$ for uniformly random $(v, \theta)$) occurs \emph{before} $I$ calls \texttt{getMoney}, then it is easy to see that $\mO_{\mF_W}$ and $\mO_{\mF'_W}$ are identical from the perspective of $I$ or $P$. By the commutativity of local operations on half EPR pairs, there is no measurement $I$ or $P$ could make which would have a different output distribution in the scenario $(\mF'_W, \mC'_W, \mR'_W)$ as compared with its output distribution in the scenario $(\mF_W, \mC_W, \mR_W)$, even if $\mF'_W$ performs its measurement on $\reg{A_{id}}$ after $I$ calls \texttt{getMoney}---as long, of course, as $\mF'_W$ does not disclose $(v, \theta)$ to $I$ or $P$, so that the assumption of local operations is valid.
\end{itemize}

At the end of the first phase of the protocol, let $\mathsf{id}=x_P=x_V$ be the identifier upon which $V_1$ and $\hat{P}_1$ agreed (if there is no agreement then we do not need to specify the extractor). For $a,b\in\{0,1\}^\lambda$ define $Z^A(b) = \sigma_Z(b)$ and $X^A(a) = \sigma_X(a)$, where the $\sigma_Z$ and $\sigma_X$ operators act on the register  $\reg{A}_{\mathsf{id}}$ associated with the entry $\mathsf{id}$ in the database $B_W'$ maintained by $\mF'_W$ in the ``purified'' version of the scenario described above. Note that there must exist such an entry, as otherwise by definition the agreement phase does not succeed. Let $\ket{\psi} \in \mH_{\reg{A}_{\mathsf{id}}}\otimes \mH_{\reg{M}_2} \otimes \mH_\reg{P}$ be the joint state of the registers $\reg{A}_{\mathsf{id}}$ (for the fixed $\mathsf{id}$ above) and the prover at the start of the second phase of the protocol. Without loss of generality we assume that this state is pure (if not, we use $\ket{\psi}$ to denote a purification of it such that $\hat{P}_2$ is in addition given the purifying system). We also consider that the part of $\ket{\psi}$ that lies on $\reg{M}_2$ is in state $\ket{0}$, although this is not important for the proof. 

\paragraph{Step 3: Correlations.}
 Let $c$ denote the query sent by $V_2$ to $\hat{P}_2$ and $\beta$ the answer given by $\hat{P}_2$. Let $(v,\theta)$ denote the outcome obtained by $V_1$ when it queries $\mO_{\mF'_W}(V, \texttt{secret}, \mathsf{id})$ at the end of the first phase of Protocol~\ref{protocol:wiesner}. According to the purified description given in the previous step, $\theta$ is chosen uniformly at random and $v$ is obtained as the outcome of a measurement of the qubits of $\ket{\psi}$ in register $\reg{A}_{\mathsf{id}}$ in the bases indicated by $\theta$. The check performed at the end of the protocol is that $\beta_i = v_i$ for all $i$ such that $\theta_i = c_i$. By assumption, this check is passed with probability $1-\mu$, on average over the choice of $c$ and $\theta$, by the prover. 
It follows that with probability at least $1-\mu$, 
\[ \oplus_{i: \, c_i = \theta_i = 1}  \,\beta_i \,=\, \oplus_{i: \, c_i = \theta_i = 1} \, v_i \qquad\text{and}\qquad \oplus_{i: \, c_i = \theta_i = 0}  \,\beta_i \,=\, \oplus_{i: \, c_i = \theta_i = 0} \, v_i\;.\]
Using the notation for the prover's observables introduced at the previous step this condition is re-written as the condition that 
\begin{equation}
\Es{c,\theta} \:  \bra{\psi} Z^A(c\cdot \theta) \otimes Z^B(c\cdot \theta) \ket{\psi} \,\geq\, 
 (1)(1-\mu) + (-1)\mu \,=\, 1 - 2\mu\;,\label{eq:n0-1}
\end{equation}
where the expectation is taken under the uniform distribution over $c,\theta\in\{0,1\}^\lambda$. Similarly, 
\begin{equation}
\Es{c, \theta} \: \bra{\psi} X^A(\bar{c}\cdot \bar{\theta}) \otimes X^B(\bar{c}\cdot \bar{\theta}) \ket{\psi} \,\geq\, 1 - 2\mu\;.\label{eq:n0-2}
\end{equation}

\paragraph{Step 4: Rigidity.}
In preparation for the definition of the extractor $E$ we invoke the following lemma, whose proof is given at the end of the section. The lemma is an adaptation of results that appeared in~\cite{nv16}; informally, it states that observables $X^B(a)$ and $Z^B(b)$ satisfying~\eqref{eq:n0-1} and~\eqref{eq:n0-2} are necessarily isometric to Pauli observables. 

\begin{lemma}\label{lem:gh}
Let $\ket{\psi}_{AB} \in (\C^2)^{\otimes n}_\reg{A} \otimes \mH_\reg{B}$, where $\mH_\reg{B}$ is arbitrary. 
Suppose that for every $a\in\{0,1\}^n$ there exist observables $X^B(a)$ and $Z^B(a)$ on $\mH_{B}$ such that
\begin{equation}\label{eq:n2-0}
\forall W \in \{X, Z\}\;,\qquad \Es{a}  \big\|\big(\sigma_W^A(a) -W^B(a)\big)\ket{\psi} \big\|^2 \,\leq\,\eps\;,
\end{equation}
for some $0\leq \epsilon\leq 1$ and where the expectation is under the uniform distribution over $a\in \{0,1\}^n$. Then there exists an isometry 
\begin{equation*}
\Phi_B \: : \: \mH_{B} \rightarrow  ((\mathbb{C}^2)^{\otimes n})_{B'} \otimes \mH_{\hat{B}}
\end{equation*}
 and a state $\ket{aux}_{\hat{B}}$ on $\mH_{\hat{B}}$ such that
\begin{equation}
\Tr\Big( \big(\bra{\mathrm{EPR}}_{AB'}^{\otimes
      n}\otimes \bra{aux}_{\hat{B}}\big)\,\big( \Id_\reg{A}\otimes \Phi_B(\proj{\psi}_{AB})\big)\big( \ket{\mathrm{EPR}}_{AB'}^{\otimes n} \otimes \ket{aux}_{\hat{B}}\big) \Big) = 1 - O\big(\eps^{1/2}\big).\label{eq:n2-0b}
\end{equation}
Moreover, the isometry $\Phi_B$ can be implemented as an $O(n)$-size quantum circuit acting on $\mH_B$ as well as some ancilla qubits, and that uses controlled gates for $X^B(a)$ and $Z^b(b)$ (controlled on an ancilla register of dimension $2^n$ that contains $a$ or $b$) as black boxes. Precisely, $\Phi^B$ is defined by 
\begin{equation}\label{eq:def-iso}
\Phi^B \ket{\varphi}_\reg{B} \,=\, \Big(\frac{1}{2^n} \sum_{a,b} X^B(a)Z^B(b) \otimes \sigma_X(a)\sigma_Z(b) \otimes \Id\Big)\ket{\varphi}_\reg{B} \ket{\mathrm{EPR}}^{\otimes n}\;,
\end{equation}
and letting $\mH_{\hat{B}} = \mH_{\reg{B}} \otimes (C^2)^{\otimes n}$, with register $\reg{B}'$ associated with the last $n$ copies of $\C^2$. 

Finally, the lemma also holds, with the same conclusion (but weaker implied constants), if~\eqref{eq:n2-0} holds when the expectation is 
\begin{itemize}
\item[(i)] uniform over those strings $a$ such that $|a|_H=\frac{n}{2}$, where $|\cdot|_H$ denotes the Hamming weight, or
\item[(ii)] taken under the biased distribution where each coordinate of $a$ is sampled independently such that $a_i=1$ with probability $\frac{1}{4}$ and $a_i=0$ with probability $\frac{3}{4}$. 
\end{itemize}
\end{lemma}

We apply Lemma~\ref{lem:gh} with the following identifications. We let $n$ in the lemma be $\lambda$ here. The observables $W^B(a)$ are as defined in~\eqref{eq:def-prover-obs}, with the space $\mH_\reg{B}$ in Lemma~\ref{lem:gh} identified with $\mH_{\reg{M}_2}\otimes \mH_\reg{P}$ here. The space $\mH_\reg{A}$ in Lemma~\ref{lem:gh} is identified with the space $\mH_{\reg{A}_{\mathsf{id}}}$ here. Setting $\epsilon= 4\mu$, it follows from~\eqref{eq:n0-1} and~\eqref{eq:n0-2} that equation~\eqref{eq:n2-0} of Lemma \ref{lem:gh}, in its version considered in the ``Finally,'' part (ii) of the lemma,  is satisfied. 
%Suppose now that Alice measures $E(\theta)$ after preparing the EPR pairs, and obtains the outcome $x$. 
Eq~\eqref{eq:n2-0b} implies (by the contractivity of the trace distance, and the equality $\| \rho - \sigma \|_1 = \sqrt{1 - \Tr(\rho \sigma})$ that holds when $\rho, \sigma$ are pure) that on average over $(v,\theta)$ chosen uniformly at random (equivalently, $\theta$ chosen uniformly at random and $v$ the outcome of a measurement of the $\lambda$ qubits in $\reg{A}_{\mathsf{id}}$ in the basis indicated by $\theta$, as would be done by $\mathcal{O}_{\mF'_W}(V, \texttt{secret}, \mathsf{id})$---this is equivalent because we know that the reduced density of $\ket{\psi}$ on $\reg{A}_{\mathsf{id}}$ is totally mixed),
\begin{equation}\label{eq:wiesner-close-2}
\big\| \Phi_B \Tr_A \big( (\proj{\$}_{v, \theta} \otimes  \Id_{\reg{M}_2\reg{P}}) \cdot \proj{\psi} \big)- \proj{\$}_{v, \theta}\otimes  \proj{aux}_{\hat{B}} \big\|_1 = O(\eps^{1/4})\;.
\end{equation}

\paragraph{Step 5: Definition of $E$.}
We are ready to define the extractor. Recall that $E$ gets black-box access to the ITM $\hat{P}_2$, which operates on the same initial state as $\hat{P}_2$. ($E$ also has the same access to $\mO_\mF$ as $\hat{P}_2$ does, but we will not need this here.) Consider $E$ that uses its access to $\hat{P}_2$ in order to implement the isometry $\Phi_B$. As stated in Lemma~\ref{lem:gh} this is possible using black-box access to $\hat{P}_2$. Then, $E$ traces out the register $\hat{B}$ and returns the state contained in $\reg{B}'$. Using~\eqref{eq:wiesner-close-2} it follows that the resulting state is $O(\mu^{1/4})$ close to $\ket{\$}_{v, \theta}$, on average over $v$ and $\theta$. In particular this state must be accepted by the verification procedure $\mR_W$ with probability $1-O(\mu^{1/4})$, where the probability is taken over the specification of the oracle $H$.
Choosing $\kappa$ to be any constant $\kappa < 1 $ such that this quantity is positive for all $p=1-\mu \geq \kappa$, this completes the proof.
\end{proof}

We conclude the section by giving the proof of Lemma~\ref{lem:gh}. 
The proof uses the following general lemma, that is based on~\cite{gowers2017inverse}. See e.g.~\cite[Lemma 4.7]{coladangelo2017robust} for a proof. In the lemma, $\Unitary(\mH)$ denotes the group of unitaries acting on Hilbert space $\mH$. 

\begin{theorem}\label{thm:gh}
Let $G$ be a finite group. Let $f:G\mapsto \Unitary(\mH_{B'})$ and $\ket{\psi}_{AB'}\in \mH_A \otimes \mH_{B'}$ be such that
\begin{equation}\label{eq:main-ass}
\Es{x,y\in G} \bra{\psi} \Id_A \otimes f(x)f(yx)^\dagger f(y) \ket{\psi} \geq 1-\delta\;,
\end{equation}
for some $\delta>0$. Then there exists an isometry $V:\mH_{B'} \to \mH_{{B}''}$ and a representation $g:G \mapsto \Unitary(\mH_{{B}''})$  such that
\begin{equation}\label{eq:main-ccl}
\Es{x\in G} \big\|\Id_A \otimes \big(f(x) - V^\dagger g(x) V\big) \ket{\psi} \big\|^2 \,\leq 2\delta\;.
\end{equation}
\end{theorem}

In this paper we are particularly concerned with the Pauli group $G=\mathcal{P}_n$, which can be defined as the $2\cdot 4^n$-element matrix group generated by the $n$-qubit Pauli matrices $\sigma_X$ and $\sigma_Z$, i.e.\ 
\begin{equation}\label{eq:def-pn}
\mathcal{P}_n = \big\{ \pm \sigma_X(u)\sigma_Z(v)\big|\, u,v\in\{0,1\}^n\big\}\;.
\end{equation}
We represent each element of $\mathcal{P}^n$ by a triple $(\eps,u,v)\in \{\pm 1\} \times \{0,1\}^n \times \{0,1\}^n$ in the natural way. In particular, the identity element is $e=(1,0^n,0^n)$. For $x=(\eps,u,v)\in \mathcal{P}_n$, let $f(x)=\epsilon{X}^B(u){Z}^B(v)$. 
In this case it can be verified (see  e.g.\ the proof given in~\cite{coladangelo2017robust}) that  the isometry $V$ promised in Theorem~\ref{thm:gh} takes the form~\eqref{eq:def-iso}. 
In particular, an efficient circuit for $V$ is easily constructed from an efficient circuit for applying $f$ controlled on $x$. 
%Let $\Drho(O_1,O_2)^2 \equiv \|(O_1- O_2)\ket{\psi}\|^2$.

\begin{proof}[Proof of Lemma~\ref{lem:gh}]
We first show how either cases (i) or (ii) of the ``finally'' part of the lemma can be reduced to  the initial statement of the lemma. 

Consider first case (i), where~\eqref{eq:n2-0} is only promised to hold on average over strings of Hamming weight $\frac{n}{2}$. 
For a string $u\in\{0,1\}^n$ define two random strings $(a,b)$ of Hamming weight $n/2$ as follows. First assume that the Hamming weight of $u$ is even. Let $S\subseteq\{1,\ldots,n\}$ denote the location of the '$1$' entries in $u$. Let $S_1$ be a uniformly random subset of $S$ of size $|S|/2$, and $S_2$ a uniformly random subset of $\overline{S}$ of size $|\overline{S}|/2$. Set all entries of $a$ in $S_1 \cup S_2$ to '$1$', and all other entries to '$0$'. Set $b=u+ a$. Then it is clear that $a$ and $b$ both have Hamming weight $\frac{n}{2}$. Furthermore, if $u$ is uniformly distributed then $a$ and $b$ are each uniformly distributed over all strings of Hamming weight $\frac{n}{2}$. If the Hamming weight of $u$ is odd, a similar construction, flipping an additional coin to decide if $|S_1|=(|S|\pm 1)/2$, applies. 

A similar construction can be done in case (ii), where~\eqref{eq:n2-0} is promised to hold under a biased distribution. For a string $u\in\{0,1\}^n$ define two biased strings $a$ and $b$ as follows. For every $i$ such that $u_i=0$ set $a_i=b_i=0$. For every $i$ such that $u_i=1$ select $c\in\{0,1\}$ uniformly at random and set $a_i= c$ and $b_i=(1-c)$. Then $u=a+b$ and if $u$ is uniformly distributed the marginal distribution of $a$ and $b$ is the biased distribution. 

For $u\in \{0,1\}^n$ define $\hat{X}^B(u)=X^B(a)X^B(b)$, where $a$ and $b$ are generated according to the process described above (for either case (i) or (ii)), independently for each $u$. Similarly, for $v\in \{0,1\}^n$ define $\hat{Z}^B(v)=Z^B(a)X^B(b)$, with $a,b$ generated independently for each $v$. Then $\{\hat{X}^B(u)\}$ and $\{\hat{Z}^B(v)\}$ are unitaries on $\mH_{B'}$. 

For $x=(\eps,u,v)\in \mathcal{P}_n$, let $f(x)=\epsilon\hat{X}^B(u)\hat{Z}^B(v)$. 

\begin{proposition}\label{prop:n1-1}
The following holds:
\begin{equation}\label{eq:n1-1e}
 \Es{x,y,z\in \mathcal{P}^n }  \bra{\psi} \Id_A \otimes f(x)f(yx)^\dagger f(y) \ket{\psi} \geq 1-O(\eps)\;,
\end{equation}
where the expectation is over three uniformly random group elements. 
\end{proposition}

\begin{proof}
First we observe that the claim holds, with no error, in case $f$ is replaced by $f^A(x)=\epsilon\sigma_X^A(u)\sigma_{Z}^A(v)$, for $x=(\eps,u,v)$. Next we note that for all $W\in \{X,Z\}$ it holds that 
\begin{equation}\label{eq:n1-1a}
 \Es{u\in\{0,1\}^n} \big\| \big( \sigma_W^A(u) - \hat{W}^B(u) \big)\ket{\psi} \big\|^2 = O(\eps)\;.
\end{equation}
Indeed, this follows by two applications of~\eqref{eq:n2-0}, the triangle inequality, and the fact that the marginal distributions of $a$ and $b$ for $u$ chosen uniformly at random are both uniform. Applying~\eqref{eq:n1-1a} twice, once for $W=Z$ and once for $W=X$, we get 
\begin{equation}\label{eq:n1-1b}
 \Es{\epsilon\in \{\pm 1\}, u,v\in\{0,1\}^n} \big\| \big( f^A(\eps,u,v) - f(\eps,u,v) \big)\ket{\psi} \big\|^2 = O(\eps)\;.
\end{equation}
Given that $f^A$ satisfies~\eqref{eq:n1-1e}, this concludes the proof. 
\end{proof}

Proposition~\ref{prop:n1-1} shows that the function $f$ satisfies the assumption of Theorem~\ref{thm:gh}, for some $\delta = O(\eps)$. Let $g$ be the representation promised by the theorem. Let $g = g_+ \oplus g_-$ where $g_+(-e)=\Id$ and $g_-(-e)=-\Id$ (recall that $e$ denotes the neutral element of $\mathcal{P}_n$). The only irreducible representation of $\mathcal{P}_n$ that does not send $-e$ to the identity is the Pauli matrix representation $\tau_\mathcal{P}$ that we used to define the group in~\eqref{eq:def-pn}. Thus $g_- = \tau_{\mathcal{P}} \otimes \Id_d$, for some integer $d$.  Let $d'$ be the dimension of $g_+$ and $\Pi_+$ the projection on its range. 
Using that by definition $f(-x)=-f(x)$ for all $x\in G$, we get from~\eqref{eq:main-ccl} using the triangle inequality that 
\begin{align}\label{eq:n2-1}
\Es{x\in \mathcal{P}_n} \big\| \Id_A \otimes V^\dagger 2\Pi_+ V \ket{\psi}\big\|^2
&= \Es{x\in \mathcal{P}_n} \big\| \Id_A \otimes V^\dagger \big( g(x)+g(-x) \big) V \ket{\psi} \big\|^2\notag\\
&= O(\eps)\;.
\end{align}
In particular, it then follows from~\eqref{eq:main-ccl} that
\begin{equation}\label{eq:n2-1f}
\Es{x\in \mathcal{P}_n} \big\| \Id_A \otimes \big( f(x) - V^\dagger\big((\tau_\mathcal{P}(x) \otimes \Id_d) \oplus 0_{d'} \big) V \ket{\psi} \big\|^2
= O(\eps)\;.
\end{equation}
Let $\ket{\psi'} = (\Id \otimes V)\ket{\psi}$. 
Using the assumption~\eqref{eq:n2-0} twice and~\eqref{eq:n2-1f} we get that for $W\in \{X,Z\}$,
\begin{equation}\label{eq:n2-3}
\Es{u\in\{0,1\}^n} \bra{\psi'} \sigma_W^A(u) \otimes \big( (\sigma_W^B(u) \otimes \Id_d ) \oplus 0_{d'} \big)\ket{\psi'}\,=\, 1-O\big(\sqrt{\eps}\big)\;.
\end{equation}
It is easy to verify by direct calculation that 
\[ \Es{u,v} \sigma_X^A(u)\sigma_Z^A(v)\otimes \sigma_X^B(u)\sigma_Z^B(v) \,=\, \proj{\textrm{EPR}}^{\otimes n}\;.\]
It then follows from~\eqref{eq:n2-3} that $\ket{\psi'} = \ket{\textrm{EPR}}^{\otimes n} \ket{aux} + \ket{\psi''}$ for some $\ket{\psi''}$ such that $\|\ket{\psi''}\|^2 = O(\eps)$. The lemma follows, setting $\Phi_B(X) = VXV^\dagger$ for all $X$.
\end{proof}

\subsection{PoQK for subspace money states}
\label{sec:subspaces}

Our second example of a proof of quantum knowledge protocol is a verification protocol for a modification of Aaronson and Christiano's \emph{subspace states} \cite{aaronson2012quantum}. Aaronson and Christiano present a quantum money scheme in which a $\lambda$-qubit money state, with $\lambda\in\mathbb{N}$ a security parameter, is specified by a  (secret) $(\lambda/2)$-dimensional subspace $A \subseteq \mathbb{Z}_2^\lambda$, and defined as $\ket{A} =\frac{1}{\sqrt{|A|}} \sum_{x \in A} \ket{x}$.  In the notation of Definition~\ref{def:qmoney}, valid bills in this scheme are quadruples $(\mathsf{id, public, secret}, \ket{\$}_{\mathsf{id}})$ such that $\mathsf{id}$ is an arbitrary string, $\mathsf{public}$ is empty\footnote{What we describe here is actually a \emph{private-key} version of the Aaronson-Christiano scheme, equipped with a verification procedure which is similar to a verification procedure used in \cite{ben2016quantum}. Aaronson and Christiano originally proposed this subspace scheme with the idea of making progress towards public-key quantum money. As such, in their original scheme, $\mathsf{public}$ is not empty.}, $\mathsf{secret} = \mZ = \{z_1,\ldots,z_{\lambda/2}\}$ is a basis for a $(\lambda/2)$-dimensional subspace $A$ of $\Z_2^\lambda$, and $\ket{\$}_{\mathsf{id}}=\ket{A}$. One possible (quantum-verifier) verification procedure $\texttt{Ver}(x, \textsf{public}, \textsf{secret}, \rho_W)$ for these bills parses $\textsf{secret}=\mZ$ and then performs the projective measurement $H^{\otimes \lambda} \mathbb{P}_{A^\perp} H^{\otimes \lambda} \mathbb{P}_A$ on $\rho_W$ (where $\mathbb{P}_A$ is a projection onto all standard basis strings in $A$, i.e. $\mathbb{P}_A = \sum_{x \in A} \proj{x}$, and $\mathbb{P}_{A^\perp}$ is a projection onto all standard basis strings in $A^\perp$), and accepts if and only if the outcome is 1. The scheme (when equipped with this verification procedure) has completeness parameter $1$, and it was shown in~\cite{aaronson2012quantum} that its cloning parameter is $\mu_{AC}(\lambda)\leq c^\lambda$ for some constant $c<1$. \footnote{In fact Aaronson and Christiano show the stronger result that this bounds holds even if the adversary is given black-box access to a pair of measurement operators that respectively implement projections on $A$ and $A^\perp$. In our formalization of the cloning game the adversary is not given any access to the verification procedure.}

As we mentioned in the introduction, we do not know if it is possible to devise a natural proof of quantum knowledge for the Aaronson-Christiano subspace states as they have thus far been described. What makes finding such a protocol challenging is that, in contrast with Wiesner's money scheme, there is no obvious classical verification protocol for subspace states in which there can be a single `right answer' to a challenge. We may consider, for example, a classical verification protocol for these states similar to a protocol which was considered in \cite{ben2016quantum}, where the prover is asked to measure all the qubits of a subspace state in either the standard or the Hadamard basis, and any vector $x\in A$ (resp. $y\in A^\perp$) is a valid outcome for a measurement in the standard (resp. Hadamard) basis. It is difficult to argue that such a protocol is a PoQK for subspace states using similar techniques to those which we used for Wiesner states, because the large number of possible `right answers' which the verifier would accept means that the correlations that the prover and the verifier must share if the prover passes are much weaker than those for which we can argue in Protocol \ref{protocol:wiesner}. Nonetheless, we are able to give a proof of knowledge for a version of the subspace scheme in which a (secret) quantum one-time pad has been applied to every subspace state. Due to the use of this one-time pad we are able to apply an analysis that is very similar to the one for Wiesner's scheme. In fact, as we will see, the analysis works even if the subspace $\mZ$ that underlies the construction of the money state is fixed but the one-time pad is chosen uniformly at random. Nevertheless, the resulting scheme is not identical to Wiesner's. In particular, it has the interesting property that the challenge issued by the verifier is only a single bit long.

\paragraph{Notation.}
Before we define the associated AaP scenario, we introduce some notation:
\begin{itemize}
\item Let $\ket{\$}_{v, \theta}$ be a Wiesner money state representing the string $v$ encoded in bases $\theta$, as in~\eqref{eq:wiesner-state}.
\item Let $\{s_i : i \in \{1,\ldots,\lambda\}\}$ $ = \{100...0, 010...0, 001...0, \dots , 000...1\}$ be the standard basis for $\mathbb{Z}_2^\lambda$.
\item Let $\mathcal{Z} = \{z_i : i \in \{1,\ldots,\lambda\}\}$ be a basis for $\mathbb{Z}_2^\lambda$.
\item Let $W$ be the unitary on $(\C^2)^{\otimes \lambda}$ defined as follows:
\begin{align}
W : W\ket{x} &= W\ket{x_1 s_1 + \dots + x_\lambda s_\lambda} \notag\\
&= \ket{x_1 z_1 + \dots + x_\lambda z_\lambda}\;.\label{eq:def-v}
\end{align}
\item Let $L_{\theta}$ for a string $\theta \in \{0,1\}^\lambda$ be the subspace of $\mathbb{Z}_2^\lambda$ whose elements are always 0 in the positions where $\theta_i = 0$, and can be either 0 or 1 in the positions where $\theta_i = 1$.
\item Let $X(a)$ for some vector $a = (a_1, \dots, a_\lambda) \in \mathbb{Z}_2^\lambda$ denote the tensor product of $\lambda$ single-qubit gates which is Pauli $X$ in those positions $i$ where $z_i=1$, and $I$ otherwise. Define $Z(b)$ similarly. Let $X_{\cal{Z}}(d)$, for a basis $\mathcal{Z} = \{z_j\}$, denote the operator
\begin{equation*}
\prod_j \,\big(X(z_j)\big)^{d_j}\;,
\end{equation*}
where $z_j$ denotes a particular vector from the basis set $\cal{Z}$, and $d_j$ denotes the $j$th bit of $d$. Define $Z_{\cal{Z}}(e)$ similarly.
\item Let
\begin{equation}\label{eq:ac-money-state}
\ket{\$}_{v,\theta,\mathcal{Z}} \equiv \frac{1}{\sqrt{|L_\theta|}} \sum_{\lambda \in L_\theta} X_{\cal{Z}}(d) Z_{\cal{Z}}(e) \ket{\lambda_1 z_1 + \dots + \lambda_n z_n}\;,
\end{equation}
with $d_i = v_i$ for $i$ such that $\theta_i = 0$ (and $d_i = 0$ for all other $i$), and $e_i = v_i$ for $i$ such that $\theta_i =1$ (and $e_i = 0$ for all other $i$). Note that the distribution of $\ket{\$}_{v,\theta,\mathcal{Z}} $ over uniform $v, \theta, \mathcal{Z}$ is identical (ignoring global phase) to that of a uniformly random subspace state with a uniformly random Pauli one-time-pad applied to it, because the coordinates of $d$ and $e$ which we have forced to be zero (instead of uniformly random) would only add a global phase to the state.
\end{itemize}

\begin{scenario}
\label{scenario:subspaces}
We instantiate the generic AaP scenario for quantum money described in Section \ref{sec:ap-scenario-qm} as follows:
\begin{itemize}
\item Setup functionality $\mathcal{F}_{AC}(1^\lambda)$: 
\begin{itemize}
\item $\mathtt{init}_{AC}$ initializes a random oracle $H$ taking strings of length $2\lambda+\lambda^2$ to strings of length $2\lambda+\lambda^2$.\footnote{Formally the oracle is implemented in the standard way, recalled in Section~\ref{sec:oracles}.} 
%(Recall from Section \ref{sec:ap-scenario-qm} that $r$ is quantified over when we calculate an adversary's success probability in the no-cloning experiment associated with this money scheme. Formally, we take $r$ here to be the string encoding the truth table of the random oracle $H_r$.) 
In addition, it initializes an empty database $B_{AC}$ that is destined to contain a record of all quantum money bills in circulation. 
\item  $\mathcal{O}_{\mF_W}(I, \texttt{getId})$: generates $v\in\{0,1\}^\lambda$ and $\theta\in \{0,1\}^\lambda$ such that $|\theta|_H=\frac{n}{2}$ uniformly at random and selects $\mathcal{Z} = \{z_i : i \in \{1,\ldots,\lambda\}\}$ a uniformly random basis for $\mathbb{Z}_2^\lambda$.\footnote{As will be clear from the analysis, we could also publicly fix a basis $\mZ$ a priori and define a scheme that is parameterized by the fixed $\mZ$. The properties of the resulting scheme would be identical to the one that we analyze here. We make the choice of a uniformly random $\mZ$ for convenience.} Sets $\mathsf{id} = H^{-1}((v,\theta,\mZ))$, \footnote{We use $H^{-1}$ and not $H$ here because we specified in Section \ref{sec:ap-scenario-qm} that $H$ maps $\reg{id}$s to $(\reg{public, secret})$ pairs.} $\textsf{secret}=(v,\theta,\mZ)$ and 
 $\ket{\$}_{\mathsf{id}} = \ket{\$}_{v,\theta,\mZ}$ defined in~\eqref{eq:ac-money-state}.
 Adds $(\mathsf{id,public,secret},\ket{\$}_{\mathsf{id}})$ to $B_{AC}$. Returns $\mathsf{id}$.
\item $\mathcal{O}_{\mF_{AC}}(\cdot, \texttt{public}, \mathsf{id})$,  $\mathcal{O}_{\mF_{AC}}(I, \texttt{getMoney}, \mathsf{id})$, and $\mathcal{O}_{\mF_{AC}}(V, \texttt{secret}, \mathsf{id})$: as described in Section~\ref{sec:ap-scenario-qm}. 
\end{itemize}
\item Agreement relation $\mathcal{C}^{\mO_{\mF_{AC}}}_W(1^\lambda, \id)$: The agreement relation is the same as it is in Section \ref{sec:ap-scenario-qm}.
\item Proof relation $\mathcal{R}_{AC}^{\mathcal{O}_{\mathcal{F}_W}}(1^\lambda, x, \rho_W)$: The proof relation firstly queries $\mathcal{O}_{\mathcal{F}_W}(V, \texttt{secret}, x)$ in order to get a tuple $(v, \theta,\mathcal{Z})$. (If $\mathcal{O}_{\mathcal{F}}(V, \texttt{secret}, x)$ returns $\perp$, then $\mathcal{R}$ rejects.) Then it applies $Z(e)X(d)$ to its quantum input $\rho_W$, where $d$ and $e$ are defined in terms of $(v, \theta)$ the same way that they are below equation~\eqref{eq:ac-money-state}. Following that, it defines $A$ to be the subspace generated by the vectors $z_i \in \mZ$ such that $\theta_i = 1$, and then it follows the subspace money verification procedure: it performs the projective measurement $H^{\otimes \lambda} \mathbb{P}_{A^\perp} H^{\otimes \lambda} \mathbb{P}_A$ on $Z(e) X(d) \rho_W$ (where $\mathbb{P}_A$ is a projection onto all standard basis strings in $A$, i.e. $\mathbb{P}_A = \sum_{x \in A} \proj{x}$, and $\mathbb{P}_{A^\perp}$ is a projection onto all standard basis strings in $A^\perp$), and accepts if and only if the outcome is 1.

% then with probability $\frac{1}{2}$ each either measures all qubits in the computational basis and accepts if and only if the outcome lies in the subspace $A$ spanned by $\{z_i: \theta_i=1\}$, where $\mZ=\{z_1,\ldots,z_\lambda\}$, or measures all qubits in the Hadamard basis and accepts if and only if the outcome lies in the orthogonal subspace $A^\perp$. (In the notation introduced above, $A = WL_\theta$.)
\end{itemize}
\end{scenario}

\begin{protocol}
\label{protocol:subspaces}
We define a proof of knowledge protocol $\mK_{AC}$ for the scenario $(\mF_{AC}, \mC_{AC}, \mR_{AC})$. The agreement phase is identical to that in Protocol~\ref{protocol:wiesner}, except that now $\mathsf{id}$ has length $2\lambda+\lambda^2$. The second phase is similar but not identical, as the verifier's challenge now consists of a single bit:   
\begin{enumerate}
\item $V_2$ queries $\mO_{\mF_{AC}}(V,\texttt{secret},\mathsf{id})$. If it obtains $\perp$, $V_2$ aborts. Otherwise, let $\$=(v,\theta,\mZ)$ be the classical description obtained.
\item $V_2$ sends a uniformly random bit $c\in\{0,1\}$ to the prover.
\item If $c=0$ the prover $P_2$ measures the state $\rho_{st_P}$ it received from $P_1$ in the standard basis, obtaining a $\lambda$-bit string of outcomes $m\in\{0,1\}^\lambda$, and sends  $m$ to the verifier. If $c=1$ then $P_2$ measures in the Hadamard basis and likewise sends the outcomes $m$ to $V_2$.
\item If $c = 0$ the verifier $V_2$ checks that $m + Wd$ is in the subspace $A$ spanned by $\{z_i:\,\theta_i=1\}$, where $\mZ=\{z_1,\ldots,z_\lambda\}$. If $c = 1$ then $V_2$ checks that $m + W e$ is in $A^\perp$.
\end{enumerate}
Finally, the class of input generation algorithms $\mI_{AC}$ used for completeness is the same as the class $\mI_W$ in Protocol~\ref{protocol:wiesner}.
\end{protocol}

\begin{lemma}
\label{lem:subspaces}
There exists a constant $\kappa < 1$ such that Protocol $\mK_{AC}$ (Protocol~\ref{protocol:subspaces}) is secure with completeness $1$, up to knowledge error $\kappa$, and with extraction distance $\delta=O(\mu^{1/4})$, where $\mu=1-p$ and $p$ the prover's success probability, for the subspace AaP scenario $(\mF_{AC}, \mC_{AC}, \mR_{AC})$ (Scenario~\ref{scenario:wiesner}).
\end{lemma}

\begin{proof}
Completeness of the protocol is clear. For soundness we proceed in a manner analogous to the proof of Lemma~\ref{lem:wiesner}. We give the main steps below and omit some details that are identical to the former proof. 

\paragraph{Step 1: Modeling the prover.} Similarly to the proof of Lemma~\ref{lem:wiesner}, as an ITM the action of $\hat{P}_2$ is realized by a unitary $U$ acting on $\mH_\reg{M} \otimes \mH_\reg{P}$, where $\mH_\reg{M} = \mH_{\reg{M}_1} \otimes \mH_{\reg{M}_2}$ with $\mH_{\reg{M}_1}=\C^2$ a message register associated with the query $c$, $\mH_{\reg{M}_2}=(\C^2)^{\otimes \lambda}$ a message register associated with the response $m$, and $\mH_\reg{P}$ is the prover's secret space. Since $c\in\{0,1\}$ we can assume that $U_c = \proj{0} \otimes U_0+\proj{1}\otimes U_1$ where for each $c$, $U_c$ is a unitary that acts on $\mH_{\reg{M}_2}\otimes \mH_P$. Let $V_0=U_0$ and $V_1 = \big( \prod_i H_i \big) U_1$, where $H_i$ acts on the $i$-th qubit in $\mH_{\reg{M}_2}$. A measurement in the computational basis (if $c=0$) or the Hadamard basis (if $c=1$) of the $\lambda$ qubits in $\mH_{\reg{M}_2}$ after the application of $V_c$ yields the outcome $m$.  
For any $a,b\in \{0,1\}^\lambda$ define observables on $\mH_{\reg{M}_2} \otimes \mH_{\reg{P}}$ as follows:
\begin{equation}\label{eq:def-prover-obs-subspace} 
 X^B(a) \,=\, V_{1}^\dagger (\sigma_X(a)\otimes \Id) V_{{1}}\qquad\text{and}\qquad Z^B(b) \,=\, V_{0}^\dagger (\sigma_Z(b)\otimes \Id) V_{0}\;,
\end{equation}
where the identities act on $\mH_\reg{P}$. The observables $X^B(a)$ and $Z^B(b)$ are observables that can be implemented with black-box access to an ITM implementing the prover, by initializing $\mH_\reg{M}$ to $\ket{c}\otimes \ket{0^\lambda}$ for $c=0$ to implement $Z^B(a)$ and $c=1$ to implement $X^B(b)$, executing the prover's unitary $U$ (which acts on $\mH_{\reg{M}} \otimes \mH_{\reg{P}}$) followed by $\prod_i H_i^c$ on $\mH_{\reg{M}_2}$, measuring $\sigma_Z(a)$ or $\sigma_X(b)$ on $\mH_{\reg{M}_2}$, and executing $(\prod_i H_i^c)U^\dagger$. 

\paragraph{Step 2: The prover's observables.} 
Similarly to the proof of Lemma~\ref{lem:wiesner}, we introduce a purified description of the oracle $\mO_{\mF_{AC}}$, which we denote $\mO_{\mF'_{AC}}$. The only relevant changes are: 
\begin{itemize}
\item $\mathcal{O}_{\mF'_{AC}}(I, \texttt{getId})$: generate a uniformly random $\mathsf{id}\in \{0,1\}^{2\lambda}$. Create $2\lambda$ EPR pairs $\ket{\phi}_{\mathsf{id}} = \ket{\mathrm{EPR}}_{\reg{A}_{\mathsf{id}}\reg{B}_{\mathsf{id}}}^{\otimes 2\lambda}$ and store $(\mathsf{id},\texttt{unmeasured},\ket{\phi}_{\mathsf{id}})$ in the database $B_{AC}'$. Return $\mathsf{id}$.
\item $\mathcal{O}_{\mF'_{AC}}(R, \texttt{getMoney}, \mathsf{id})$ for $R\in\{I,P\}$: If no record in $B_{AC}'$ with identifier \textsf{id} exists, returns $\perp$. Otherwise, return the second half of each EPR pair in $\ket{\phi}_{\mathsf{id}}$, i.e. register $\reg{B}_{\mathsf{id}}$, the first time it is called. If called again with the same \textsf{id} argument, returns $\perp$. 
\item $\mathcal{O}_{\mF'_{AC}}(V, \texttt{secret}, \mathsf{id})$:  If no record in $B_{AC}'$ with identifier \textsf{id} exists, returns $\perp$. If there is a record of the form  $(\mathsf{id},\texttt{measured},x)$ in $B'_{AC}$ then return $x$. Otherwise, the record is of the form $(\mathsf{id},\texttt{unmeasured},\sigma_{\reg{A}_\id\reg{B}_\id})$. In this case, 
select a uniformly random $\theta\in\{0,1\}^\lambda$ such that $|\theta|_H=\frac{n}{2}$ and a basis $\mZ$ for $\mathbb{Z}_2^\lambda$ and measure register $\reg{A}_{\mathsf{id}}$ of $\sigma$ using the POVM $E(\theta,\mZ)=\{W\proj{\$}_{v,\theta} W^\dagger|v\in\{0,1\}^\lambda\}$, with $W$ defined in~\eqref{eq:def-v} and $\ket{\$}_{v,\theta}$ as in~\eqref{eq:wiesner-state}
 to obtain an outcome $v$. Return $v$. Update the database entry to $(\mathsf{id},\texttt{measured},(v,\theta,\mZ))$.
\end{itemize}
The operation of $\mathcal{O}_{\mF'_{AC}}$ is indistinguishable from $\mO_{\mF_{AC}}$ from the point of view of the parties $I$ or $P$, because measuring  the half-EPR pairs in $\reg{A}_{\mathsf{id}}$ using the POVM $E(\theta,\mZ)$ for uniformly chosen $\theta,\mZ$ has the effect of projecting the state in $\reg{A}_{\mathsf{id}}$ to $\ket{\$}_{v,\theta,\mZ}$ for uniformly random $v,\theta,\mZ$. In more detail, observe that for any $\theta$ and $v$,
\begin{align*}
W \ket{\$}_{v,\theta}  &= W \Big(\prod_i H_i^{\theta_i}\Big) \ket{v}\\
&= W \Big(\prod_i H_i^{\theta_i}\Big) X(v) \ket{0}  \\
&= W  X(v\cdot \bar{\theta}) Z(v\cdot \theta) \Big(\prod_i H_i^{\theta_i}\Big)  \ket{0} \\
&= W  X(d) Z(e) \ket{L_\theta}  \\
&=   X_\mZ(d) Z_\mZ(e) W \ket{L_\theta}  \\
&= X_\mZ(d) Z_\mZ(e) \ket{A} \\
&= \ket{\$}_{v,\theta,\mZ}\;,
\end{align*}
where we let $\ket{L_\theta} = |L_\theta|^{-1/2} \sum_{\lambda\in L_\theta} \ket{\lambda}$ and $d = v\cdot \bar{\theta}$, $e = v\cdot \theta$ with $\cdot$ denoting entrywise product. 

At the end of the first phase of the protocol, let $\mathsf{id}=st_P=st_V$ be the identifier upon which $V_1$ and $\hat{P}_1$ agreed (if there is no agreement then we do not need to specify the extractor). For $a,b\in\{0,1\}^\lambda$ define $Z^A(b) = \sigma_Z(b)$ and $X^A(a) = \sigma_X(a)$, where the $\sigma_Z$ and $\sigma_X$ operators act on the register  $\reg{A}_{\mathsf{id}}$ associated with the entry $\mathsf{id}$ in the database $B_W'$ maintained by $\mF'_W$ in the purified version of the scenario. Let $\ket{\psi} \in \mH_{\reg{A}_{\mathsf{id}}}\otimes \mH_{\reg{M}_2} \otimes \mH_\reg{P}$ be the joint state of the registers $\reg{A}_{\mathsf{id}}$ (for the fixed $\mathsf{id}$ above) and the prover at the start of the second phase of the protocol, with $\reg{M}_2$ the register destined to receive the prover's answer $m$. Without loss of generality we assume that this state is pure.

\paragraph{Step 3: Correlations.}
Suppose that the probability that $\hat{P}_2$ succeeds in the proof phase of the protocol is at least $1 - \mu$. Then the probability that $\hat P_2$ succeeds conditioned on $c=0$ is at least $1 - 2\mu$. `Success' in the $c=0$ case means the following. Let $m$ be the outcome obtained by the prover after applying $V_0$ and measuring in the standard basis. Let $(v,\theta,\mathcal{Z})$ be as above. Let $d = v\cdot \bar{\theta}$. Then the condition checked by the verifier in the protocol is that $m + Wd \in A = \text{span}\{ z_i:\, \theta_i = 1\}$. Equivalently, $ W^\dagger (m + W d) \in W^\dagger A = L_\theta$, which using the definition of $d$ is equivalent to the condition that $(W^\dagger m)_i = v_i$ for all $i$ such that $\theta_i = 0$. Observe that if the oracle, on call $\mathcal{O}_{\mF'_{AC}}(V, \texttt{secret}, \mathsf{id})$ made by $V_2$, had measured all qubits directly in the computational basis instead of using the POVM $E(\theta,\mZ)$, it would have obtained a string $w$ such that $(W^\dagger w)_i = v_i$ whenever $\theta_i=0$. Using the above, it follows that $(W^\dagger m)_i = (W^\dagger w)_i$ for all $i$ such that $\theta_i=0$. 

Recall the definition of $X^B(a)$ and $Z^B(b)$ in~\eqref{eq:def-prover-obs-subspace}, and let $\tilde{X}^B(a') = X^B(W^\dagger a')$ and $\tilde{Z}^B(b')=Z^B(W^\dagger b')$. Similarly let $\tilde{X}^A(a')= X^A(W^\dagger a')=W^\dagger \sigma_X(a')W$ and $\tilde{Z}^A(b')= Z^A(W^\dagger b')=W^\dagger\sigma_Z(b')W$. The preceding discussion shows that on average over uniformly random $\theta$ of weight $\frac{\lambda}{2}$ it holds that
\begin{equation}\label{eq:cor-sub-1}
\Es{\theta}\, \bra{\psi} \tilde{Z}^A(\bar{\theta}) \otimes \tilde{Z}^B(\bar{\theta}) \ket{\psi} \geq (1)(1-2\mu)+(-1)(2\mu) \,\geq\, 1-4\mu\;,
\end{equation}
and similarly
\begin{equation}\label{eq:cor-sub-2}
\Es{\theta}\,\bra{\psi} \tilde{X}^A(\theta) \otimes \tilde{Z}^B(\theta) \ket{\psi} \geq (1)(1-2\mu)+(-1)(2\mu) \,\geq\, 1-4\mu\;.
\end{equation}
In both equations the left-hand side should be understood as an average  over the random choice $\mZ$. By an averaging argument, for any given $\mZ$ the same equations will hold with right-hand side $1-4\mu_\mZ$ for some parameter $\mu_{\mZ}$ such that $\Es{\mZ} \mu_\mZ \leq 2\mu$. For the time being, fix any $\mZ$. 

\paragraph{Step 4: Rigidity.}
Setting $\eps=8\mu_\mZ$ it follows from~\eqref{eq:cor-sub-1} and~\eqref{eq:cor-sub-2} that the assumption~\eqref{eq:n2-0} of Lemma~\ref{lem:gh} is satisfied by the observables $\tilde{Z}^B$ and $\tilde{X}^B$ acting on the state $\ket{\tilde{\psi}} = W_\reg{A} \otimes \Id \ket{\psi}$. Note that here we use case (i) of the ``finally'' part of the lemma, which allows us to restrict the condition to uniform expectation over strings of Hamming weight $\frac{\lambda}{2}$.
%Note further that the lemma requires the operators on $\reg{A}$ to be exact Pauli operators, while here they are conjugated by $W$. Since this unitary can be shifted to $\reg{B}$ using the relation $(W\otimes \Id)\ket{\mathrm{EPR}}=(\Id\otimes W^T)\ket{\mathrm{EPR}}$, the lemma still applies.
The conclusion of the lemma implies that there is an isometry $\tilde{\Phi}^B_\mathcal{Z} : \mH_\reg{B} \to ((\C^2)^{\otimes \lambda})_{\reg{B'}} \otimes \mH_{\reg{\hat{B}}}$, where we used $\reg{B}$ to denote the concatenation of $\reg{M}_2$ and $\reg{P}$, such that
\begin{equation}
\label{eq:EPR}
\Tr\Big( \big(\bra{\mathrm{EPR}}_{\reg{AB'}}^{\otimes
      \lambda}\otimes \bra{aux}_{\hat{\reg{B}}}\big)\,\big( W_\reg{A} \otimes \tilde{\Phi}^B_\mathcal{Z}(\proj{\psi}_{\reg{AB}})\big)\big( \ket{\mathrm{EPR}}_{\reg{AB'}}^{\otimes \lambda} \otimes \ket{aux}_{\hat{\reg{B}}}\big) \Big) = 1 - O\big(\sqrt{\mu_\mZ}\big)\;.
\end{equation}
Moreover, the isometry $\tilde{\Phi}^B_\mZ$ is defined as  
\[ \tilde{\Phi}^B_\mZ:\,\ket{\varphi}_\reg{B} \mapsto \Big(\frac{1}{2^\lambda} \sum_{a,b}\, \Id_{\reg{B}'} \otimes   \sigma_X(a) \sigma_Z(b) \otimes \tilde{X}^B(a) \tilde{Z}^B(b) \Big) \ket{\mathrm{EPR}}_{\reg{B}'\reg{A}'}^{\otimes \lambda}\ket{\varphi}_{\reg{B}} \;,\]
so $\reg{\hat{B}}$ in the output of $\tilde{\Phi}^B$ is the concatenation of $\reg{A}'$ and $\reg{B}$.
We have
\begin{align*}
\tilde{\Phi}^B_\mZ \ket{\varphi}_\reg{B} &=\Big( \frac{1}{2^\lambda} \sum_{a,b} \,\Id_{\reg{B}'}\otimes\sigma_X(a) \sigma_Z(b) \otimes \tilde{X}^B(a) \tilde{Z}^B(b)  \Big) \ket{\mathrm{EPR}}_{\reg{B}'\reg{A}'}^{\otimes \lambda}\ket{\varphi}_\reg{B}\\
&= \Big( \frac{1}{2^\lambda} \sum_{a,b} \,\Id_{\reg{B}'}\otimes \sigma_X(a) \sigma_Z(b) \otimes  {X}^B(W^\dagger a) {Z}^B(W^\dagger b)  \Big) \ket{\mathrm{EPR}}_{\reg{B}'\reg{A}'}^{\otimes \lambda} \ket{\varphi}_\reg{B}\\
&= \Big( \frac{1}{2^\lambda} \sum_{a',b'} \,\Id_{\reg{B}'}\otimes  \sigma_X(W a') \sigma_Z(W b') \otimes {X}^B(a') {Z}^B(b')  \Big)\ket{\mathrm{EPR}}_{\reg{B}'\reg{A}'}^{\otimes \lambda}\ket{\varphi}_\reg{B}\\
&=  \Big( \frac{1}{2^\lambda} \sum_{a',b'} \, W_{\reg{B}'}\otimes   W\sigma_X( a') \sigma_Z( b') \otimes{X}^B(a') {Z}^B(b') \Big) \ket{\mathrm{EPR}}_{\reg{B}'\reg{A}'}^{\otimes \lambda}\ket{\varphi}_\reg{B}\;,
\end{align*}
where for the last equality we used that $\sigma_X(W a')= W \sigma_X(a') W^\dagger$ and 
\[ W^\dagger \otimes \Id \ket{\mathrm{EPR}}^{\otimes \lambda} = \Id \otimes \overline{W} \ket{\mathrm{EPR}}^{\otimes \lambda} = \Id \otimes W \ket{\mathrm{EPR}}^{\otimes \lambda}\]
since $W$ has coefficients in $\{0,1\}$. This shows that if we let $\Phi^B$ be defined as 
\[ {\Phi}^B:\,\ket{\varphi}_\reg{B} \mapsto \Big(\frac{1}{2^\lambda} \sum_{a,b} \,   \Id_{\reg{B}'}\otimes  \sigma_X(a) \sigma_Z(b) \otimes {X}^B(a) {Z}^B(b) \Big) \ket{\mathrm{EPR}}_{\reg{B}'\reg{A}'}^{\otimes \lambda} \ket{\varphi}_\reg{B}\]
then $\Tr_{\reg{A}'}\Phi^B$ does not depend on $\mZ$ and moreover 
\begin{equation}\label{eq:phi-tildephi}
\Tr_{\reg{A}'}\Phi^B(\cdot) \,=\, W_{\reg{B}'}\Tr_{\reg{A}'} \tilde{\Phi}^B_\mZ(\cdot) W_{\reg{B}'}^\dagger\;.
\end{equation}
 Selecting a random $\theta\in\{0,1\}^\lambda$ and measuring register $\reg{A}$ of state $\ket{\psi}$ using the POVM $E(\theta,\mZ)$ as would be done by $\mathcal{O}_{\mF'_W}(V, \texttt{secret}, \mathsf{id})$, by contractivity of the trace distance~\eqref{eq:EPR} implies that on average over uniformly random $\theta$ and $v$ ($v$ is uniformly random because we know that the reduced density of $\ket{\psi}$ on $\reg{A}_{\mathsf{id}}$ is totally mixed),
\begin{equation}\label{eq:subspace-close-2}
\big\| \Tr_{\reg{A'}} \tilde{\Phi}^B_\mathcal{Z} \Tr_{\reg{A}} \big( (\proj{\$}_{v, \theta, \mathcal{Z}} \otimes \Id)\proj{\psi} \big)- \ket{\$}_{v, \theta} \ket{aux}_{\reg{B}} \big\|_1 \,=\, O\big(\epsilon^{1/4}\big)\;.
\end{equation}
Equivalently, using~\eqref{eq:phi-tildephi},
\begin{equation}\label{eq:subspace-close-2b}
\big\| \Tr_{\reg{A}'} {\Phi}^B \Tr_{\reg{A}} \big( (\proj{\$}_{v, \theta, \mathcal{Z}} \otimes \Id)\proj{\psi} \big)- \ket{\$}_{v, \theta,\mathcal{Z}} \ket{aux}_{\reg{B}} \big\|_1 \,=\, O\big(\epsilon^{1/4}\big)\;.
\end{equation}
since $\ket{\$}_{v, \theta,\mathcal{Z}} = W \ket{\$}_{v,\theta}$, where $\ket{\$}_{v, \theta}$ is the Wiesner money state associated with $(v, \theta)$. 

\paragraph{Step 5: Definition of $E$.}
Consider $E$ that uses its access to $\hat{P}_2$ in order to implement the isometry $\Phi^B$. This is possible using black-box access to $\hat{P}_2$. Then, $E$ traces out the register $\reg{\hat{B}}$ and returns the state contained in $\reg{B}'$. Using~\eqref{eq:subspace-close-2b} it follows that the resulting state is $O(\mu_\mZ^{1/4})$ close to $\ket{\$}_{v, \theta,\mZ}$, and in particular must be accepted by the verification procedure $\mR_{AC}$ with probability $1-O(\mu_\mZ^{1/4})$. Averaging over a uniformly random $\mZ$ and using Jensen's inequality and $\Es{\mZ} \mu_\mZ \leq 2\mu$ we deduce that the same holds on average over $\mZ$, with probability $1-O(\mu^{1/4})$.
 Choosing $\kappa$ to be any constant $\kappa < 1 $ such that this quantity is positive for all $p=1-\mu \geq \kappa$ completes the proof.
\end{proof}

\section{Arguments of Quantum Knowledge for QMA relations}
\label{sec:qma-arguments}

The main result of this section is Theorem~\ref{thm:meas}, which gives a protocol (with classical verifier) for the verification of any $\QMA$ relation, a natural analogue of the notion of $\NP$ relation the definition of which we recall in Section~\ref{sec:qma-scenario} below. Since the soundness of this protocol in general only holds against QPT provers we refer to it as a ``classical argument of quantum knowledge'' for any $\QMA$ relation. In addition we note that for general $\QMA$ relations the completeness property of the associated AaP protocol requires the honest prover to be given multiple copies of the $\QMA$ witness in order to succeed with high probability. It is still possible for completeness to hold with a single witness if one is willing to assume that the $\QMA$ relation has been pre-processed to a specific form; see the statement of Theorem~\ref{thm:meas} below.

Our construction is based on the classical verification protocol for $\QMA$ introduced in~\cite{measurement}, which we review in Section~\ref{sec:meas-protocol}. Before doing so we introduce the Agree-and-Prove scenario.

\subsection{Agree-and-Prove scenario for $\QMA$ relations}
\label{sec:qma-scenario}

We first recall the quantum extension of an $\NP$ relation $\mR$, following~\cite{coladangelo2019non,broadbent2019zero}. 

\begin{definition}[$\QMA$ relation]
  \label{def:qma-relation}
	A \emph{$\QMA$ relation} is specified by a triple $(Q,\alpha,\beta)$ where $\alpha,\beta:\N\to[0,1]$ satisfy $\beta(n)\leq \alpha(n)$ for all $n\in\N$ and $Q = \{Q_n\}_{n\in\N}$ is a uniformly generated family of quantum circuits such that for every $n$, $Q_n$ takes as input a string $x\in\{0,1\}^n$ and a quantum state $\ket{\psi}$ on $p(n)$ qubits (i.e.\ $Q_n$ takes $n+p(n)$ input qubits for some polynomial $p$ that is implicitly specified by $Q$, and is assumed to immediately measure its first $n$ input qubits in the computational basis) and returns a single bit.
	\end{definition}
	
	To a $\QMA$ relation $(Q,\alpha,\beta)$ we associate two sets
\[R_{Q,\alpha} = \bigcup_{n\in \N} \Big\{(x,\sigma)\in \{0,1\}^n\times \Density(\C^{p(n)})\,\big|\;  \Pr(Q_{n}(x, \sigma)=1)\geq \alpha\Big\}\;
\]
and
\[N_{Q,\beta} = \bigcup_{n\in \N} \Big\{x\in \{0,1\}^n\,\big|\; \forall \sigma \in  \Density(\C^{p(n)})\,,\;  \Pr(Q_{n}(x, \sigma)=1)< \beta\Big\}\;.\]
 We say that a (promise) \emph{language $L=(L_{yes},L_{no})$ is specified by the $\QMA$ relation $(Q,\alpha,\beta)$} if
	\begin{equation}\label{eq:l-qma}
	L_{yes} \subseteq \Big\{x \in \{0,1\}^* \big|\; \exists \sigma \in \Density(\C^{p(n)})\,, \; (x,\sigma)\in R_{Q,\alpha}\Big\}\;,
	\end{equation}
	and $L_{no} \subseteq N_{Q,\beta}$. Note that, whenever $\alpha - \beta > 1/\poly(n)$, any language $L$ that is specified by $(Q,\alpha,\beta)$ lies in $\QMA$. Conversely, any language in $\QMA$ is specified by some $\QMA$ relation in a straightforward (non-unique) way.

\paragraph{The local Hamiltonian problem}
In the following, we make use of Kitaev's circuit-to-Hamiltonian construction~\cite{kitaev2002classical,kempe20033}, which associates with any promise language $L=(L_{yes},L_{no})\in \QMA$ and $x \in L_{yes}\cup L_{no}$ an instance of the \emph{local Hamiltonian problem}. An instance of the local Hamiltonian problem is specified by a local Hamiltonian operator $H$  and two real numbers $\alpha > \beta$. The instance is a `YES instance' if $H$ has smallest eigenvalue at most $\alpha$, and a `NO instance' if it is at least $\beta$. We give the name `ground state' to those states $\rho$ such that $\Tr(H \rho)$ takes on its minimum value, and we may refer to the space of all ground states of $H$ (if there is more than one) as the `ground space' of $H$. We call the minimum eigenvalue of $H$ its `ground energy'. %We say $H$ is `gapped' when $H$ is such that the gap between its smallest eigenvalue and the second smallest eigenvalue is a fraction of $\|H\|_{op}$ which is at least inverse-polynomial in $|x|$, where $\|H\|_{op}$ refers to the operator norm of $H$. \znote{do we still use this `gapped' terminology anywhere?}

\paragraph{Agree-and-Prove scenario.}
Fix a $\QMA$ relation $(Q,\alpha,\beta)$. We associate an AaP scenario to $Q$ as follows. 

\begin{itemize}
\item Setup functionality $\mF_Q(1^\lambda)$. We consider a ``trivial'' setup, i.e.\ the initialization procedure does nothing and there is no associated oracle $\mO_{\mF_Q}$. 
\item Agreement relation $\mC_Q(1^\lambda,x)$: returns $1$ for any $\lambda$ and $x$.\footnote{The agreement relation does not even require that $x\in R_{Q,\alpha}\cup N_{Q,\beta}$, as in general this cannot be efficiently verified.} 
\item Proof relation $\mR_Q(1^\lambda,x,\rho)$: executes the verification circuit $Q_{|x|}$ on the pair $(x,\rho)$ and returns the outcome. 
\end{itemize}

We end by discussing some assumptions on a $\QMA$ relation under which our results will hold. Let $(Q,\alpha,\beta)$ be a $\QMA$ relation. We require that the relation satisfies the following properties: 
\begin{enumerate}[label=(\textbf{Q.\arabic*})]
\item \label{enu:qma-a1}The completeness parameter $\alpha$ is negligibly close to $1$, and the soundness parameter $\beta$ is bounded away from $1$ by an inverse polynomial. 
	\item \label{enu:qma-a2} For any $x\in \{0,1\}^n$ there is a local Hamiltonian $H=H_x$ that is efficiently constructible from $x$ and satisfies the following. First, we assume that $H$ is expressed as a linear combination of tensor products of Pauli operators with real coefficients chosen such that $-\Id \leq H\leq \Id$. Second, whenever there is $\sigma$ such that $(x,\sigma)\in R_{Q,\alpha}$, then  $\Tr(H\sigma)$ is negligibly close to $-1$ and moreover any $\sigma$ such that  $\Tr(H\sigma)\leq -1 + \delta$ satisfies $\Pr(Q_{|x|}(x,\sigma)=1)\geq 1-r(|x|)q(\delta)$ for some polynomials $q,r$ depending on the relation only. Third, whenever $x\in N_{Q,\beta}$ then the smallest eigenvalue of $H$ is larger than $-1+1/s(|x|)$, where $s$ is another polynomial depending on the relation only. 
	\end{enumerate}
The first assumption~\ref{enu:qma-a1} is benign and can be made without loss of generality by applying standard amplification techniques such as~\cite{marriott2005quantum} to the $\QMA$ verification circuit, followed by the circuit-to-Hamiltonian construction to obtain $H$. The second assumption~\ref{enu:qma-a2} is somewhat more restrictive. For any QMA relation $(Q, \alpha, \beta)$, the existence of Hamiltonians $H=H_x$ satisfying all claimed properties follows from Kitaev's circuit-to-Hamiltonian construction, in its amplified form used in~\cite[Protocol 8.3]{mahadev2018classical}. However, this construction is not in general ``witness-preserving'' in the sense described in the second assumption above: to construct an eigenstate of the Hamiltonian $H$ with small enough eigenvalue one may need to use many copies of a witness $\sigma$ for the QMA verification procedure $Q(x, \cdot)$. Hence depending on the $\QMA$ relation $Q$ one is interested in, completeness of our Agree-and-Prove scenario for it will hold with respect to the class of input generation algorithms $\mI_Q^{(\ell)}$ described in Section~\ref{sec:qma-agree-protocol} for sufficiently large (polynomial) $\ell\geq 1$. 

	\subsection{The protocol}
	\label{sec:qma-protocol}
	
	Before giving the protocol associated with the Agree-and-Prove scenario introduced in the previous section, in the following subsection we recall the high-level structure of the verification protocol from~\cite{measurement}, on which our protocol will be based. 
	
	\subsubsection{The verification protocol from~\cite{measurement}}
	\label{sec:meas-protocol}
	
	In the protocol from~\cite{measurement}, which we will refer to as the \emph{verification protocol}, the input to the verifier is an $n$-qubit Hamiltonian $H$ that is expressed as a linear combination of tensor products of $\sigma_X$ and $\sigma_Z$ Pauli operators. The input to the prover is a ground state of $H$. Both parties also receive a security parameter $\lambda$. 
At a high level, the verification protocol has the following structure: 
\begin{enumerate}
\item The verifier selects a \emph{basis string} $h\in\{0,1\}^n$ according to a distribution that depends on $H$. The verifier then randomly samples a pair of keys, $(pk,sk)$, consisting of a public key $pk$ and secret key $sk$. (The distribution according to which $(pk,sk)$ is sampled depends on $h$.) The choice of keys specifies an integer $w$ of size $\poly(n,\lambda)$. The verifier sends $pk$ to the prover. 
\item The prover returns an $n$-tuple of \emph{commitment strings} $y=(y_1,\ldots,y_n)$, where each $y_i$ lies in some alphabet $\mathcal{Y}$. 
\item The verifier selects a \emph{challenge bit} $c\in \{0,1\}$ and sends $c$ to the prover. 
\item If $c=0$ (``test round''), the prover returns a string $b\in\{0,1\}^n$ and $x_1,\ldots,x_n\in \{0,1\}^w$. If $c=1$ (``Hadamard round''), the prover returns a string $b\in\{0,1\}^n$ and $d_1,\ldots,d_n\in\{0,1\}^w$. 
\item In case $c=0$ the verifier uses $pk$, $y$, $b$ and $x_1,\ldots,x_n$ to make a decision to accept or reject. (In a test round the verifier never checks any properties of the prover's state; it only checks that the prover is, loosely speaking, doing the correct operations.) In case $c=1$ the verifier uses $sk$ to decode $y, b$ and $d_1,\ldots,d_n$ into \emph{decoded measurement outcomes} $(m_1,\ldots,m_n)\in\{0,1\}^n$. (For the case of a honest prover, the decoded outcomes $m$ correspond to the outcomes of measuring a ground state of $H$ in the bases indicated by $h$, with $h_i=0$ indicating that the $i$th qubit should be measured in the computational basis and $h_i=1$ that the $i$th qubit should be measured in the Hadamard basis. The prover remains ignorant throughout the entire protocol of the verifier's choice of $h$.)
\item In case $c=1$ the verifier makes a decision based on the decoded measurement outcomes and the instance $x$, as described in~\cite[Protocol 8.1]{mahadev2018classical}. 
\end{enumerate}

To model the verifier and prover in the protocol as ITM, which will be required to establish the extractor needed to show soundness, we introduce registers associated with each party and the messages that they exchange in the protocol. Let $\reg{K}$ and $\reg{C}$ denote registers that contain the verifier's first and second messages respectively, i.e.\ the key $pk$ and the challenge bit $c$. Let $\reg{T}$ denote the verifier's private space. Let $\reg{Y}$ denote the register measured by the prover to obtain the prover's first message $y$, and $\reg{M}$ the register measured to obtain the prover's second message $(b,x_1,\ldots,x_n)$ or $(b,d_1,\ldots,d_n)$, depending on $c=0$ or $c=1$ respectively. Let $\reg{S}$ denote the prover's private space.

The natural description of the prover as an ITM consists of 
 (i) its initial state $\sigma \in \Density(\mH_{\reg{YMS}})$, (ii) a unitary $V_0$ acting on $\reg{KYMS}$, and (iii) two unitaries $V$ and $V'$ acting on $\reg{MS}$, where $V$ is the action of the prover on challenge $c=0$ and $V'$ its action on challenge $c=1$. In either case the register $\reg{M}$ is measured in the computational basis to obtain the prover's answer.\footnote{This description slightly departs from the ``canonical'' formalism introduced in Section~\ref{sec:black-box} by using different notations for the prover's unitaries associated with different rounds as well as different challenges. This is for convenience and it is not hard to find an equivalent description to the one given here that uses the language from Section~\ref{sec:black-box}. In this case, the four registers $\reg{KYCM}$ would all be considered network registers, and are thus accessible to the extractor.} 

For convenience we introduce a slightly different representation of the prover, that matches the presentation from~\cite{measurement} and which can be straightforwardly simulated given black-box access to the natural representation described in the previous paragraph. First, we replace $V_0$ by the unitary $U_0 = VV_0$. Note that this is well-defined and does not change the prover's first message, since $V$ does not act on $\reg{Y}$. Second, we define $U = H^{\otimes (n+nw)} V'V^\dagger $, where the Hadamard gates act on the $(n+nw)$ qubits in register $\reg{M}$. It is then immediate that given a natural representation of the prover as three unitaries $(V_0,V,V')$ the pair of unitaries $(U_0,U)$ provides a different representation of the same prover, who now behaves as follows: 
\begin{enumerate}
\item Upon reception of $pk$, the prover applies $U_0$ to its initial state (to which $\ket{pk}$ has been appended), measures the first $n\log|\mathcal{Y}|$ qubits in the computational basis and returns the outcome;
\item Upon reception of $c=0$, the prover directly measures the first (remaining) $n+nw$ qubits in the computational basis and returns the outcome;
\item Upon reception of $c=1$, the prover applies the unitary $U$,  measures the first (remaining) $n+nw$ qubits in the Hadamard basis and returns the outcome.
\end{enumerate}
In both cases $c=0$ and $c=1$ we denote the first $n$ qubits measured by the prover (in step 2 or in step 3, respectively), whose associated measurement outcomes are denoted by $b$ in the protocol, the \emph{committed qubits}. 

\subsubsection{The Agree-and-Prove protocol}
\label{sec:qma-agree-protocol}

In this section we define a protocol $\mK_Q$ for the AaP scenario $(\mF_Q,\mC_Q,\mR_Q)$ associated to a $\QMA$ relation $(Q,\alpha,\beta)$ as in Section~\ref{sec:qma-scenario}. 
Recall that an Agree-and-Prove protocol consists of two phases, an ``agree'' phase and a ``prove'' phase. The agree phase in protocol $\mK_Q$ is simple:
\begin{itemize}
\item The prover $P_1$ takes as input $1^\lambda$ and a CQ state $\rho_{\reg{AUX}_P}$. It interprets the classical part of $\rho$ as a string $z\in \{0,1\}^n$ and the quantum part as $\ell$ witnesses $\sigma_1,\ldots,\sigma_\ell$ each of the same number of qubits.  (We assume that the integers $n$ and $\ell$ are both encoded in a canonical way in the state $\rho_{\reg{AUX}_P}$.) It sends $z$ to the verifier and outputs the statement $x_p=z$ and the quantum state $\rho_{st_P}=(\sigma_1,\ldots,\sigma_\ell)$ (which may in general be entangled).
\item The verifier $V_1$ takes as input $1^\lambda$ and a classical auxiliary input $\rho_{\reg{AUX}_v}$. It parses $\rho_{\reg{AUX}_v}$ as the specification (in binary) of an input length $n$ followed by a string $x\in\{0,1\}^n$. It receives $z$ from $P_1$. If $z\neq x$ it aborts. Otherwise, it produces the statement $x_v=x$. 
\end{itemize}
For the proof phase $V_2$ and $P_2$ behave exactly as the verifier and prover in the verification protocol described in Section~\ref{sec:meas-protocol}, first defining the Hamiltonian $H_v$ and $H_p$ from their respective statements $x_v$ and $x_p$ according to assumption~\ref{enu:qma-a2}. Note that $H_v$ (resp. $H_p$) acts on $\poly(n)$ qubits, with $n=|x_v|$ (resp. $n=|x_p|$). Of course, in case of a honest implementation of the protocol it holds that $x_v=x_p$.

To complete the description of the protocol we define a class of of input-generation algorithms under which completeness holds. 
%In terms of soundness we allow any $I$ that generates valid instances. More formally, $\mI_Q^{(s)}$ contains any input generation algorithm $I$ that returns a CQ state of the form 
%\begin{equation}\label{eq:input-qma}
%\sum_{x\in \{0,1\}^*} p_x \big(\proj{x}\otimes \proj{aux_x}\big)_{\reg{AUX}_V} \otimes \big(\proj{x} \otimes \rho_x \big)_{\reg{AUX}_P}\;. 
%\end{equation}
% Here $(p_x)$ is a distribution over valid instances for the $\QMA$ relation, i.e.\ the set 
%\[ N_{Q,\beta} \cup \{x:\exists \sigma, (x,\sigma)\in R_{Q,\alpha}\}\;,\]
%and for each $x$ in the support of $p_x$, $aux_x$ and $\rho_x$ are an arbitrary classical string and density matrix respectively.\tnote{added auxiliary input for verifier}
%For completeness 
We consider only input generation algorithms that generate positive instances of the language, accompanied with $\ell$ copies of a valid proof, where $\ell\geq 1$ is a parameter. That is, for any $\ell\geq 1$, $\mI_Q^{(\ell)}$ contains any input generation algorithm $I$ that returns a CQ state of the form
\begin{equation}\label{eq:input-qma}
\sum_{x\in \{0,1\}^*} p_x \proj{|x|,x}_{\reg{AUX}_V} \otimes \big(\proj{x} \otimes \sigma_x^{\otimes \ell} \big)_{\reg{AUX}_P}\;,\footnote{For clarity we omit explicitly writing out $|x|$ in both registers.}
\end{equation}
 where $(p_x)$ is any distribution over positive instances for the $\QMA$ relation, i.e.\ the set 
\[  \big\{x:\exists \sigma, (x,\sigma)\in R_{Q,\alpha}\big\}\;,\]
and moreover for each $x$, $\sigma_x$ is such that $(x,\sigma_x) \in R_{Q,\alpha}$.\footnote{Our definition requires the $\ell$ copies of the witness to be identical. One could generalize this to tensor products of different witnesses, or even entangled states that satisfy certain conditions; since we cannot think of a natural setting where the generalization is useful we do not consider it here.}

\subsection{Construction of an extractor}
\label{sec:qma-extractor}

	To show soundness of the protocol described in the previous section we ultimately have to describe an extractor that succeeds in the soundness experiment from Section~\ref{sec:security}. Before giving the final extractor, which will be done in the next section, in this section we review parts of the analysis given in~\cite{measurement} that will be relevant for our construction. 
	
	We first quote a claim from~\cite{mahadev2018classical}. To make the claim comprehensible we need the following definition. 
	
	\begin{definition}
	\label{def:trivial-prover}
A prover in the verification protocol described in Section~\ref{sec:meas-protocol}, represented as described in that section by an initial state $\sigma$ and a pair of unitaries $(U_0,U)$, is called ``trivial'' (implicitly, for a given input Hamiltonian to the protocol) if two conditions hold: (i) the prover's probability of being accepted in a test round (case $c=0$) is negligibly close to $1$, and (ii) the prover's unitary $U$ commutes with a computational basis measurement of the first $n$ qubits in register $\reg{M}$ (i.e.\ the committed qubits).
	\end{definition}
	
	\begin{claim}[Claim 5.7 in~\cite{mahadev2018classical}]\label{claim:57}
	For all trivial provers $P$, there exists an $n$-qubit state $\rho$ [which can be created from the prover's initial state using a polynomial-size quantum circuit] such that for all $h\in\{0,1\}^n$, the distribution over measurement results produced in the protocol with respect to $P$ for basis choice $h$ is computationally indistinguishable from the distribution which results from measuring $\rho$ in the basis determined by $h$ [with $h_i=0$ corresponding to a computational basis measurement and $h_i=1$ a Hadamard basis measurement].
	\end{claim}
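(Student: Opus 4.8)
The plan is to read the state $\rho$ off the prover's behaviour in test rounds and then check, one coordinate of $h$ at a time, that measuring $\rho$ reproduces the protocol's decoded outcomes up to computational indistinguishability. Condition~(i) is what lets us define $\rho$ at all, and condition~(ii) is what makes the definition independent of the verifier's hidden basis string $h$.

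First I would fix a public key $pk$ in the support of the honest key distribution, append $\ket{pk}$ to the prover's initial state $\sigma$, apply $U_0$, and measure register $\reg{Y}$ to get commitment strings $y=(y_1,\dots,y_n)$. A test round checks only that the revealed opening is a valid preimage under the function(s) indexed by $pk$, so condition~(i) — test-round acceptance negligibly close to $1$ — forces, via the structure of the commitment function families together with their binding/collapsing-type and adaptive-hardcore-bit properties, the residual state on register $\reg{M}$ to be computationally indistinguishable from one in which, for each $i$, the $i$-th committed qubit together with its preimage register is an honest commitment (consistent with $y_i$) to some single-qubit state. I define $\rho$ to be the $n$-qubit state obtained by tracing out everything but the committed qubits after applying the honest trapdoor-decoding map to this idealized state; one checks it is preparable by a polynomial-size quantum circuit acting on $\sigma$ (given the protocol keys). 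Crucially $\rho$ does not depend on $h$: the prover never sees $h$, and the keys for its various positions are computationally indistinguishable whatever $h$ is.

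The substantive step is to verify the match for a fixed $h$. For a coordinate with $h_i=0$ the verifier's $i$-th key is injective, so the honest commitment pins qubit $i$ to a classical value, and the Hadamard-round decoding of $(y_i,b_i,d_i)$ with $sk$ recovers exactly that value; condition~(ii) — that $U$ is block-diagonal in the committed qubits' computational basis — guarantees this value is unchanged by the prover's application of $U$, so $m_i$ is distributed as a computational-basis measurement of the $i$-th qubit of $\rho$, the gap being absorbed into the indistinguishability above. For a coordinate with $h_i=1$ the $i$-th key is a claw-free pair, and a direct calculation with the honest committed form and the trapdoor decoding shows that $m_i$ equals the outcome of a Hadamard-basis measurement of the $i$-th qubit of $\rho$, again up to the same kind of indistinguishability; here condition~(ii) is what ensures that applying $U$ before the Hadamard measurement of the committed qubits does not scramble them into something inconsistent with a fixed state. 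A hybrid over the $n$ coordinates — swapping each qubit from its idealized to its actual contribution one at a time and paying one indistinguishability hop per swap — then yields the claim for arbitrary $h$, with total distinguishing advantage bounded by $n$ times a negligible quantity.

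The main obstacle I anticipate is the transfer of information between the mutually exclusive branches $c=0$ and $c=1$: our only certificate that an \emph{arbitrary} trivial prover has ``honestly committed'' comes from the test round, whereas everything we need to conclude concerns the Hadamard round. The argument rests on the prover committing to $y$ and to its post-$U_0$ state before it learns the challenge $c$, so that the \emph{same} state is certified by a test round and then analysed in a Hadamard round; making this rigorous, and keeping track of the fact that every invocation of the commitment's cryptographic structure (key indistinguishability, binding, and the hardcore bit) is only computational, is the delicate part, and is precisely what forces the conclusion to be computational indistinguishability rather than equality of distributions.
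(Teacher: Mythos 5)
There is a genuine gap, and it sits exactly at the step you flag as ``substantive''. In the construction this claim refers to (Mahadev's Protocol 5.8, reproduced with a small modification in Section~\ref{sec:qma-extractor}), the state $\rho$ is defined \emph{operationally by running the prover's Hadamard-round attack}: apply $U_0$, measure the commitments $y$, then apply the prover's second unitary $U$, measure the $nw$ preimage/equation qubits of $\reg{M}$ in the Hadamard basis to obtain $d$, and apply the trapdoor-decoded corrections $\sigma_Z^{d_i\cdot(x_{0,i}\oplus x_{1,i})}$ to the \emph{unmeasured} committed qubits. Your $\rho$ is instead read off an ``idealized honest commitment'' certified by test-round acceptance and then decoded honestly, i.e.\ without the prover's $U$. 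These are different states, and the difference is fatal for the $h_i=1$ coordinates: condition~(ii) only says that $U$ commutes with a computational-basis measurement of the committed qubits. It allows $U$ to act arbitrarily on the preimage registers and the private space $\reg{S}$, and to apply phases controlled on the committed qubits (e.g.\ $U=\sigma_Z$ on committed qubit $i$, or a phase entangled with private junk); any of these leaves test-round behaviour and all computational-basis statistics untouched while changing the decoded Hadamard outcomes. So the sentence ``condition (ii) is what ensures that applying $U$ before the Hadamard measurement of the committed qubits does not scramble them'' is not true, and your $\rho$ does not in general reproduce the protocol's decoded outcomes. The actual role of condition~(ii) is the converse one: it guarantees that the $h_i=0$ outcomes are unchanged by whether $U$ has been applied, so that the \emph{single} state obtained after applying $U$ and the $d$-dependent corrections is simultaneously consistent with every basis string $h$.

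A second, related problem is your intermediate assertion that near-perfect test-round acceptance forces the post-$U_0$ state to be computationally indistinguishable from an honest commitment, coordinate by coordinate, to some single-qubit state. The ENTCF properties (injectivity/claw-freeness, collapsing, adaptive hardcore bit) do not give this, and the proof does not need it: the test round only certifies valid standard-basis openings, and the committed qubits may be arbitrarily entangled with one another and with $\reg{S}$ --- the $\rho$ of the claim is an arbitrary $n$-qubit state, and a per-coordinate idealization with a hybrid over coordinates discards precisely these correlations. The operational definition also makes the bracketed ``preparable by a polynomial-size circuit from the prover's initial state'' immediate, which is unclear for a state defined only through an indistinguishability argument. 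What you do get right is the skeleton around this: the commitment $y$ being fixed before the challenge $c$ is what lets the test round certify the same state analysed in the Hadamard round, and handling the $h$-dependence by sampling keys for one fixed basis choice and invoking key indistinguishability (so the conclusion is only computational) matches the intended argument; but the proof must be organized around the $\rho$ defined as in Section~\ref{sec:qma-extractor}, with the prover's $U$ folded into it, rather than around an idealized honest commitment.
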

	
	The statements in brackets in the claim has been inserted by us. 
	As will soon be made clear, the ``polynomial-size circuit'' referred to in this statement is a circuit that is obtained by small modifications of the provers' actions in the protocol (i.e.\ the unitaries $U_0$ and $U$ described in Section~\ref{sec:meas-protocol}). The claim of ``computational indistinguishability'' is based on a cryptographic assumption that underlies soundness of the verification protocol from~\cite{measurement}: informally, the distribution of outcomes is ``indistinguishable'' from measurements on $\rho$ \emph{to any (classical or quantum) computationally bounded adversary}, assuming that the Learning with Errors (LWE) problem is intractable for quantum polynomial-time procedures. We refer to~\cite{measurement} for more details on this computational assumption. 
	
	For our purposes we need a slightly modified version of Claim~\ref{claim:57}, stated below. 
	
		\begin{claim}\label{claim:57-new}
		There exist constants $c_1,C_1>0$ such that the following holds. 
		Let $H$ be a Hamiltonian. Let $P$ be a prover that is accepted in the verification protocol associated with $H$, conditioned on a test round ($c=0$), with probability $1-\eps$, for some $\eps \geq 0$. Then there exists an $n$-qubit state $\rho$ (which can be created from the prover's initial state using a polynomial-size quantum circuit) such that on average over $h\in\{0,1\}^n$ sampled by the verifier at step 1 of the protocol, the distribution over decoded measurement outcomes obtained by the verifier at step 5 of the protocol (in case $c=1$) is computationally indistinguishable from some distribution on $\{0,1\}^n$ that is within statistical distance at most $C_1 \eps^{c_1}$ from the distribution which results from measuring $\rho$ in the basis determined by $h$.
	\end{claim}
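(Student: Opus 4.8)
\emph{Proof idea.} The plan is to reduce Claim~\ref{claim:57-new} to Claim~\ref{claim:57} by a rounding argument. Write $(\sigma,U_0,U)$ for the representation of the given prover $P$ from Section~\ref{sec:meas-protocol}. This prover fails to be ``trivial'' in the sense of Definition~\ref{def:trivial-prover} for two reasons: its test-round acceptance is $1-\eps$ rather than negligibly close to $1$, and its Hadamard-round unitary $U$ need not commute with a computational-basis measurement of the committed qubits. The first step, which is the bulk of the work, is to construct from $P$ a genuinely trivial prover $\tilde P$ whose initial state is produced from $\sigma$ by a polynomial-size quantum circuit and whose decoded-outcome distribution on a Hadamard round is $\poly(\eps)$-close to that of $P$ on average over the verifier's basis string $h$. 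The commuting property is arranged by replacing $U$ with a nearby unitary that records the committed register into private ancillas and leaves the committed qubits untouched through the Hadamard branch; using the (LWE-based) computational binding of the commitments in~\cite{measurement} together with the test-round guarantee, one checks that this replacement changes the decoded Hadamard-round outcomes only up to computational indistinguishability and a $\poly(\eps)$ statistical term. The gap between $1-\eps$ and negligibly-close-to-$1$ is absorbed by a gentle-measurement estimate applied to the test-round acceptance projector: for a $(1-\eps)$-accepted prover the relevant state lies within statistical distance $\poly(\eps)$ of one on which the commitments open consistently, and this slack propagates through the verifier's Hadamard-round decoding. That the final bound is an \emph{average} over $h$ reflects that the hypothesis ``$P$ is accepted with probability $1-\eps$'' is itself an average over the key distribution, which depends on $h$.

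Given $\tilde P$, the second step is immediate: Claim~\ref{claim:57} applied to $\tilde P$ yields an $n$-qubit state $\rho$, creatable from $\tilde P$'s initial state --- hence from $\sigma$ --- by a polynomial-size quantum circuit, such that for every $h$ the distribution of decoded measurement outcomes obtained in a Hadamard round with respect to $\tilde P$ and basis choice $h$ is computationally indistinguishable from the distribution obtained by measuring $\rho$ in the bases determined by $h$.

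The third step assembles these by a triangle inequality. Writing $\mu^P_h$ and $\mu^{\tilde P}_h$ for the decoded-outcome distributions of $P$ and $\tilde P$ and $\mu^\rho_h$ for the distribution of measuring $\rho$ in the bases determined by $h$, Step~1 gives that $\mu^P_h$ is within statistical distance $C_1\eps^{c_1}$ of $\mu^{\tilde P}_h$ on average over $h$, while Step~2 gives that $\mu^{\tilde P}_h$ is computationally indistinguishable from $\mu^\rho_h$ for every $h$; composing the two, $\mu^P_h$ is, on average over $h$, computationally indistinguishable from a distribution that is within statistical distance $C_1\eps^{c_1}$ of $\mu^\rho_h$. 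The constants $c_1,C_1$ are those produced by the rounding estimate in Step~1.

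The main obstacle is Step~1: extracting from the soundness analysis of~\cite{measurement} a genuinely \emph{quantitative} rounding statement, with an explicit polynomial dependence on the test-round error $\eps$, in place of the qualitative ``negligibly close to $1$'' hypothesis used for Claim~\ref{claim:57}. Equivalently, one revisits the proof of Claim~\ref{claim:57} and tracks throughout the dependence on the test-round acceptance probability rather than assuming it negligibly close to $1$. This is the only place where the cryptographic assumption and a gentle-measurement bound enter beyond their use inside Claim~\ref{claim:57}; the remaining steps are routine.
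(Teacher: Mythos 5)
Your overall strategy---round the $(1-\eps)$-successful prover to the trivial case and invoke Claim~\ref{claim:57}---is the same as the paper's, and your gentle-measurement estimate is essentially the quantitative ingredient the paper imports from the proof of Claim 7.2 of~\cite{mahadev2018classical} (the prover's internal state at the start of step 4 is within trace distance $\poly(\eps)$ of one that would be accepted in a test round). However, there is a genuine gap at the point your argument leans on hardest: the parenthetical requirement that $\rho$ be creatable from the prover's initial state by a polynomial-size circuit. You assert that the trivial prover $\tilde P$ has an initial state preparable from $\sigma$ by a polynomial-size circuit and then inherit efficiency of $\rho$ from Claim~\ref{claim:57} applied to $\tilde P$; but your gentle-measurement step only shows that a suitable ``accepted-with-certainty'' state \emph{exists} nearby, not that it (or $\tilde P$) can be implemented efficiently from a single copy of $\sigma$. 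The known way to restore condition (i) of Definition~\ref{def:trivial-prover} (re-running the prover until it produces a commitment that would pass the test round) consumes polynomially many copies of the prover's state/witness, and this is exactly why the paper does \emph{not} construct a trivial prover at all: it applies only the modification for condition (ii) (the random-$\sigma_Z$ twirl of $U$ from Claim 7.3 of~\cite{mahadev2018classical}) directly to the given $(1-\eps)$-prover, uses the $\poly(\eps)$ trace-distance bound to argue this perturbs the decoded outcomes by at most $\poly(\eps)$ relative to the trivial-prover analysis, and lets the extraction circuit act on the actual prover, so efficiency of preparing $\rho$ is immediate. Without an argument of this kind, your Steps 2--3 do not establish the efficiency clause of the claim, which is needed downstream for the extractor in Theorem~\ref{thm:meas}.

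A second, smaller issue is your mechanism for condition (ii): replacing $U$ by a ``nearby'' unitary that records the committed qubits into private ancillas and leaves them untouched, justified by appeal to computational binding. That such a replacement exists and alters the decoded Hadamard-round outcomes only up to computational indistinguishability is essentially the content of Claim 7.3 of~\cite{mahadev2018classical}, which is proved for the specific $\sigma_Z$-twirling construction, not for the modification you describe; if you deviate from that construction you must prove your own version, and as written this step is asserted rather than argued. Your explanation of why ``for all $h$'' must be weakened to ``on average over $h$'' is fine and matches the paper's.
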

	
	We sketch a proof of this claim by relying on the proof of Claim~\ref{claim:57}  given in~\cite{mahadev2018classical}. In the course of the proof we explain the terminology used in the statement of Claim~\ref{claim:57}. 
	
\begin{proof}[Proof of Claim~\ref{claim:57-new}]
As shown in~\cite{mahadev2018classical}, it is straightforward to construct a trivial prover from an arbitrary prover that succeeds with probability sufficiently close to $1$ in the ``test'' part of the protocol without affecting the prover's answers in the ``Hadamard'' part by too much (assuming the prover has a high overall probability of success in the first place). First, we can obtain (i) in Definition \ref{def:trivial-prover} because the prover itself can check if the answer it would provide in a test round would be accepted,\footnote{This is not the case for answers in the Hadamard round, that require the secret key $sk$ to be verified.} so a modified prover, if it so desires, can repeatedly simulate the original prover until it obtains a commitment string $y$ such that the resulting post-measurement state would lead to acceptance in a test round. The argument to obtain (ii) is slightly more complicated, and appears as the proof of~\cite[Claim 7.3]{mahadev2018classical}. There it is argued that, for the case of a prover that satisfies (i), replacing $U$ by a twirling (conjugation) of it by random Pauli $\sigma_Z$ operators on the first $n$ qubits does not affect the decoded outcomes obtained by the verifier in a way that would be efficiently noticeable. 

In our context we would like an extractor with black-box access to the ITM representing $P$ (as described in Section~\ref{sec:qma-agree-protocol})
 to be able to modify the ITM for $P$ into an ITM for a trivial prover, so that we can then apply Claim~\ref{claim:57}. Unfortunately, the modification required for (i) that we have described requires polynomially many copies of a ground state of $H$, to which the prover may not have access (so that the extractor may not be able to convert a given prover in a black-box manner to a trivial prover). 
Fortunately we note that the requirement that the prover succeeds with probability negligibly close to $1$ in the test round is not necessary to proceed with the proof; instead, it is sufficient to require that the prover succeeds with probability sufficiently close to $1$, where ``sufficiently'' can be an inverse polynomial in the instance size. To see why this is the case we observe that performing the modification for (ii) on a prover that only succeeds with probability $1-\eps$ in the test round, as opposed to negligibly close to $1$, will only affect the verifier's decoded measurement outcomes up to statistical distance $\poly(\eps)$. This is because, as shown in the proof of~\cite[Claim 7.2]{mahadev2018classical} the transformation from a general prover to one that satisfies (i) only affects the internal state of the prover at the start of step 4 of the protocol described in Section~\ref{sec:meas-protocol} to within trace distance at most $\poly(\eps)$. Since the twirling operation performed for (ii) has a negligible effect on the verifier's decoded measurement outcomes for the case of a prover satisfying (i), the same operation cannot lead to outcomes that are distinguishable from the original outcomes with advantage more than $\poly(\eps)$ when applied to a prover that succeeds with probability only $1 - \eps$, instead of negligibly close to $1$. 

Removing the assumption that the prover is trivial also requires us to weaken the statement ``for all $h$'' in Claim~\ref{claim:57} to ``on average over $h$'' here, because it is possible that a non-trivial prover fails with probability $1$ on a small fraction of the possible choices for $h$; however, this is unimportant as the verifier can use a distribution on bases $h$ of its choice at the first step of the protocol (and in particular the distribution can depend on the structure of the local Hamiltonian instance), and the ``average'' referred to in the claim is then taken with respect to the verifier's choice of distribution.
\end{proof}
	
	Implicit in the proof of Claim~\ref{claim:57-new} is the description of the procedure that is able to generate the claimed state $\rho$ by interacting with an ITM implementation of any prover that satisfies the assumptions of the claim. For concreteness we now make this description explicit. In doing so we follow~\cite[Protocol 5.8]{mahadev2018classical}, adapting that procedure to the case of a prover that is not necessarily ``trivial''; the only modification is the twirling performed at step 3.
	\begin{enumerate}
	\item Sample $(pk,sk)$ as the honest verifier $V$ would for the case when $h$ is the all-$1$ string, $h=1^n$;
	\item Provide $pk$ to $P$ and execute its first unitary $U_0$ on the registers $\reg{K}$ (containing $pk$) and $\reg{S}$ (the prover's private register). Measure the first $n\log|\mathcal{Y}|$ qubits (register $\reg{Y}$) in the computational basis to obtain  commitment strings $(y_1,\ldots,y_n)$; 
	\item Apply a random $\sigma_Z$ operator on each of the first $n$ qubits of register $\reg{M}$. Apply the prover's second unitary $U$ (recall that this corresponds to the case $c=1$) on registers $\reg{MS}$. Apply the same $\sigma_Z$ operators on $\reg{M}$. Do not measure the first $n$ qubits. (These qubits are referred to as \emph{committed qubits}.) Measure the following $nw$ qubits in the Hadamard basis to obtain a string $d$. 
	\item For $i\in\{1,\ldots,n\}$, apply $\sigma_Z^{d_i\cdot (x_{0,i} \oplus x_{1,i})}$ to the $i$-th committed qubit. Here $x_{0,i}$ and $x_{1,i}$ are $w$-bit strings that can be inferred from $y_i$ using $sk$, the inner product is taken modulo $2$, and $\sigma_Z^0=\Id$. 
	\item Return the state $\rho$ of the $n$ committed qubits. 
	\end{enumerate}
	Note that given black-box access to an ITM for $P$, the extractor can be implemented efficiently, as it only performs each of the prover's unitaries $U_0$ and $U$ once, as well as a small number of single-qubit gates on the unmeasured message register $\reg{M}$ and some classical polynomial-time pre- and post-processing.
	
	\subsection{Arguments of Quantum Knowledge for QMA relations}

We show the main result of this section. 
	
	\begin{theorem}\label{thm:meas}
	Let $(Q,\alpha,\beta)$ be a $\QMA$ relation that satisfies properties~\ref{enu:qma-a1} and~\ref{enu:qma-a2} described in Section~\ref{sec:qma-scenario}. There exists a polynomially bounded $\ell=\ell(n)$ such that the following holds. 
	Under the Learning with Errors assumption the protocol presented in Section~\ref{sec:qma-agree-protocol} is secure with completeness $c$ (for the class of input generation algorithms $\mI_Q^{(\ell)}$), up to knowledge error $\kappa$ and with extraction distance $\delta$ for the Agree-and-Prove scenario $(\mF_Q,\mC_Q,\mR_Q)$, where: $c$ is negligibly close to $1$; $\kappa$ is bounded away from $1$ by an inverse polynomial; $\delta = \poly(1-p)\poly(n)$ (for any prover success probability $p > \kappa$). 
		\end{theorem}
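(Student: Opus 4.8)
The plan is to prove the two constituents of ``security'' separately: completeness for the input-generation class $\mI_Q^{(\ell)}$, and the proof-of-knowledge property realized by the extractor already constructed in Section~\ref{sec:qma-extractor}. \emph{Completeness.} Fix $I\in\mI_Q^{(\ell)}$; it outputs $x$ together with $\ell$ copies $\sigma_x^{\otimes\ell}$ of a state with $(x,\sigma_x)\in R_{Q,\alpha}$. In the agree phase the honest $P_1$ sends $z=x$, so $V_1$ does not abort and $x_v=x_p=x$; in particular the agreement condition of the scenario $(\mF_Q,\mC_Q,\mR_Q)$ holds. In the prove phase the honest prover uses $\sigma_x^{\otimes\ell}$ to prepare, by a polynomial-size circuit, a state $\xi$ with $\Tr(H_x\xi)$ negligibly close to $-1$; by assumption~\ref{enu:qma-a2} in the amplified form of the circuit-to-Hamiltonian construction (\cite[Protocol 8.3]{mahadev2018classical}) there is a polynomially bounded $\ell=\ell(n)$ for which such a $\xi$ exists. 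Running the verification protocol of Section~\ref{sec:meas-protocol} on input $H_x$ with $\xi$ in the role of the ground state, completeness of that protocol gives that $V_2$ accepts with probability negligibly close to $1$. Hence $c$ is negligibly close to $1$ (and, if the relation is pre-processed so that a single witness already yields a near-ground state of $H_x$, the same holds with $\ell=1$).

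\emph{From a successful prover to a low-energy state.} For the proof-of-knowledge property we invoke the efficient black-box extractor of Section~\ref{sec:qma-extractor}: on the statement $x$ fixed in the agree phase it computes $H_x$ (efficiently constructible by~\ref{enu:qma-a2}), runs the five-step procedure against the ITM implementing the (arbitrary QPT) prover $P_2^*$, and outputs the resulting state $\rho$. Suppose $V_2$ accepts $P_2^*$ with probability $p>\kappa$. Since the challenge bit $c$ is (up to relabeling) uniform, $P_2^*$ is accepted conditioned on a test round with probability $1-\eps$ for some $\eps\le 2(1-p)$, and conditioned on a Hadamard round with probability at least $2p-1$. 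Apply Claim~\ref{claim:57-new} to this $\rho$: on average over the verifier's distribution on $h$, the decoded measurement outcomes are computationally indistinguishable from a distribution within statistical distance $C_1\eps^{c_1}$ of the outcomes of measuring $\rho$ in the basis $h$. The verifier's Hadamard-round decision is an efficiently computable function of the decoded outcomes and $x$; combining Claim~\ref{claim:57-new} with the soundness analysis of the verification protocol of \cite{measurement} (an energy test for $H_x$, \cite[Protocol 8.1]{mahadev2018classical}, whose acceptance probability on outcomes from a fixed state is affine in its $H_x$-energy with slope $-1/\poly(n)$ and maximized at the ground energy $-1$) --- and using that $P_2^*$ is QPT, so computational indistinguishability costs only a negligible amount, together with $\eps\le 2(1-p)$ to absorb the $C_1\eps^{c_1}$ loss --- the bound ``$P_2^*$ is accepted in the Hadamard round with probability $\ge 2p-1$'' yields
\[
\Tr(H_x\rho)\;\le\;-1+\delta',\qquad \delta'\,=\,\poly(1-p)\poly(n).
\]

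\emph{From a low-energy state to a witness, and the parameters.} The second part of assumption~\ref{enu:qma-a2} now gives $\Pr(Q_{|x|}(x,\rho)=1)\ge 1-r(|x|)\,q(\delta')$; that is, the extracted $\rho$ is accepted by the proof relation $\mR_Q(1^\lambda,x,\cdot)$ with probability at least $1-\delta$, where $\delta:=r(|x|)\,q(\delta')=\poly(1-p)\poly(n)$, which is the claimed extraction distance. For the knowledge error one may take $\kappa$ to be any quantity bounded away from $1$ by an inverse polynomial: this is exactly the range of $p$ for which the bound on $\delta'$ (hence on $\delta$) is non-vacuous, and it is consistent with soundness of the verification protocol, since by the third part of~\ref{enu:qma-a2} a NO instance $x\in N_{Q,\beta}$ has smallest eigenvalue of $H_x$ larger than $-1+1/s(|x|)$, which through the same energy relation caps any prover's acceptance probability at $1-1/\poly(|x|)$. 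All of the above is under the LWE assumption, which is what the computational indistinguishability in Claim~\ref{claim:57-new} rests on and which confines soundness to QPT provers.

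\emph{Expected main obstacle.} The substantive content is packaged in Claim~\ref{claim:57-new} and in the extractor of Section~\ref{sec:qma-extractor}, so the remaining work is bookkeeping: (i) confirming that a polynomial number $\ell$ of witness copies suffices for the honest prover to prepare a near-ground state of the amplified Kitaev Hamiltonian while remaining efficient; (ii) checking that the verifier's Hadamard-round acceptance probability is genuinely affine in $\Tr(H_x\rho)$, with $1/\poly(n)$ slope, for the \emph{single} state $\rho$ produced by the extractor --- this is where it is essential that Claim~\ref{claim:57-new} delivers one $\rho$ that simultaneously explains the outcome statistics for all basis choices $h$ on average; and (iii) composing the chain of polynomial losses --- the factor $2$ from conditioning on the challenge round, the $\eps\mapsto C_1\eps^{c_1}$ loss, the $1/\poly(n)$ slope of the energy test, and the polynomials $q,r,s$ of~\ref{enu:qma-a2} --- into the stated $\delta=\poly(1-p)\poly(n)$ with $\kappa=1-1/\poly(n)$.
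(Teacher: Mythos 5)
Your proposal is correct and follows essentially the same route as the paper's proof: completeness from the verification protocol with polynomially many witness copies, and soundness by running the extractor of Section~\ref{sec:qma-extractor}, invoking Claim~\ref{claim:57-new}, translating the Hadamard-round acceptance probability into a bound on $\Tr(H_x\rho)$, and then applying the second property of~\ref{enu:qma-a2}. Your extra remarks (the explicit $\eps\le 2(1-p)$ bookkeeping and the consistency check against NO instances via the third part of~\ref{enu:qma-a2}) go slightly beyond the paper's write-up but do not change the argument.
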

	
	\begin{proof}
	The completeness requirement follows immediately from completeness of the protocol from Section~\ref{sec:meas-protocol}, as shown in~\cite{measurement}, and provided $\ell$ is a large enough polynomial in $n$. For soundness, the construction of the extractor is described at the end of Section~\ref{sec:qma-extractor}. Fix an $(x,\sigma) \in R_{Q,\alpha}$ and let $n=|x|$. 
	Suppose that $P^*$ is a prover that succeeds with probability $p = 1-\eps > \kappa$ in the associated protocol, where $\kappa$ is as in the theorem statement. In particular, the prover is accepted in a test round with probability at least $1-2\eps$. By Claim~\ref{claim:57-new}, measurement outcomes obtained on the extracted state $\rho'$ are computationally indistinguishable from a distribution on $n$-bit strings that is within statistical distance at most $\poly(\eps)$ from those obtained by the verifier at step 5 of the verification protocol. We claim that the extractor returns a state $\rho'$ such that 
	\begin{equation}\label{eq:qn-1}
	\Pr(Q_n(x,\rho')=1) \geq 1-\poly(\eps)\cdot\poly(|x|)\;.
	\end{equation}
	The reason is that, as shown in~\cite[Section 8]{mahadev2018classical} (based on~\cite{fitzsimons2018post}), the verifier's decision at step 6 of the verification protocol involves an efficient computation on the decoded measurement outcomes. Moreover, whenever the measurement outcomes are truly obtained from measurements on a state $\rho'$, then the verifier's acceptance probability at step 6. is $\frac{1}{2}(1-\Tr(H\rho'))$. Therefore,	the aforementioned computational indistinguishability implies that the quantity $1-\Tr(H\rho')$ is within $\delta=\poly(\eps)$ of the verifier's acceptance probability in a Hadamard round. 
By the second property assumed in the theorem statement, this implies the claimed bound~\eqref{eq:qn-1}. Assuming $\kappa$ chosen to be sufficiently close to $1$, $\eps$ is small enough that the right-hand side of~\eqref{eq:qn-1} is positive. 
%	\tnote{I suggest cutting the next part if we formulate PoQK for QMA relations and drop the requirement of being close to a specific witness:}
%	For the case where $x$ is a YES instance, the gap assumption made in the second property of the QMA relation that is stated in the theorem then immediately implies that $\|\rho' - \sigma\|_1 \leq \delta$, for some polynomial function $\delta$ of $\eps$. \znote{what if the ground energy level is degenerate?} 
	\end{proof}
	
	\subsection{Sequential amplification}
		
	The protocol presented in the previous sections (Theorem~\ref{thm:meas}) has a knowledge error that is inverse polynomially close to $1$. In order to amplify this error we consider a modified protocol that essentially amounts to sequential repetition of the one we have presented. Precisely, given a function $t:\N\times\N \to \N$ that parametrizes the protocol, $V_2$ and $P_2$ engage in $t(\lambda,n)$ iterations of the verification protocol from Section~\ref{sec:meas-protocol}, where $\lambda$ is the security parameter provided as input and $n=|x|$ is the length of the statement. The only difference is that instead of making a decision at the end of any given iteration of the protocol, the verifier collects the decisions that it would have made. At the end of the $t$ iterations, it accepts if and only if in the original protocol, it would have accepted in each of the iterations.
	For this protocol, we show the following. 
	
	\begin{corollary}\label{cor:meas-rep}
Let $(Q,\alpha,\beta)$ be a $\QMA$ relation that satisfies properties~\ref{enu:qma-a1} and~\ref{enu:qma-a2} described in Section~\ref{sec:qma-scenario}. Let $\kappa,\eta:\N\to [0,1]$. Then there exists functions $\ell(n)=\poly(n)$ and $t(\lambda,n) = \log(1/\kappa(\lambda))\cdot \poly(n)/\poly(\eta)$ such that the following holds. 
	Under the Learning with Errors assumption the $t$-fold sequential repetition of the protocol for the Agree-and-Prove scenario $(\mF_Q,\mC_Q,\mR_Q)$  is secure with completeness $c$ (for the class of input generation algorithms $\mI_Q^{(\ell)}$), up to knowledge error $\kappa$ and with extraction distance $\delta$, where: $c$ is negligibly close to $1$, $\kappa$ is as specified, and $\delta = 1-p(1-\eta)^2$ where $p>\kappa$ is the prover success probability.
		\end{corollary}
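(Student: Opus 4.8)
The plan is to reduce soundness of the $t$-fold protocol to Theorem~\ref{thm:meas} by running the single-shot extractor on one carefully chosen iteration. Completeness and the input-generation class are easy: the honest prover partitions its $\ell$ witness copies into $t$ blocks of $\ell_0$ copies, where $\ell_0=\poly(n)$ is the polynomial supplied by Theorem~\ref{thm:meas}, and uses the $i$-th block (against fresh verifier keys) to run the $i$-th iteration of the verification protocol; the iterations are then independent, each accepts with probability $c_0$ negligibly close to $1$, so all $t$ accept with probability $c_0^t\geq 1-t(1-c_0)$, which is again negligibly close to $1$ because $t=\log(1/\kappa)\poly(n)/\poly(\eta)$ is polynomially bounded. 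Correspondingly $\ell=t\,\ell_0=\poly(n)$. The agree phase is unchanged, so the agreement property (and the reliance on LWE) carries over verbatim from the single-shot protocol.

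For soundness, fix $(x,\sigma)\in R_{Q,\alpha}$ with $n=|x|$ and a prover $P^*$ for the $t$-fold protocol that is accepted with probability $p>\kappa$. For $i=1,\dots,t$ let $\bar p_i$ be the probability that the verifier accepts in iteration $i$ conditioned on its having accepted in iterations $1,\dots,i-1$, so that $p=\prod_{i=1}^{t}\bar p_i$ by the chain rule. Choose $\theta=1-\zeta_0$, where $\zeta_0>0$ is a suitable power of $\eta/\poly(n)$, small enough that the single-shot extraction distance $\poly(1-\bar p)\poly(n)$ of Theorem~\ref{thm:meas} is at most $1-(1-\eta)^2$ whenever $\bar p\geq\theta$. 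Because $t$ is taken large enough that $\theta^{t}\leq\kappa$, we have $\prod_i\bar p_i=p>\theta^{t}$, so not all of the $\bar p_i$ are below $\theta$; fix an index $i^*$ with $\bar p_{i^*}\geq\theta$. Unwinding the inequality $\theta^{t}\leq\kappa$ is exactly what forces $t=\log(1/\kappa)\poly(n)/\poly(\eta)$.

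The extractor (which depends on $P^*$, and in particular receives the index $i^*$ as advice) plays the honest verifier against $P^*$ for iterations $1,\dots,i^*-1$; if any of these rejects it outputs a fixed dummy state, and otherwise it runs the single-shot extractor of Section~\ref{sec:qma-extractor} using $P^*$'s iteration-$i^*$ interface and outputs the resulting $n$-qubit state $\rho'$. This is efficient given black-box access to $P^*$. The probability that iterations $1,\dots,i^*-1$ all accept equals $\prod_{i<i^*}\bar p_i=p/\prod_{i\geq i^*}\bar p_i\geq p$. Conditioned on this event, $P^*$'s state at the start of iteration $i^*$ together with its iteration-$i^*$ unitaries constitutes a legitimate single-shot prover — after absorbing the classical transcript of the earlier iterations into the prover's private register — that is accepted with probability exactly $\bar p_{i^*}\geq\theta$; since the guarantee of Theorem~\ref{thm:meas} depends only on the prover's acceptance probability, it applies here as a black box and gives $\Pr(Q_n(x,\rho')=1)\geq 1-\poly(1-\bar p_{i^*})\poly(n)\geq(1-\eta)^2$. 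Multiplying by the lower bound $\geq p$ on the probability that the prefix accepts yields extraction distance $\delta=1-p(1-\eta)^2$.

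The main obstacle is this conditioning/combinatorial step. Two points require care: first, one must be able to apply Theorem~\ref{thm:meas} to the prover obtained by conditioning $P^*$ on a possibly low-probability acceptance event — legitimate here precisely because that theorem speaks only of the prover's acceptance probability, not of how its initial state is prepared — and second, the good index $i^*$ is not in general efficiently computable from $P^*$, so the extractor is taken to be non-uniform in $i^*$, as is standard for arguments of knowledge. Everything else — the $\ell=\poly(n)$ bound, the $t=\log(1/\kappa)\poly(n)/\poly(\eta)$ bound, and the near-$1$ completeness — follows by bookkeeping from Theorem~\ref{thm:meas} together with a union bound over the $t$ iterations.
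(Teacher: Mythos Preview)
Your argument is correct and follows the same overall strategy as the paper: simulate the honest verifier for the first $i-1$ iterations and then invoke the single-shot extractor of Theorem~\ref{thm:meas} on iteration~$i$. The one substantive difference is how the iteration is chosen. The paper's extractor picks $i\in\{1,\dots,t\}$ uniformly at random and then uses a counting (``martingale'') argument to show that a $(1-\eta)$ fraction of rounds are ``good'' (conditional success at least the threshold); the factor $(1-\eta)^2$ in the final bound then arises as one $(1-\eta)$ from hitting a good round and one from the single-shot extraction guarantee. You instead observe that $p>\theta^t$ forces the existence of \emph{one} good round $i^*$, hand that index to the extractor as non-uniform advice, and absorb both $(1-\eta)$ factors into the single-shot guarantee by tightening the threshold $\theta$. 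This is simpler combinatorially, but it buys that simplicity at the cost of a non-uniform extractor: with only a single good round guaranteed, a uniform random choice of $i$ would degrade the bound by a factor~$1/t$. If the security definition in play permits the extractor to depend on the prover (as you note is standard for arguments of knowledge), your route is fine; the paper's route avoids the issue entirely by keeping the extractor uniform.
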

		
		\begin{proof}
Completeness of the protocol follows from the fact that completeness of the base $t=1$ protocol is negligibly close to $1$, as stated in Theorem~\ref{thm:meas}. For soundness, let $P^*$ be a prover that succeeds with probability $p>\kappa$ in the sequentially repeated protocol. By a standard martingale argument, provided $t$ has been chosen large enough, a fraction at least $(1-\eta)$ of rounds $i^*\in\{1,\ldots,t\}$ are such that, conditioned on succeeding in previous rounds, $P^*$ succeeds in the $i^*$-th round with probability $1-\eps$, where $\eps$ is such that the extraction distance function  $\delta$ from Theorem~\ref{thm:meas} satisfies $\delta(1-\eps) \leq \eta$. Call any such round ``good''. We define the extractor as follows. $E$ selects an $i \in \{1,\ldots,t\}$ uniformly at random. It executes the interaction of $P^*$ and $V$ up to the $(i-1)$-st round included. Then, it attempts to extract a witness by using the strategy of the single-round extractor from Theorem~\ref{thm:meas}. Conditioned on $i$ being a good round and on $P^*$ being accepted by $V$ in the first $(i-1)$ rounds, the extractor returns a state that passes verification with probability $1-\delta(1-\eps)$. 
		\end{proof}

%\appendix

%\subfile{definition_old}

\newcommand{\etalchar}[1]{$^{#1}$}


\begin{thebibliography}{BDSMP91}

\bibitem[ABF{\etalchar{+}}16]{alagic2016computational}
Gorjan Alagic, Anne Broadbent, Bill Fefferman, Tommaso Gagliardoni, Christian
  Schaffner, and Michael~St Jules.
\newblock Computational security of quantum encryption.
\newblock In {\em International Conference on Information Theoretic Security},
  pages 47--71. Springer, 2016.

\bibitem[AC12]{aaronson2012quantum}
Scott Aaronson and Paul Christiano.
\newblock Quantum money from hidden subspaces.
\newblock In {\em Proceedings of the forty-fourth annual ACM Symposium on
  Theory of Computing}, 2012.

\bibitem[ACGH19]{alagic2019non}
Gorjan Alagic, Andrew~M. Childs, Alex~B. Grilo, and Shih-Han Hung.
\newblock Non-interactive classical verification of quantum computation, 2019.

\bibitem[AFG{\etalchar{+}}12]{aaronson2012review}
Scott Aaronson, Edward Farhi, David Gosset, Avinatan Hassidim, Jonathan Kelner,
  and Andrew Lutomirski.
\newblock Quantum money.
\newblock {\em Commun. ACM}, 55(8):84–92, August 2012.

\bibitem[BCC{\etalchar{+}}20]{badertscher2020security}
Christian Badertscher, Alexandru Cojocaru, L{\'e}o Colisson, Elham Kashefi,
  Dominik Leichtle, Atul Mantri, and Petros Wallden.
\newblock Security limitations of classical-client delegated quantum computing.
\newblock {\em arXiv preprint arXiv:2007.01668}, 2020.

\bibitem[BDS16]{ben2016quantum}
Shalev Ben-David and Or~Sattath.
\newblock Quantum tokens for digital signatures.
\newblock {\em arXiv preprint arXiv:1609.09047}, 2016.

\bibitem[BDSMP91]{blum1991nizk}
Manuel Blum, Alfredo De~Santis, Silvio Micali, and Giuseppe Persiano.
\newblock Noninteractive zero-knowledge.
\newblock {\em SIAM J. Comput.}, 20(6):1084--1118, December 1991.

\bibitem[BG92]{bellare1992defining}
Mihir Bellare and Oded Goldreich.
\newblock On defining proofs of knowledge.
\newblock In {\em Annual International Cryptology Conference}, pages 390--420.
  Springer, 1992.

\bibitem[BG19]{broadbent2019zero}
Anne Broadbent and Alex~B Grilo.
\newblock Zero-knowledge for {QMA} from locally simulatable proofs.
\newblock {\em arXiv preprint arXiv:1911.07782}, 2019.

\bibitem[BJM19]{badertscher2019agree}
Christian Badertscher, Daniel Jost, and Ueli Maurer.
\newblock Agree-and-prove: Generalized proofs of knowledge and applications.
\newblock {\em IACR Cryptol. ePrint Arch.}, 2019:662, 2019.

\bibitem[BJSW16]{qma}
Anne Broadbent, Zhengfeng Ji, Fang Song, and John Watrous.
\newblock Zero-knowledge proof systems for {QMA}.
\newblock In {\em Foundations of Computer Science (FOCS), 2016 IEEE 57th Annual
  Symposium on}, pages 31--40. IEEE, 2016.

\bibitem[CCY19]{chia2019classical}
Nai-Hui Chia, Kai-Min Chung, and Takashi Yamakawa.
\newblock Classical verification of quantum computations with efficient
  verifier.
\newblock {\em arXiv preprint arXiv:1912.00990}, 2019.

\bibitem[CL06]{chase2006signatures}
Melissa Chase and Anna Lysyanskaya.
\newblock On signatures of knowledge.
\newblock In Cynthia Dwork, editor, {\em Advances in Cryptology - CRYPTO 2006},
  pages 78--96, Berlin, Heidelberg, 2006. Springer Berlin Heidelberg.

\bibitem[CS17]{coladangelo2017robust}
Andrea Coladangelo and Jalex Stark.
\newblock Robust self-testing for linear constraint system games.
\newblock {\em arXiv preprint arXiv:1709.09267}, 2017.

\bibitem[CVZ19]{coladangelo2019non}
Andrea Coladangelo, Thomas Vidick, and Tina Zhang.
\newblock Non-interactive zero-knowledge arguments for {QMA}, with
  preprocessing.
\newblock {\em arXiv preprint arXiv:1911.07546}, 2019.

\bibitem[EV93]{elitzur1993quantum}
Avshalom~C Elitzur and Lev Vaidman.
\newblock Quantum mechanical interaction-free measurements.
\newblock {\em Foundations of Physics}, 23(7):987--997, 1993.

\bibitem[FFS88]{feige1988zero}
Uriel Feige, Amos Fiat, and Adi Shamir.
\newblock Zero-knowledge proofs of identity.
\newblock {\em Journal of cryptology}, 1(2):77--94, 1988.

\bibitem[FHM18]{fitzsimons2018post}
Joseph~F Fitzsimons, Michal Hajdu{\v{s}}ek, and Tomoyuki Morimae.
\newblock Post hoc verification of quantum computation.
\newblock {\em Physical review letters}, 120(4):040501, 2018.

\bibitem[Gav12]{gavinsky2012quantum}
Dmitry Gavinsky.
\newblock Quantum money with classical verification.
\newblock In {\em 2012 IEEE 27th Conference on Computational Complexity}, pages
  42--52. IEEE, 2012.

\bibitem[GH17]{gowers2017inverse}
William~Timothy Gowers and Omid Hatami.
\newblock Inverse and stability theorems for approximate representations of
  finite groups.
\newblock {\em Sbornik: Mathematics}, 208(12):1784, 2017.

\bibitem[GMR89]{goldwasser1989knowledge}
Shafi Goldwasser, Silvio Micali, and Charles Rackoff.
\newblock The knowledge complexity of interactive proof systems.
\newblock {\em SIAM Journal on Computing}, 18(1):186--208, 1989.

\bibitem[GO94]{goldreich1994imposs}
Oded Goldreich and Yair Oren.
\newblock Definitions and properties of zero-knowledge proof systems.
\newblock {\em J. Cryptol.}, 7(1):1--32, December 1994.

\bibitem[GV19]{gheorghiu2019computationally}
Alexandru Gheorghiu and Thomas Vidick.
\newblock Computationally-secure and composable remote state preparation.
\newblock In {\em 2019 IEEE 60th Annual Symposium on Foundations of Computer
  Science (FOCS)}, pages 1024--1033. IEEE, 2019.

\bibitem[HHJ{\etalchar{+}}17]{haah2017sample}
Jeongwan Haah, Aram~W Harrow, Zhengfeng Ji, Xiaodi Wu, and Nengkun Yu.
\newblock Sample-optimal tomography of quantum states.
\newblock {\em IEEE Transactions on Information Theory}, 63(9):5628--5641,
  2017.

\bibitem[HLM17]{harrow2017sequential}
Aram~W Harrow, Cedric Yen-Yu Lin, and Ashley Montanaro.
\newblock Sequential measurements, disturbance and property testing.
\newblock In {\em Proceedings of the Twenty-Eighth Annual ACM-SIAM Symposium on
  Discrete Algorithms}, pages 1598--1611. SIAM, 2017.

\bibitem[KR03]{kempe20033}
Julia Kempe and Oded Regev.
\newblock 3-local {H}amiltonian is {QMA}-complete.
\newblock {\em Quantum Information and Computation}, 3(3):258--264, 2003.

\bibitem[KSVV02]{kitaev2002classical}
Alexei~Yu Kitaev, Alexander Shen, Mikhail~N Vyalyi, and Mikhail~N Vyalyi.
\newblock {\em Classical and quantum computation}.
\newblock Number~47. American Mathematical Soc., 2002.

\bibitem[Mah18a]{mahadev2017classical}
Urmila Mahadev.
\newblock Classical homomorphic encryption for quantum circuits.
\newblock In {\em 2018 IEEE 59th Annual Symposium on Foundations of Computer
  Science (FOCS)}, pages 332--338. IEEE, 2018.

\bibitem[Mah18b]{measurement}
Urmila Mahadev.
\newblock Classical verification of quantum computations.
\newblock In {\em Foundations of Computer Science (FOCS), 2018 IEEE 59th Annual
  Symposium on}, pages 259--267, Oct 2018.

\bibitem[Mah18c]{mahadev2018classical}
Urmila Mahadev.
\newblock Classical verification of quantum computations.
\newblock {\em arXiv preprint arXiv:1804.01082}, 2018.

\bibitem[MV20]{metger2020self}
Tony Metger and Thomas Vidick.
\newblock Self-testing of a single quantum device under computational
  assumptions.
\newblock {\em arXiv preprint arXiv:2001.09161}, 2020.

\bibitem[MVW12]{molina2012optimal}
Abel Molina, Thomas Vidick, and John Watrous.
\newblock Optimal counterfeiting attacks and generalizations for {W}iesner's
  quantum money.
\newblock In {\em Conference on Quantum Computation, Communication, and
  Cryptography}. Springer, 2012.

\bibitem[MW05]{marriott2005quantum}
Chris Marriott and John Watrous.
\newblock Quantum arthur--merlin games.
\newblock {\em Computational Complexity}, 14(2):122--152, 2005.

\bibitem[NV16]{nv16}
Anand {Natarajan} and Thomas {Vidick}.
\newblock {Robust self-testing of many-qubit states}.
\newblock {\em arXiv e-prints}, page arXiv:1610.03574, Oct 2016.

\bibitem[SJ00]{schnorr2000signed}
Claus Schnorr and Markus Jakobsson.
\newblock Security of signed {E}l{G}amal encryption.
\newblock In {\em International Conference on the Theory and Application of
  Cryptology and Information Security}, volume 1976, pages 73--89, 12 2000.

\bibitem[TW87]{tompa1987random}
Martin Tompa and Heather Woll.
\newblock Random self-reducibility and zero knowledge interactive proofs of
  possession of information.
\newblock In {\em 28th Annual Symposium on Foundations of Computer Science
  (sfcs 1987)}, pages 472--482. IEEE, 1987.

\bibitem[Unr12]{unruh2012quantum}
Dominique Unruh.
\newblock Quantum proofs of knowledge.
\newblock In {\em Annual International Conference on the Theory and
  Applications of Cryptographic Techniques}, pages 135--152. Springer, 2012.

\bibitem[VZ20]{vidick2019classical}
Thomas Vidick and Tina Zhang.
\newblock Classical zero-knowledge arguments for quantum computations.
\newblock {\em Quantum}, 4:266, 2020.

\bibitem[Wat09a]{watrous2009encyclopedia}
John Watrous.
\newblock Encyclopedia of Complexity and System Science, chapter \emph{Quantum
  computational complexity}, 2009.

\bibitem[Wat09b]{watrous2009zero}
John Watrous.
\newblock Zero-knowledge against quantum attacks.
\newblock {\em SIAM Journal on Computing}, 39(1):25--58, 2009.

\bibitem[Wie83]{wiesner1983conjugate}
Stephen Wiesner.
\newblock Conjugate coding.
\newblock {\em ACM SIGACT News}, 15(1):78--88, 1983.

\bibitem[Zha19a]{zhandry2019record}
Mark Zhandry.
\newblock How to record quantum queries, and applications to quantum
  indifferentiability.
\newblock In {\em Annual International Cryptology Conference}, pages 239--268.
  Springer, 2019.

\bibitem[Zha19b]{zhandry2019quantum}
Mark Zhandry.
\newblock Quantum lightning never strikes the same state twice.
\newblock In {\em Annual International Conference on the Theory and
  Applications of Cryptographic Techniques}, pages 408--438. Springer, 2019.

\end{thebibliography}
\end{document}